\newif\iffinal
\newif\ifwithextensions
  \definecolor{lightblue}{rgb}{.8,.95,1}
\newcommand{\mailto}[1]{\href{mailto:#1}{\nolinkurl{#1}}}
\newcommand{\mA}{\mathcal{A}}
\newcommand{\mP}{\mathcal{P}}
\newcommand{\mG}{\mathcal{G}}
\newcommand{\guardset}{\mathcal{G}}
\newcommand{\init}{{\sf init}\xspace}
\newcommand{\templates}{P}
\newcommand{\bfair}[1]{\ensuremath{{#1}\text{-gfair}}}
\newcommand{\lbfair}[1]{\ensuremath{{#1}\text{-lfair}}}
\newcommand{\ufair}{\ensuremath{u\text{-fair}}}
\newcommand{\visInf}[1]{\mathsf{Visited}^\inf_{#1}}
\newcommand{\visFin}[1]{\mathsf{Visited}^\fin_{#1}}
\newcommand{\first}{\mathsf{first}}
\newcommand{\guardstates}{G}
\definecolor{darkgreen}{rgb}{0,0.5,0}
\definecolor{darkblue}{rgb}{0,0,.5}
\definecolor{mygray}{gray}{.3}
\newcommand{\state}{q}
\newcommand{\stateset}{\expandafter\MakeUppercase\expandafter{\state}}
\newcommand{\State}{s}
\newcommand{\Stateset}{\expandafter\MakeUppercase\expandafter{\State}}
\newcommand{\trans}{\ensuremath{\delta}}
\newcommand{\Trans}{\ensuremath{\Delta}}
\newcommand{\appears}{{\sf appears}\xspace}
\newcommand{\last}{{\sf last}\xspace}
\newcommand{\Nat}{\ensuremath{\mathbb{N}}}
\newcommand{\cupdot}{\mathbin{\dot{\cup}}}
\newcommand{\smartpar}[1]{\medskip \noindent {\bf #1}}
\renewcommand{\iff}{\Leftrightarrow}
\newtheorem{cor}{Corollary}
\newcommand{\cutoffsys}{\ensuremath{A {\parallel} B^c}\xspace}
\newcommand{\largesys}{\ensuremath{A {\parallel} B^n}\xspace}
\newcommand{\largesyse}[1]{\ensuremath{A {\parallel} B^{#1}}\xspace}
\newcommand{\fin}{\textit{fin}}
\renewcommand{\inf}{\textit{inf}}
\newcommand{\gray}[1]{}
\newcommand{\gray}[1]{{\color{black!50} #1}}
\newcommand{\li}{\begin{itemize}}
\newcommand{\il}{\end{itemize}}
	 	\tikzstyle{proc} = [rectangle,draw=black,fill=green!20,thick,inner sep=10pt]
	  	\tikzstyle{state} = [circle,draw=black,thick,inner sep=3pt]
	  	\tikzstyle{noproc} = [circle]
	  	\tikzstyle{lbl} = [rectangle,node distance=2cm]
  		\tikzstyle{pre} = [ <-,shorten <=2pt,shorten >=2pt, >=stealth', semithick]
	  	\tikzstyle{post} = [ ->,shorten <=2pt,shorten >=2pt, >=stealth', semithick]
  	\tikzstyle{pstate} = [circle, fill=black,thick, inner sep=2pt, minimum size=2mm]
    \tikzstyle{hiddenstate} = [circle]
  	\tikzstyle{trans-notsched} = [ ->,shorten <=2pt,shorten >=2pt, >=stealth', semithick]
  	\tikzstyle{trans-sched} = [ ->,shorten <=2pt,shorten >=2pt, >=stealth', very thick] 
\newcommand{\U}{\textbf{U}\xspace}
\newcommand{\R}{\textbf{R}\xspace}
\newcommand{\F}{\textbf{F}\xspace}
\newcommand{\G}{\textbf{G}\xspace}
\newcommand{\X}{\textbf{X}\xspace}
\newcommand{\p}{\textbf{p}\xspace}
\newcommand{\buchi}[1]{\ensuremath{A}_{#1}}
\newcommand\LTLmX{\ensuremath{\mbox{\text{LTL}}{\setminus}\X}\xspace}
\newcommand\LTL{\ensuremath{\mbox{\text{LTL}}}\xspace}
\newcommand{\prompt}{\ensuremath{\mbox{\text{Prompt-LTL}}}\xspace}
\newcommand{\Prompt}{\prompt}
\newcommand{\promptmX}{\ensuremath{\prompt{\setminus}\X}\xspace}
\newcommand{\PromptmX}{\promptmX}
\newcommand{\qt}{Q_{\scaleto{T}{3pt}}}
\newcommand{\initt}{I_{\scaleto{T}{3pt}}}
\newcommand{\deltat}[1]{\delta_{\scaleto{#1}{4pt}}}
\newcommand{\sigt}{\Sigma_{\scaleto{T}{3pt}}}
\newcommand{\gbmodels}{\models_{gb}}
\newcommand{\ngbmodels}{\not \models_{gb}}
\newcommand{\lbmodels}{\models_{lb}}
\newcommand{\nlbmodels}{\not \models_{lb}}
\newcommand{\fairSys}[2]{\ensuremath{A {\parallel_{\scaleto{#1}{3pt}}} B^{#2}}\xspace}
\title{Promptness and Bounded Fairness in Concurrent and Parameterized Systems \thanks{Partially funded by grant EP/S032207/1 from the Engineering and Physical Sciences Research Council (EPSRC).} 
}
\author{Swen Jacobs\inst{1} \and Mouhammad Sakr\inst{1,2} \and Martin Zimmermann\inst{3}}
\institute{CISPA Helmholtz Center for Information Security, Saarbr\"ucken, Germany \and
Saarland University, Saarbr\"ucken, Germany \and
University of Liverpool, Liverpool, United Kingdom}
\date{}
\begin{document}

\maketitle

\begin{abstract}
We investigate the satisfaction of specifications in Prompt Linear Temporal 
Logic (\prompt) by concurrent systems. Prompt-LTL is an extension of LTL that 
allows to specify parametric bounds on the satisfaction of eventualities, 
thus adding a quantitative aspect to the specification language. We 
establish a connection between bounded fairness, bounded stutter equivalence, 
and the satisfaction of \promptmX formulas. 
Based on this connection, we prove the first cutoff results for different 
classes of systems with a parametric number of components and quantitative 
specifications, thereby identifying previously unknown decidable fragments of 
the parameterized model checking problem. 

\end{abstract}

\section{Introduction}
\label{sec:intro}

Concurrent systems are notoriously hard to get correct, and are therefore a promising application area for formal methods like model checking or synthesis. However, these methods usually give correctness guarantees only for systems with a given, fixed number of components, and the state explosion 
problem prevents us from using them for systems with a large number 
of components. To ensure that desired properties hold 
for systems with a very large or even an \emph{arbitrary} number of components, methods for 
\emph{parameterized} model checking and synthesis have been devised.

While parameterized model checking is undecidable even for simple 
safety properties and systems with uniform finite-state 
components~\cite{Suzuki88}, there exist a number of methods that decide the 
problem for specific classes of 
systems~\cite{German92,EsparzaFM99,EmersonK00,Emerso03,EmersonK03,Clarke04c,AJKR14,Namjoshi07,EsparzaGM16}, 
some of which have been collected in surveys of the literature 
recently~\cite{Esparza14,BloemETAL15}. Additionally, there are semi-decision 
procedures that are successful in many interesting 
cases~\cite{Kurshan95,Bouajjani00,Clarke08,KaiserKW10,PnueliRZ01}.
However, most of these approaches only support safety properties, or their 
support for progress or liveness properties is limited, e.g., because global 
fairness properties are not considered and cannot be expressed in the 
supported logic (cp. Au{\ss}erlechner et al.~\cite{AJK16}).

In this paper, we investigate cases in which we can guarantee that a system 
with an arbitrary number of components satisfies strong liveness 
properties, including a quantitative version of liveness called 
\emph{promptness}. The idea of promptness is that a desired event should not 
only happen at \emph{some} time in the future, but there should exist a 
\emph{bound} 
on the time that can pass before it happens.
We consider specifications in \Prompt, an extension of \LTL 
with an operator that expresses prompt eventualities~\cite{KupfermanPV09},
 i.e., the logic puts a 
symbolic bound on the satisfaction of the eventuality, and the model 
checking problem asks if there is a value for this symbolic bound such that 
the property is guaranteed to be satisfied with respect to this value.
In many settings, adding promptness comes for free in terms of asymptotic 
complexity~\cite{KupfermanPV09}, e.g., model checking and synthesis~\cite{JTZ2018}.\footnote{\Prompt can be seen as a fragment of parametric LTL, a logic introduced by Alur et al.~\cite{AlurETP01}. However, since most decision problems for parametric LTL, including model checking, can be reduced to those for \Prompt, we can restrict our attention to the simpler logic.}  
Hence, here we study 
\emph{parameterized} model checking for \Prompt and show that in many cases 
adding promptness is also free for this problem. 

More precisely, as is common in the analysis of concurrent systems, we 
abstract concurrency by an interleaving semantics and consider the 
satisfaction of a specification \emph{up to stuttering}. Therefore, we limit 
our specifications to \PromptmX, an extension of the stutter-insensitive 
logic \LTLmX that does not have the next-time operator.
Determining satisfaction of \PromptmX specifications by concurrent systems 
brings new challenges and has not been investigated in detail before.

\paragraph{\bf Motivating Example.}

\begin{wrapfigure}{r}{0.45\linewidth}
\scalebox{0.95}{
\begin{tikzpicture}[node distance=1.7cm,>=stealth,auto]
  \tikzstyle{state}=[circle,thick,draw=black,minimum size=7mm]

  \begin{scope}
    
    \node [state] (init) {$\init$};   
    
    \node [state] (r) [left of=init] {$\mathbf{r}$}     
     edge [post] (init);
     
    \node [state] (tr) [below of=r] {$\mathbf{tr}$}     
     edge [pre] (init)
     edge [post] node[left = 0.5pt] {\tiny{$\forall \mathbf{\neg w}$}} (r);
     
    \node [state] (w) [right of=init] {$\mathbf{w}$}     
     edge [post] (init);    
     
    \node [state] (tw) [below of=w] {$\mathbf{tw}$}     
     edge [pre] (init)
     edge [post] node[right =0.5pt] {\tiny{$\forall \mathbf{\neg \{w, r\}}$}}(w);

  \end{scope}
  \end{tikzpicture}
}
\label{fig:reader-writer}
\end{wrapfigure}
For instance, consider the reader-writer protocol on the right which models 
access to shared data between processes. If a process wants to ``read'', it 
enters the state $\mathbf{tr}$ (``try-read'') that has a direct transition to the 
reading state $\mathbf{r}$. However, this transition is guarded by $\forall \neg \mathbf{w}$, 
which stands for the set of all states except $\mathbf{w}$, the ``writing'' state. 
That is, the transition is only enabled if no other process is currently in 
state $\mathbf{w}$. Likewise, if a process wants to enter $\mathbf{w}$ it has to go through 
$\mathbf{tw}$, but the transition to $\mathbf{w}$ is enabled only if no other process is 
reading or writing.

For such systems, previous results~\cite{EmersonK00,AJK16} provide cutoff results for parameterized verification of properties from \LTLmX, e.g., 
$$\forall i. \G \left( (tr_i \rightarrow \F r_i) \land (tw_i \rightarrow \F w_i) \right),$$
In this paper we investigate whether the same cutoffs still hold if we consider 
specifications in \PromptmX, e.g., if we substitute the \LTL eventually 
operator~$\F$ above with the prompt-eventually operator $\F_{\p}$, while imposing a 
bounded fairness assumption on the scheduler.

\paragraph{\bf Contributions.}
As a first step, we note that \PromptmX is not a stutter-insensitive logic, 
since unbounded stuttering could invalidate a promptness property. This leads 
us to define the notion of \emph{bounded stutter equivalence}, and proving 
that \PromptmX is \emph{bounded stutter insensitive}.

This observation is then used in an investigation of existing approaches that 
solve parameterized model checking by the \emph{cutoff} method, which reduces 
problems from 
systems with an arbitrary number of components to systems with a fixed number 
of components. More precisely, these approaches prove that for every trace in 
a large system, a stutter-equivalent trace in the cutoff system exists, and 
vice versa. We show that in many cases, modifications of these constructions 
allow us to 
obtain traces that are \emph{bounded} stutter equivalent, and therefore the 
cutoff results extend to specifications in \PromptmX.
The types of systems for which we prove these results include \emph{guarded 
protocols}, as introduced by Emerson and Kahlon~\cite{EmersonK00}, and 
\emph{token-passing systems}, as introduced by Emerson and Namjoshi\cite{Emerso03} 
for uni-directional rings, and by Clarke et al.~\cite{Clarke04c} for arbitrary 
topologies. Parameterized model checking for both of these system classes has 
recently been further 
investigated~\cite{AJK16,JS18,AJKR14,AminofKRSV18,SpalazziS17,SpalazziS14}, 
but thus far not in a context that includes promptness properties.

\section{\PromptmX and Bounded Stutter Equivalence}

We assume that the reader is aware of standard notions such as finite-state transition systems and linear temporal logic (LTL)~\cite{PrinciplesMC}. 

We consider concurrent systems that are represented as an interleaving composition of finite-state transition systems, possibly with synchronizing transitions where multiple processes take a step at the same time. In such systems, a process may stay in the same state for many global transitions while other processes are moving. From the perspective of that process, these are \emph{stuttering steps}. 

Stuttering is a well-known phenomenon, and temporal languages that include the next-time operator \X are \emph{stutter sensitive}: they can require some atomic proposition to hold at the next moment in time, and the insertion of a stuttering step may change whether the formula is satisfied or not. On the other hand, LTL$\setminus$\X, which does not have the \X operator, is stutter-insensitive: two words that only differ in stuttering steps cannot be distinguished by the logic~\cite{PrinciplesMC}.

In the following, we introduce \PromptmX, an extension of \LTLmX, and investigate its properties with respect to stuttering.

\subsection{\PromptmX}\label{promptLTL definition} Let $AP$ be the set of atomic propositions. The syntax of \PromptmX formulas over $AP$ is given by the following grammar:

$$\varphi ::= a \mid \neg a \mid \varphi \lor \varphi \mid \varphi \land \varphi \mid \F_{\p} \varphi \mid \varphi \U \varphi \mid \varphi \R \varphi, \text{ where } a\in AP$$

The semantics of \PromptmX formulas is defined over infinite words $w=w_0w_1\ldots \in (2^{AP})^\omega$, positions $i \in \Nat$, and bounds $k \in \Nat$. 
The semantics of the prompt-eventually operator $\F_{\p}$ is defined as follows: 
$$(w,i,k) \models \F_{\p}\varphi \text{ iff there exists } j \text{ such that } i \leq j \leq i + k \text{ and } (w,j,k) \models \varphi.$$ 
All other operators ignore the bound $k$ and have the same semantics as in LTL, moreover we define $\F$ and $\G$ in terms of $\U$ and $\R$ as usual.



\subsection{\Prompt and Stuttering}

Our first observation is that \PromptmX is stutter sensitive: to satisfy the formula 
$\varphi = \G \F_{\p} q$ with respect to a bound $k$, $q$ has to appear at 
least once in every $k$ steps. Given a word $w$ that satisfies $\varphi$ for 
some bound $k$, we can construct a word that does not satisfy $\varphi$ for 
any bound $k$ by introducing an increasing (and unbounded) number of 
stuttering steps between every two appearances of $q$. In the following, we 
show that \PromptmX is stutter insensitive if and only if there is a bound on 
the number of consecutive stuttering steps.




\smartpar{Bounded Stutter Equivalence.}
A finite word $w \in (2^{AP})^+$ is a \emph{block} if $\exists \alpha \subseteq AP$ such that $w = \alpha^{|w|}$.
Two blocks $w, w' \in (2^{AP})^+$ are \emph{d-compatible} if $\exists \alpha \subseteq AP$ such that $w = \alpha^{|w|}, w' = \alpha^{|w'|}$, $|w| \leq d\cdot|w'|$ and $|w'| \leq d\cdot|w|$.
Two infinite sequences of blocks $w_0w_1w_2\ldots$, $w'_0w'_1w'_2\ldots$ are \emph{d-compatible} if $w_i, w'_i$ are d-compatible for all $i$.

Two words $w, w' \in (2^{AP})^\omega$ are \emph{$d$-stutter equivalent}, denoted $w \equiv_{d} w'$, if they can be written as $d$-compatible sequences of blocks. They are \emph{bounded stutter equivalent} if they are $d$-stutter equivalent for some $d$. We denote by $\hat{w}$ a sequence of blocks that corresponds to a word $w$.

Given an infinite sequence of blocks $\hat{w} = w_0,w_1,w_2\ldots$, let 
$N_i^{\hat{w}}=\{\sum_{l=0}^{i-1} |w_l|,$ $\ldots,\sum_{l=0}^{i-1} |w_l| + |w_i| - 1\}$ 
be the set of positions of the $i$th block. Given a position $n$, there is 
a unique $i$ such that $n \in N_i^{\hat{w}}$.

To prove that \PromptmX is \emph{bounded stutter insensitive}, i.e., it cannot distinguish two words that are bounded stutter equivalent, we define a function that maps between positions in two $d$-compatible sequences of blocks:
given two infinite $d$-stutter equivalent words $w, w'$ such that $\hat{w},\hat{w}'$ are $d$-compatible, define the function $f: \mathbb{N} \rightarrow 2^{\mathbb{N}}$ where: $f(j) = N_i^{\hat{w}'}$ $\iff$ $j \in N_i^{\hat{w}}$. Note that $\forall j' \in f(j)$ we have $w(j) = w'(j')$, where $w(i)$ denotes the $i$th symbol in $w$. For an infinite word $w$, let $w[i,\infty)$ denote its suffix starting at position $i$, and $w[i:j]$ its infix starting at $i$ and ending at $j$. 
Then we can state the following.



\begin{remark}\label{remark-stutter}
Given two words $w$ and $w'$, if $w \equiv_d w'$, then $\forall j \in \mathbb{N}$ $\forall j' \in f(j): $ $w[j,\infty) \equiv_d w'[j',\infty)$.
\end{remark}

Now, we can state our first theorem.

\begin{theorem}[\PromptmX is Bounded Stutter Insensitive]\label{theorem-stutter}
Let $w, w'$ be $d$-stutter equivalent words, $\varphi$ a \PromptmX formula, and $f$ as defined above. Then $\forall i,k \in \Nat$:
$$ (w,i,k) \models \varphi ~\Rightarrow~ \forall j \in f(i): (w',j,d \cdot k) \models \varphi$$
\end{theorem}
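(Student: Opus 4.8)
The plan is a structural induction on the \PromptmX formula $\varphi$, with induction hypothesis the statement of the theorem for every strict subformula (with the same $w,w',d$ and the same block decompositions $\hat w,\hat w'$, hence the same $f$). Throughout I write $N_a^{\hat w}$ for the block of $\hat w$ containing a given position and use the two facts recorded just before the statement: if $i \in N_a^{\hat w}$ then $f(i) = N_a^{\hat w'}$, and $w(i) = w'(j)$ for every $j \in f(i)$. With this, the base cases $\varphi = a$ and $\varphi = \neg a$ are immediate, and the cases $\varphi = \varphi_1 \lor \varphi_2$ and $\varphi = \varphi_1 \land \varphi_2$ follow by applying the hypothesis to whichever disjunct/conjunct witnesses $(w,i,k) \models \varphi$, keeping the position $i$, the bound $k$, and the target $j$ unchanged. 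So all the work is in the three temporal operators.

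The case $\varphi = \F_{\p}\varphi_1$ is where the factor $d$ in the conclusion is created. Given a witness $j'$ with $i \le j' \le i+k$ and $(w, j', k) \models \varphi_1$, the induction hypothesis applied at $j'$ with bound $k$ already yields $(w', j'', d\cdot k) \models \varphi_1$ for \emph{every} $j'' \in f(j')$, so what remains is a pure counting argument: exhibit some $j'' \in f(j')$ lying in the window $[j,\, j + d\cdot k]$, for the fixed $j \in f(i)$. The idea is that the positions $i,\dots,j'$ of $w$ span at most $k+1$ positions and hence touch only a bounded amount of the block structure of $\hat w$ strictly between the blocks of $i$ and $j'$; by $d$-compatibility the matching stretch of block structure in $\hat w'$ is at most $d$ times longer, which puts the block $f(j')$ within distance $d\cdot k$ of $j$. (Alternatively, one can move to suffixes: by Remark~\ref{remark-stutter}, $w[j',\infty) \equiv_d w'[j'',\infty)$ for every $j'' \in f(j')$, and then the induction hypothesis is applied to these suffixes at position $0$.) I expect precisely this bookkeeping — certifying that the image of the witness really lands inside $[j,\, j + d\cdot k]$, with the degenerate case of a long block touched near its end as the delicate point — to be the main obstacle of the proof.

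Finally, $\varphi = \varphi_1 \U \varphi_2$ and $\varphi = \varphi_1 \R \varphi_2$ are handled by the standard block-to-block witness translation underlying stutter-insensitivity of \LTLmX, carried out while threading the bound ($k$ on the $w$-side, $d\cdot k$ on the $w'$-side) through the subformula calls; the numeric value of the bound plays no role here, so these cases do not even need $d$-compatibility. For $\varphi_1 \U \varphi_2$ with an until-witness $m \ge i$ in block $N_c^{\hat w}$: pick $m'$ to be the least element of $f(m) = N_c^{\hat w'}$ with $m' \ge j$ (possible since the blocks are linearly ordered), obtain $(w', m', d\cdot k) \models \varphi_2$ from the hypothesis, and observe that every $l'$ with $j \le l' < m'$ then lies in a block strictly before $N_c^{\hat w'}$; pulling such an $l'$ back to a position $l$ in the corresponding block of $\hat w$ with $i \le l < m$ transfers $(w, l, k) \models \varphi_1$ to $(w', l', d\cdot k) \models \varphi_1$. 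The \R case is dual: fix an arbitrary $m' \ge j$, pull it back to some $m \ge i$ in the matching block of $\hat w$, apply the release obligation of $w$ at $m$, and push the resulting witness — for $\varphi_2$ at $m$, or for $\varphi_1$ at some $i \le l < m$ — forward via the hypothesis, again using block order to keep any $\varphi_1$-witness inside $[j, m')$.
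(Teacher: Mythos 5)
Your plan follows the paper's proof essentially verbatim: the same structural induction, with the base and Boolean cases settled by $w(i)=w'(j)$ for $j\in f(i)$, the $\U$/$\R$ cases by the standard block-to-block witness translation, and all the real work concentrated in the $\F_{\p}$ case, where the factor $d$ is produced by a counting argument over the block lengths. The bookkeeping you defer there is exactly what the paper carries out with the partial sums $n_l,m_l$: it takes the smallest position $s$ of $f(e)$ (the start of the witness block in $\hat{w}'$) and bounds it by $\min f(i) + d\cdot k$ via $\sum_{l=y}^{x-1} m_l \le d\sum_{l=y}^{x-1} n_l \le d\cdot k$, i.e.\ precisely the ``the matching stretch in $\hat{w}'$ is at most $d$ times longer'' estimate you describe, so your route and the paper's coincide.
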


\begin{proof}

The proof works inductively over the structure of $\varphi$. Let $\hat{w}=w_0,w_1,\ldots$ and $\hat{w}'=w'_0,w'_1,\ldots$ be two d-compatible sequences of $w$ and $w'$. We denote by $n_i,m_i$ the number of elements inside $N_i^{w},N_i^{w'}$ respectively. We consider two cases here, the full proof can be found in Appendix~\ref{app:proof1}.\\

\textbf{Case 1: $\varphi = \F_{\p} \varphi$.}
$(w,i,k) \models \F_{\p} \varphi$ $\iff \exists e,x: i \leq e \leq i+ k,$ 
$e \in N_x^{\hat{w}}$, and $(w,e,k) \models \varphi$ where 
$(\sum_{l=0}^{x-1} n_l) \leq e < (\sum_{l=0}^{x} n_l)$. Then by induction 
hypothesis we have: $\forall j \in f(e)$ $(w',j,d \cdot k) \models \varphi$. 
Let $s$ be the 
smallest position in $f(e)$, then $s = \sum_{l=0}^{x-1} m_l$. There exists $y 
\in \Nat$ s.t. $i \in N_y^{\hat{w}}$ then $s = \sum_{l=0}^{y-1} m_l + \sum_{l=y}^{x-1}
 m_l$ $\leq \sum_{l=0}^{y-1} m_l + \sum_{l=y}^{x-1} n_l.d \leq \sum_{l=0}^{y-1} 
m_l + d.(\sum_{l=y}^{x-1} n_l) $ $\leq \sum_{l=0}^{y-1} m_l + k \cdot d$ 
(note that $i \in N_y^{\hat{w}}$ and $(w,i,k) \models \F_{\p} \varphi$). As $\sum_{l=0}
^{y-1} m_l$ is the smallest position in $f(i)$, then $\forall j \in f(i):$ 
$(w',j,d \cdot k) \models \F_{\p} \varphi$.\\

\textbf{Case 2: $\varphi = \varphi_1 \U \varphi_2$.}
$(w,i,k) \models \varphi_1 \U \varphi_2$ $\iff \exists j \geq i:$ $(w,j,k) \models \varphi_2$ and $\forall e < j:$ $(w,e,k) \models \varphi_1$. Then, by induction hypothesis we have: $\forall e < j$ $\forall l \in f(e): (w', l, d \cdot k) \models \varphi_1$ and $\forall l \in f(j): (w', l, d \cdot k) \models \varphi_2$, therefore $\forall j \in f(i):$ $(w',j,d \cdot k) \models \varphi_1 \U \varphi_2$.
\qed
\end{proof}

Our later proofs will be based on the existence of counterexamples to a given property, and will use the following consequence of Theorem~\ref{theorem-stutter}. 

\begin{cor}\label{cor-stutter}
Let $w, w'$ be $d$-stutter equivalent words, $\varphi$ a \PromptmX formula, and $f$ as defined above. Then $\forall k \in \Nat$:
\begin{center}
$(w,i,k) \not \models \varphi ~\Rightarrow~ \forall j \in f(i):$ $(w',j,k/d) \not \models \varphi$  
\end{center}
\end{cor}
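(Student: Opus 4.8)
The plan is to derive Corollary~\ref{cor-stutter} from Theorem~\ref{theorem-stutter} by a contrapositive-and-symmetry argument, using the fact that $d$-stutter equivalence is symmetric together with the behaviour of the associated position map under swapping the two words. First I would observe that if $w \equiv_d w'$ via $d$-compatible block sequences $\hat w, \hat w'$, then also $w' \equiv_d w$ via $\hat w', \hat w$, and the position map in this reversed direction — call it $f'$ — is essentially the inverse of $f$: for all $n$, $f'(n) = N_i^{\hat w}$ iff $n \in N_i^{\hat w'}$, so $j \in f(i)$ implies $i \in f'(j)$. This is immediate from the definition $f(j) = N_i^{\hat w'} \iff j \in N_i^{\hat w}$, since both conditions amount to $i$ being the unique block index containing the respective position.

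Next I would apply Theorem~\ref{theorem-stutter} in the reversed direction, i.e.\ to the pair $(w', w)$ with bound $k/d$ in place of $k$. Fix $i, k$ and $j \in f(i)$; by the previous paragraph $i \in f'(j)$. Suppose toward a contradiction that $(w', j, k/d) \models \varphi$. Then Theorem~\ref{theorem-stutter} applied to $w' \equiv_d w$, the position $j$, the bound $k/d$, and the map $f'$, gives $(w, i', d \cdot (k/d)) \models \varphi$ for every $i' \in f'(j)$, in particular for $i' = i$, i.e.\ $(w, i, k) \models \varphi$ — contradicting the hypothesis $(w,i,k) \not\models \varphi$. (Here $d \cdot (k/d) \le k$, and since $\F_{\p}$ is monotone in its bound — a larger bound can only make more $\F_{\p}$-formulas true — satisfaction at bound $d\cdot(k/d)$ implies satisfaction at bound $k$; if one prefers to avoid fractions entirely one works with $\lfloor k/d\rfloor$ throughout and uses $d\lfloor k/d\rfloor \le k$.) Hence $(w', j, k/d) \not\models \varphi$, which is the claim.

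The only genuinely delicate point is bookkeeping around the integer bound $k/d$: one must decide whether $k/d$ denotes the rational $k/d$ or $\lfloor k/d \rfloor$, and then check that Theorem~\ref{theorem-stutter}'s conclusion $(w,i,d\cdot k/d)\models\varphi$ indeed yields $(w,i,k)\models\varphi$. This rests on the trivial monotonicity observation that for any \PromptmX formula $\varphi$, word $u$, position $p$, and bounds $k_1 \le k_2$, $(u,p,k_1)\models\varphi$ implies $(u,p,k_2)\models\varphi$ — provable by a one-line induction, the only nontrivial case being $\F_{\p}$, where a witness $e$ with $p \le e \le p + k_1$ is also a witness with $p \le e \le p + k_2$. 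Everything else (symmetry of $\equiv_d$, the inverse relationship of $f$ and $f'$) is purely definitional, so I expect this to be a short proof with the integer-rounding caveat being the main thing to get right.
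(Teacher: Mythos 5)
Your proposal is correct and follows exactly the route the paper intends: the corollary is stated as an immediate consequence of Theorem~\ref{theorem-stutter}, obtained by applying the theorem contrapositively in the reversed direction $w' \equiv_d w$ with bound $k/d$ and the inverse position map, which is precisely your argument. Your explicit treatment of the rounding issue ($\lfloor k/d \rfloor$ plus monotonicity of satisfaction in the bound) is a detail the paper leaves implicit, and it is handled correctly.
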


\section{Guarded Protocols and Parameterized Model Checking}

In the following, we introduce a system model for concurrent systems, called guarded protocols. However, we will see that some of our results are of interest for other classes of concurrent and parameterized systems, e.g., the token-passing systems that we investigate in Section~\ref{sec:TPS}.

\subsection{System Model: Guarded Protocols}
\label{sec:model}

We consider systems of the form $\largesys$, consisting of one copy of a process template $A$ and $n$ copies of a process template $B$, in an interleaving parallel composition.
We distinguish objects that belong to different templates by indexing them with the template. E.g., for process template $U \in \{A,B\}$, $Q_U$ is the set of states of $U$.
For this section, fix a finite set of states $\stateset = Q_A \cupdot Q_B$ and a positive integer $n$, and let $\guardset = \{\exists, \forall\} \times 2^Q$ be the set of guards.


\smartpar{Processes.} A \emph{process template} 
 is a transition system
  $U=(\stateset_U, \init_U, \trans_U)$ where
	\begin{itemize}
	\item $\stateset_U \subseteq \stateset$ is a finite set of states including the
  initial state $\init_U$,
	\item $\trans_U \subseteq \stateset_U \times \guardset \times \stateset_U$ is a guarded transition relation.
	\end{itemize}
%
%
\smartpar{Guarded Protocols.} 
The semantics of $\largesys$ is given by the transition system $(S, \init_s, \Delta)$, where
\footnote{By similar arguments as in Emerson and Kahlon~\cite{EmersonK00}, our results can be extended to systems with an arbitrary (but fixed) number of process templates. The same holds for \emph{open} process templates that can receive inputs from an environment, as considered by Au{\ss}erlechner et al.~\cite{AJK16}.}

\begin{itemize}
\item $S = Q_A \times (Q_B)^n$ is the set of (global) states,
\item $\init_S = (\init_A, \init_B,\ldots, \init_B)$ is the global initial state, and
\item $\Delta \subseteq S \times S$ is the global transition relation. $\Delta$ will be defined by local guarded transitions of the process templates $A$ and $B$ in the following.
\end{itemize}

We distinguish different copies of process template $B$ in $\largesys$ by subscript, and each $B_i$ is called a \emph{$B$-process}. We denote the set $\{A, B_1,\ldots,B_n\}$ as $\mP$, and a process in $\mP$ as $p$. For a global state $s \in S$ and $p \in \mP$, let the \emph{local state of $p$ in $s$} be the projection of $s$ onto that process, denoted $s(p)$.

Then a local transition $(q,g,q')$ of process $p \in \mP$ is \emph{enabled} in global state $s$ if $s(p)=q$ and either
\begin{itemize} 
\item $g=(\exists,\guardstates)$ and $\exists p' \in \mP \setminus \{p\}: s(p') \in \guardstates$, or
\item $g=(\forall,\guardstates)$ and $\forall p' \in \mP \setminus \{p\}: s(p') \in \guardstates$.
\end{itemize}

Finally, $(s,s') \in \Delta$ if there exists $p \in \mP$ such that $(s(p),g,s'(p)) \in \delta_p$ is enabled in $s$, and $s(p')=s'(p')$ for all $p' \in \mP \setminus \{p\}$. We say that the transition $(s,s')$ \emph{is based} on the local transition $(s(p),g,s'(p))$ of $p$.


\smartpar{Disjunctive and Conjunctive Systems.}
We distinguish disjunctive and conjunctive systems, as defined by Emerson and 
Kahlon~\cite{EmersonK00}. In a \emph{disjunctive process template}, every guard is 
of the form $(\exists,\guardstates)$ for some $\guardstates \subseteq Q$. In 
a \emph{conjunctive process template}, every guard is of the form 
$(\forall,\guardstates)$, and $\{\init_A, \init_B\} \subseteq 
\guardstates$, i.e., initial states act as neutral states for all 
transitions. A \emph{disjunctive (conjuctive) system} consists of only disjunctive (conjunctive) process templates. For conjunctive systems we additionally assume that processes are \emph{initializing}, i.e., any process 
that moves infinitely often visits its initial state infinitely often.\footnote{This restriction has already been considered by Au{\ss}erlechner et
al.~\cite{AJK16}, and was necessary to support global fairness assumptions.}


\smartpar{Runs.}
A \emph{path} of a system $\largesys$ is a sequence $x = s_0 s_1 \ldots$ of global states such that for all $i < |x|$ there is a 
transition $(s_i,s_{i+1}) \in \Trans$ based on a local 
transition of some process $p \in \mP$. We say that 
$p$ \emph{moves} at \emph{moment} $i$. 
A path can be finite or infinite, and a \emph{maximal path} is a path that cannot be extended, i.e., it is either infinite or ends in a global state where no local transition is enabled, also called a \emph{deadlock}.
A \emph{run} is a maximal path starting in $\init_S$. We write $x \in \largesys$ to denote that $x$ is a run of $\largesys$.

Given a path $x = s_0 s_1 \ldots$ and a process $p$, the \emph{local path} of $p$ in $x$ is the projection $x(p) = s_0(p)s_1(p)\ldots$ of $x$ onto local states of $p$. It is a \emph{local run} of $p$ if $x$ is a run. Additionally we denote by $x(p_1,\ldots,p_k)$ the projection $s_0(p_1,\ldots,p_k) s_1(p_1,\ldots,p_k)\ldots$ of $x$ onto the processes $p_1,\ldots,p_k \in \mP$.

\smartpar{Fairness.}
We say a process $p$ is \emph{enabled} in global state $s$ if at least one of its transitions is enabled in $s$, otherwise it is \emph{disabled}.
Then, an infinite run $x$ of a system $\largesys$ is
\begin{itemize}
\item \emph{strongly fair} if for every process $p$, if $p$ is enabled infinitely often, then $p$ moves infinitely often.
\item \emph{unconditionally fair}, denoted $\ufair(x)$, if every process moves infinitely often. 
\item \emph{globally $b$-bounded fair}, denoted $\bfair{b}(x)$, for some $b \in \Nat$, if 
\[ \forall p \in \mP\ \forall m \in \Nat\ \exists j \in \Nat: m \leq j \leq m+b \text{ and $p$ moves at moment } j.\]
\item \emph{locally $b$-bounded fair} for $E \subseteq \mP$, denoted $\lbfair{b}(x,E)$, if it is unconditionally fair and 
\[ \forall p \in E\ \forall m \in \Nat\ \exists j \in \Nat: m \leq j \leq m+b \text{ and $p$ moves at moment } j.\]
\end{itemize}

\smartpar{Bounded-fair System.} We consider systems that keep track of bounded fairness explicitly by running in parallel to $\largesys$ one counter for each process. In a step of the system where process $p$ moves, the counter of $p$ is reset, and all other counters are incremented. If one of the counters exceeds the bound $b$, the counter goes into a failure state from which no transition is enabled. 
We call such a system a \emph{bounded-fair system}, and denote it $\fairSys{b}{n}$.

A \emph{path} of a bounded-fair system $\fairSys{b}{n}$ is given as $x = (s_0,b_0)(s_1,b_1)\ldots$, and extends a path of $\largesys$ by valuations $b_i \in \{0,\ldots,b\}^{n+1}$ of the counters. Note that a run (i.e., a maximal path) of $\fairSys{b}{n}$ is finite iff either it is deadlocked (in which case also its projection to a run of $\largesys$ is deadlocked) or a failure state is reached. Thus, the projection of all infinite runs of $\fairSys{b}{n}$ to $\largesys$ are exactly the globally $b$-bounded fair runs of $\largesys$.


\subsection{Parameterized Model Checking and Cutoffs}
\label{sec:prelim}

\smartpar{\PromptmX Specifications.} 
Given a system $\largesys$, we consider specifications over $AP = Q_A \cup (Q_B \times \{1,\ldots,n\})$, i.e., states of processes are used as atomic propositions. For $i_1, \ldots, i_c \in \{1,\ldots,n\}$, we write $\varphi(A,B_{i_1},\ldots,B_{i_c})$ for a formula that contains only atomic propositions from $Q_A \cup (Q_B \times \{i_1,\ldots,i_c\})$.

In the absence of fairness considerations, we say that $\largesys$ \emph{satisfies }$\varphi$ if 
$$\exists k \in \Nat~ \forall x \in \largesys:~ (x,0,k) \models \varphi.$$

We say that $\largesys$ \emph{satisfies $\varphi(A,B_1,\ldots,B_c)$ under global bounded fairness}, written $\largesys \gbmodels \varphi(A,B_1,\ldots,B_c)$, if 
$$\forall b \in \Nat~ \exists k \in \Nat~ \forall x \in \largesys:~ \bfair{b}(x) \Rightarrow (x,0,k) \models \varphi(A,B_1,\ldots,B_c).$$ 

Finally, for local bounded fairness we usually require bounded fairness for all processes that appear in the formula $\varphi(A,B_1,\ldots,B_c)$. Thus, we say that \emph{$\largesys$ satisfies $\varphi(A,B_1,\ldots,B_c)$ under local bounded fairness}, written $\largesys \lbmodels \varphi(A,B_1,\ldots,B_c)$, if 
$$\forall b \in \Nat~ \exists k \in \Nat~ \forall x \in \largesys:~ \lbfair{b}(x,\{1,\ldots,c\}) \Rightarrow (x,0,k) \models \varphi(A,B_1,\ldots,B_c).$$

\smartpar{Parameterized Specifications.} 	
\label{sec:parameterized}
A \emph{parameterized specification} is a Prompt-\LTLmX formula
with quantification over the indices of atomic propositions. 
A \emph{$h$-indexed formula} is of the form $\forall{i_1},\ldots,\forall{i_h}. \varphi(A,B_{i_1},\ldots,B_{i_h})$. 
Let $f \in \{gb,lb\}$, then for given $n \geq h$, 
$$\largesys \models_f \forall{i_1},{\ldots},\forall{i_h}. \varphi(A,B_{i_1},{\ldots},B_{i_h})$$
\begin{center}
$\iff$
\end{center}
$$\forall j_1 \neq {\ldots} \neq j_h \in \{1,\ldots,n\}:~ \largesys \models_f \varphi(A,B_{j_1},{\ldots},B_{j_h}).$$ 
By symmetry of guarded protocols, this is equivalent 
(cp.\cite{EmersonK00})
to $\largesys \models_f \varphi(A,B_1,\ldots,B_h)$. 
The latter formula is denoted by $\varphi(A,B^{(h)})$, 
and we often use it instead of the original $\forall{i_1},\ldots,\forall{i_h}. \varphi(A,B_{i_1},...,B_{i_h})$. 
\smartpar{(Parameterized) Model Checking Problems.}
For $n \in \Nat$, a specification $\varphi(A,B^{(h)})$ with $n \ge h$, and $f \in \{gb,lb\}$:
\begin{itemize}
\item the \emph{model checking problem} is to decide whether  $\largesys$ $\models_f \varphi(A,B^{(h)})$,
\item the \emph{parameterized model checking problem} (PMCP) is to decide whether $\forall m \ge n: \largesyse{m} \models_f \varphi(A,B^{(h)})$.
\end{itemize}
%


\smartpar{Cutoffs and Decidability.}
We define cutoffs with respect to a class of systems (either disjunctive or conjunctive), a class of process templates $\templates$, e.g., templates of bounded size, and a class of properties, e.g. satisfaction of $h$-indexed \PromptmX formulas under a given fairness notion.

A \emph{cutoff} for a given class of systems with processes from $\templates$, a fairness notion $f \in \{lb,gb\}$ and a set of \PromptmX formulas $\Phi$ is a number $c \in \Nat$ such that

$$ \forall A,B \in \templates~ \forall \varphi \in \Phi~ \forall n \geq c :~ 
 \largesys \models_f \varphi ~\Leftrightarrow~ \cutoffsys \models_f \varphi.$$
%


Note that the existence of a cutoff implies that the PMCP is \emph{decidable} iff the model checking problem for the cutoff system $\cutoffsys$ is decidable.
Decidability of model checking for finite transition systems with specifications in \PromptmX and bounded fairness follows from the fact that bounded fairness can be expressed in \PromptmX, and from results on decidability of assume-guarantee model checking for \Prompt (cf. Kupferman et al.~\cite{KupfermanPV09} and Faymonville and Zimmermann~\cite{FaymonvilleZ17}[Lemmas 8, 9]).

\section{Cutoffs for Disjunctive Systems}
%

In this section, we prove cutoff results for disjunctive systems under 
bounded fairness and stutter-insensitive specifications with or without 
promptness. To this end, in Section~\ref{sec:simulation-lemmas} we prove two 
lemmas that show how to simulate, up to bounded stuttering, 
local runs from a system of given 
size $n$ in a smaller or larger disjunctive system.
We then use these two lemmas in Subsections \ref{disj-LTL} and 
\ref{disj-prompt} to obtain cutoffs for specifications in \LTLmX and \PromptmX, 
respectively.

Moreover for the proofs of these two lemmas we utilize the same construction 
techniques that were used in~\cite{AJK16,AJK15,EmersonK00}, 
but in addition we analyze their effects on bounded fairness and bounded 
stutter equivalence. 
Note that we will only consider formulas of the form $\varphi(A,B^{(1)})$, 
however, as in previous work~\cite{AJK15,EmersonK00}, our results extend to 
specifications over an arbitrary number $h$ of $B$-processes.

Table~\ref{tab:disj} summarizes the results of this section: for specifications in $\LTLmX$ and $\PromptmX$ we obtain a cutoff that depends on the size of process template $B$, as well as on the number $h$ of quantified index variables. 
The table states 
\begin{wrapfigure}[12]{r}{0.3\linewidth}
\scalebox{0.95}{
\fboxrule=0pt

\fbox{%
\begin{minipage}[b]{0.5\linewidth}
\begin{tikzpicture}[node distance=1.8cm,>=stealth,auto]
  \tikzstyle{state}=[circle,thick,draw=black,minimum size=7mm]
  \begin{scope}    
    \node [state] (r) {$\mathbf{r}$}
     edge [loop above] node[above = 0.5pt] {\tiny{$\exists \mathbf{nw}$}} (r);   
    \node [state] (nr) [below of = r] {$\mathbf{nr}$}
     edge [post, bend left] node[left = 0.5pt] {\tiny{$\exists \mathbf{nw}$}}  (r)
     edge [pre, bend right] (r); %
  \end{scope}
  \end{tikzpicture}
  \caption{Reader} \label{fig:Reader}
\end{minipage}
}
\fbox{%
\begin{minipage}[b]{0.5\linewidth}

  \begin{tikzpicture}[node distance=1.8cm,>=stealth,auto]
  \tikzstyle{state}=[circle,thick,draw=black,minimum size=7mm]

  \begin{scope}
    \node [state] (w) {$\mathbf{w}$}
     edge [loop above] node[above = 0.5pt] {\tiny{$\exists \mathbf{r}$}} (r);   
    \node [state] (nw) [below of = r] {$\mathbf{nw}$}
     edge [post, bend left]  (w)
     edge [pre, bend right] (w); %
   
  \end{scope}
  \end{tikzpicture}
  \caption{Writer} \label{fig:Writer}
  \end{minipage}
}
}
\label{fig:disj-reader-writer}
\end{wrapfigure} 
generalizations of Theorems~\ref{LTL-b-fair} and \ref{PromptLTL-b-fair} from 
the $2$-indexed 
case to the $h$-indexed case for arbitrary $h \in \Nat$. 
For one of the cases we were not able to obtain a cutoff result (as explained 
in Appendix~\ref{app:absence}.

\paragraph{\bf Simple Reader-Writer Example.}
Consider the disjunctive system $W {\parallel} R^n$, where $W$ is a writer 
process~(Figure~\ref{fig:Writer}), and $R$ is a reader 
process~(Figure~\ref{fig:Reader}). Let 
the specification $\varphi$ be
$\forall i \: \G(\mathbf{w} \rightarrow \F_{\p} [(\mathbf{w} \land \mathbf{nr}_i)] )$, i.e., 
if process $W$ is in state $\mathbf{w}$, then eventually all the $R$ 
processes will be in state $\mathbf{nr}$, while $W$ is in $\mathbf{w}$. 
According to Table~\ref{tab:disj}, the cutoff for checking whether $W {\parallel} R^n \models_{lb} \varphi$ is $5$.

\begin{table}[h]
\vspace{-1em}
\caption{Cutoffs for Disjunctive Systems}
\label{tab:disj}
\scriptsize
\centering
\begin{tabular}{lccc}
\toprule
& Local Bounded Fairness~~~ & Global Bounded Fairness   \\ 
\midrule
$h$-indexed \LTLmX & $2|Q_B| + h$ & $2|Q_B| + h$ \\
$h$-indexed \promptmX~~~ & $2|Q_B| + h$ & -\\
\bottomrule
\end{tabular}
\vspace{-1em}
\end{table}

\subsection{Simulation up to Bounded Stutter Equivalence}
\label{sec:simulation-lemmas}

\smartpar{Definitions.} Fix a run $x=x_0x_1...$ of the disjunctive system $\largesys$. Our constructions are based on the following definitions, where $q \in Q_B$:

\begin{itemize}
\item $\appears^{B_i}(q)$ is the set of all moments in $x$ where process $B_i$ is in state $q$: $\appears^{B_i}(q)=\{m \in \Nat \mid x_m(B_i)=q\}$.
\item $\appears(q)$ is the set of all moments in $x$ where at least one $B$-process is in state $q$: $\appears(q)=\{m \in \Nat \mid \exists i \in \{1,\ldots,n\}: x_m(B_i)=q\}$.
\item $f_q$ is the first moment in $x$ where $q$ appears: $f_q = min(\appears(q))$, and $\first_q \in \{1,\ldots,n\}$ is the index of a $B$-process where $q$ appears first, i.e., with $x_{f_q}(B_{\first_q}) = q$.
\item if $\appears(q)$ is finite, then $l_q = max(\appears(q))$ is the last moment where $q$ appears, and $\last_q \in \{1,\ldots,n\}$ is a process index with $x_{l_q}(B_{\last_q}) = q$ 
\item let $\visInf{} = \{q \in Q_B \mid \exists B_i \in \{B_2,\ldots,B_n\}: \appears^{B_i}(q) \textrm{ is infinite}\}$\\ and
$\visFin{} = \{q \in Q_B \mid \forall B_i \in \{B_2,\ldots,B_n\}: \appears^{B_i}(q) \textrm{ is finite}\}$.
\item $Set(x_i)$ is the set of all state that are visited by some process at moment $i$: $Set(x_i)=\{q|q \in (Q_A \cup Q_B)$ and $ \exists p \in \mP: x_i(p) = q \}$.
\end{itemize}

Our first lemma states that any behavior of processes $A$ and $B_1$ in a system $\largesys$ can be simulated up to bounded stuttering in a system $\largesyse{n+1}$. This type of lemma is called a \emph{monotonicity lemma}.

\begin{lemma}[Monotonicity Lemma for Bounded Stutter Equivalence]\label{mon-disj}
Let $A, B$ be disjunctive process templates, $n \geq 2$, $b\in \Nat$ and $x \in \largesys$ with $\lbfair{b}(x,\{A,B_1\})$. Then there exists $y \in \largesyse{n+1}$ with $\lbfair{2b}(y,\{A,B_1\})$ and $x(A, B_1) \equiv_{2} y(A,B_1).$
\end{lemma}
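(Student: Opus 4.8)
The goal is to take a $b$-bounded-fair run $x$ of $\largesys$ and build a $2b$-bounded-fair run $y$ of $\largesyse{n+1}$ in which $A$ and $B_1$ behave exactly as in $x$, up to $2$-stuttering. The natural idea is to add one fresh $B$-process, call it $B_{n+1}$, and have it shadow the behaviour of $B_1$ (or more precisely, of some existing $B$-process) so that from the point of view of $A$ and $B_1$ the "context" of states visible to their disjunctive guards is never weaker than in $x$. Concretely, I would have $y(A,B_1,\dots,B_n)$ follow $x$ move-for-move, while $B_{n+1}$ stutters in its initial state as long as possible and then, whenever it needs to move (to stay $2b$-bounded-fair), takes a "catch-up" copy of a suitable suffix of some $B_i$'s local run — using the fact that in disjunctive systems any local transition that was enabled because some process was in a state $\state$ remains enabled as long as that process, or any other, is still in $\state$. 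The key structural fact to exploit is the one already used in \cite{EmersonK00,AJK16}: disjunctive guards are \emph{monotone} in the set of occupied states, so adding a process can only enable more transitions, never disable one.

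The main steps I would carry out are: (1) Fix the fresh process $B_{n+1}$ and define its local run as a "delayed and padded" copy of the local run $x(B_j)$ for a well-chosen index $j$ (e.g. $j$ chosen so that $B_j$ moves infinitely often if any $B$-process does, to preserve unconditional fairness of $y$). (2) Interleave: whenever $x$ has process $p \in \{A,B_1,\dots,B_n\}$ move, let $y$ have the corresponding process move; additionally insert moves of $B_{n+1}$ often enough that it moves within every window of $2b$ steps, but rarely enough (at most once per original step, say) that each original step is blown up by a factor of at most $2$. (3) Check that every inserted global transition of $y$ is legal: the move of $B_{n+1}$ is enabled because its disjunctive guard was enabled in $x$ at the moment $B_j$ made the analogous move, and the set of occupied states in $y$ is a superset of the one in $x$ at that moment; the moves of the other processes remain enabled for the same monotonicity reason. (4) Establish the stutter bound: between two consecutive "real" moves that a process $p \in \{A,B_1\}$ makes in $x$, at most one extra $B_{n+1}$-step is inserted, so the block of $p$'s constant local state in $y$ has length at most twice (and at least once) the corresponding block in $x$, giving $x(A,B_1) \equiv_2 y(A,B_1)$. (5) Establish $\lbfair{2b}(y,\{A,B_1\})$: $A$ and $B_1$ each move at least once per $b$ original steps, hence at least once per $2b$ steps of $y$; and unconditional fairness of $y$ follows because every process of $y$, including $B_{n+1}$, moves infinitely often by construction.

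The delicate point — and the one I expect to be the main obstacle — is the simultaneous control of three quantities: legality of $B_{n+1}$'s transitions, the factor-$2$ stutter bound, and the $2b$ fairness bound. These pull against each other: to keep the stutter factor at $2$ we may insert at most one $B_{n+1}$-move per original step, but to keep $B_{n+1}$ $2b$-bounded-fair we may be forced to insert a move even at a step where $B_{n+1}$'s intended next local transition is not enabled (because the state its guard refers to is not currently occupied in $y$). Resolving this requires choosing $B_{n+1}$'s local run and its schedule carefully — e.g. by letting $B_{n+1}$ always lag behind $B_j$ so that whenever $B_{n+1}$ must move, the state needed for its guard is guaranteed to be occupied (in the worst case, by $B_j$ itself, or by the neutral/initial state if $B_{n+1}$ is still at $\init_B$, recalling that for conjunctive systems $\init_B$ is neutral — though here we are in the disjunctive case, so instead we rely on $B_j$'s trajectory being a legal sequence of disjunctive transitions whose guards stay satisfied in the richer context of $y$). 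A clean way to package this is: define the "destuttered" skeleton of $x(B_j)$, have $B_{n+1}$ traverse exactly that skeleton, and schedule one step of this traversal in every maximal block of $b$ original steps during which $B_j$ is not itself moving — then $B_{n+1}$ never outruns $B_j$, its guards are always met, and the insertion rate is bounded as required. The remaining verification (that the three bounds hold) is then a routine block-counting argument of the kind already carried out in \cite{AJK16}.
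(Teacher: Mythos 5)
Your high-level idea (add one process $B_{n+1}$ that copies an existing $B$-process and exploit the monotonicity of disjunctive guards, repairing interleaving by inserting extra steps) is exactly the paper's approach, and your steps (1)--(4) would go through if the copy were kept \emph{synchronous}. The problem is the construction you finally settle on: letting $B_{n+1}$ lag behind $B_j$ and traverse the destuttered skeleton of $x(B_j)$, taking its steps ``in every maximal block of $b$ original steps during which $B_j$ is not itself moving.'' For a delayed transition the enabledness argument breaks down: when $B_j$ took the transition $(q,g,q')$ at moment $t$ in $x$, the existential guard $g$ was witnessed by some process occupying a guard state \emph{at moment $t$}; at the later moment when $B_{n+1}$ replays this transition, that witness may long since have moved away, and neither $B_j$ (now further along its run) nor the superset property ``$y$ occupies at least the states $x$ occupies \emph{at the same moment}'' gives you a witness. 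Monotonicity of disjunctive guards says that adding occupied states at a given moment cannot disable a transition; it says nothing about postponing a transition to a different moment, and your claim ``$B_{n+1}$ never outruns $B_j$, [hence] its guards are always met'' is therefore unjustified. This is precisely why the paper has $B_{n+1}$ copy $x(B_i)$ move-for-move and resolves the interleaving violation by splitting each offending global step into two \emph{consecutive} steps ($B_i$ moves, then $B_{n+1}$ takes the identical local transition): at that instant only $B_i$ has changed state, so the original witness is still in place and the copied transition is enabled.

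The detour that led you there is also unnecessary: the target $\lbfair{2b}(y,\{A,B_1\})$ only requires the \emph{bounded} part for $A$ and $B_1$; all other processes, including $B_{n+1}$, merely need to move infinitely often (unconditional fairness), which the synchronous copy inherits from $B_i$ since $x$ is unconditionally fair. So the tension you describe between the stutter factor $2$, legality of $B_{n+1}$'s moves, and a $2b$ bound for $B_{n+1}$ does not arise. With the synchronous split, each move of $B_i$ inserts at most one extra stuttering step into every other local run, so every block of $x(A,B_1)$ at most doubles, giving $x(A,B_1)\equiv_2 y(A,B_1)$ and at most $2b$ between consecutive moves of $A$ and of $B_1$, i.e.\ $\lbfair{2b}(y,\{A,B_1\})$.
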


\begin{proof}
Let $x$ be a run of $\largesys$ where $\lbfair{b}(x,\{A,B_1\})$. Let $y(A) = 
x(A)$ and $y(B_j) = x(B_j)$ for all $B_j \in \{B_1,\ldots,B_n\}$ and let the 
new process $B_{n+1}$ copy one of the $B$-processes of $\largesys$, i.e., 
$y(B_{n+1}) = x(B_i)$ for some $i \in \{1,\ldots,n\}$. Copying a local run violates the 
interleaving 
semantics as two processes will be moving at the same time. To solve this 
problem, we split every transition $(y_l, y_{l+1})$ where the interleaving 
semantics is violated by $B_i$ and $B_{n+1}$ executing local transitions 
$(q_i, g, q'_i)$ and $(q_{n+1}, g, q'_{n+1})$, respectively. To do this, 
replace $(y_l, y_{l+1})$ with two consecutive transitions $(y_l,u)(u,y_{l+1})$,
where $(y_l,u)$ is based on the local transition $(q_i, g, q'_i)$ 
and $(u,y_{l+1})$ is based on the local transition $(q_{n+1}, g, q'_{n+1})$. 
Note that both of these local transitions are enabled in the constructed run $y$
since the transition $(q_i, g, q'_i)$ is enabled in the original run $x$.
Moreover, run $y$ inherits unconditional fairness from $x$. 
Finally, it is easy to see that for 
every local transition of process $B_i$ in $x$, establishing interleaving 
semantics has added one additional stuttering step to every local run in $y$ 
including processes $A$ and $B_1$. Therefore we have that $\lbfair{2b}(y,\{A,B
_1\})$ and  $x(A, B_1) \equiv_{2} y(A,B_1)$.
\qed
\end{proof}

\paragraph{\bf Reader-Writer Example.}

\begin{wrapfigure}{r}{0.5\linewidth}
\vspace{-25pt}
\scalebox{0.95}{
\fboxrule=0pt
\fbox{%
\begin{minipage}[b]{0.48\linewidth}
\centering

  \begin{tikzpicture}
  \node at (0.3,1) {$t$};
  \draw (0.1,0.8) -- (0.45,0.8);
  \node at (0.3,0.5) {$0$};
  \node at (0.3,0.15) {$1$};
  \node at (0.3,-0.2) {$2$};
  \node at (0.3,-0.55) {$3$};
  \node at (0.3,-0.9) {$4$};
  \node at (0.3,-1.25) {$5$};
  \node at (0.3,-1.6) {$6$};
  
  \draw (0.5,1.2) -- (0.5,-1.7);
  \draw[->] (0.2,-1.6) to [bend left = 45] (0.2,-0.55);
  
  \node at (1,1) {$W$};
  \draw (0.8,0.8) -- (1.15,0.8);
  \node at (1,0.5) {$\mathrm{nw}$};
  \node at (1,0.15) {$\mathrm{nw}$};
  \node at (1,-0.2) {$\mathrm{nw}$};
  \node at (1,-0.55) {$\mathbf{w}$};
  \node at (1,-0.9) {$\mathbf{nw}$};
  \node at (1,-1.25) {$\mathrm{nw}$};
  \node at (1,-1.6) {$\mathrm{nw}$};
  
  \node at (1.5,1) {$R_1$};
  \draw (1.3,0.8) -- (1.65,0.8);
  \node at (1.5,0.5) {$\mathrm{nr}$};
  \node at (1.5,0.15) {$\mathbf{r}$};
  \node at (1.5,-0.2) {$\mathrm{r}$};
  \node at (1.5,-0.55) {$\mathrm{r}$};
  \node at (1.5,-0.9) {$\mathrm{r}$};
  \node at (1.5,-1.25) {$\mathbf{r}$};
  \node at (1.5,-1.6) {$\mathrm{r}$};
  
  \node at (2,1) {$R_2$};
  \draw (1.8,0.8) -- (2.15,0.8);
  \node at (2,0.5) {$\mathrm{nr}$};
  \node at (2,0.15) {$\mathrm{nr}$};
  \node at (2,-0.2) {$\mathbf{r}$};
  \node at (2,-0.55) {$\mathrm{r}$};
  \node at (2,-0.9) {$\mathrm{r}$};
  \node at (2,-1.25) {$\mathrm{r}$};
  \node at (2,-1.6) {$\mathbf{r}$};
  \end{tikzpicture}
  \vspace{0.3cm}
  \caption{Run: $W || R^2$} \label{fig:MoRun2}
  \end{minipage}
}
\fbox{%
\begin{minipage}[b]{0.48\linewidth}
\centering

  \begin{tikzpicture}
  \node at (0.3,1) {$t$};
  \draw (0.1,0.8) -- (0.45,0.8);
  \node at (0.3,0.5) {$0$};
  \node at (0.3,0.15) {$1$};
  \node at (0.3,-0.2) {$2$};
  \node at (0.3,-0.55) {$3$};
  \node at (0.3,-0.9) {$4$};
  \node at (0.3,-1.25) {$5$};
  \node at (0.3,-1.6) {$6$};
  \node at (0.3,-1.95) {$7$};
  \node at (0.3,-2.3) {$8$};
  
  \draw (0.5,1.2) -- (0.5,-2.4);
  \draw[->] (0.2,-2.3) to [bend left = 45] (0.2,-0.9);
  
  \node at (1,1) {$W$};
  \draw (0.8,0.8) -- (1.15,0.8);
  \node at (1,0.5) {$\mathrm{nw}$};
  \node at (1,0.15) {$\mathrm{nw}$};
  \node at (1,-0.2) {$\mathrm{nw}$};
  \node at (1,-0.55) {$\mathrm{nw}$};
  \node at (1,-0.9) {$\mathbf{w}$};
  \node at (1,-1.25) {$\mathbf{nw}$};
  \node at (1,-1.55) {$\mathrm{nw}$};
  \node at (1,-1.95) {$\mathrm{nw}$};
  \node at (1,-2.3) {$\mathrm{nw}$};
  
  \node at (1.5,1) {$R_1$};
  \draw (1.3,0.8) -- (1.65,0.8);
  \node at (1.5,0.5) {$\mathrm{nr}$};
  \node at (1.5,0.15) {$\mathbf{r}$};
  \node at (1.5,-0.2) {$\mathrm{r}$};
  \node at (1.5,-0.55) {$\mathrm{r}$};
  \node at (1.5,-0.9) {$\mathrm{r}$};
  \node at (1.5,-1.25) {$\mathrm{r}$};
  \node at (1.5,-1.55) {$\mathbf{r}$};
  \node at (1.5,-1.95) {$\mathrm{r}$};
  \node at (1.5,-2.3) {$\mathrm{r}$};
  
  \node at (2,1) {$R_2$};
  \draw (1.8,0.8) -- (2.15,0.8);
  \node at (2,0.5) {$\mathrm{nr}$};
  \node at (2,0.15) {$\mathrm{nr}$};
  \node at (2,-0.2) {$\mathbf{r}$};
  \node at (2,-0.55) {$\mathrm{r}$};
  \node at (2,-0.9) {$\mathrm{r}$};
  \node at (2,-1.25) {$\mathrm{r}$};
  \node at (2,-1.55) {$\mathrm{r}$};
  \node at (2,-1.95) {$\mathbf{r}$};
  \node at (2,-2.3) {$\mathrm{r}$};
  
  \node at (2.5,1) {$R_3$};
  \draw (2.3,0.8) -- (2.65,0.8);
  \node at (2.5,0.5) {$\mathrm{nr}$};
  \node at (2.5,0.15) {$\mathrm{nr}$};
  \node at (2.5,-0.2) {$\mathrm{nr}$};
  \node at (2.5,-0.55) {$\mathbf{r}$};
  \node at (2.5,-0.9) {$\mathrm{r}$};
  \node at (2.5,-1.25) {$\mathrm{r}$};
  \node at (2.5,-1.55) {$\mathrm{r}$};
  \node at (2.5,-1.95) {$\mathrm{r}$};
  \node at (2.5,-2.3) {$\mathbf{r}$};
  \end{tikzpicture}
  \caption{Run: $W || R^3$} \label{fig:MoRun3}
  \end{minipage}
}
}
\vspace{-28pt}

\end{wrapfigure}

Consider the run $x$ of the system $W || R^2$ in Figure \ref{fig:MoRun2} where $W$ and $R$ are as defined in Figures~\ref{fig:Reader} and \ref{fig:Writer}.  We construct a run $y$ of the system $W || R^3$ (see Figure \ref{fig:MoRun3}) such that $x(W, R_1) \equiv_{2} y(W,R_1)$. The local run of process $R_3$ is obtained by (i) copying the run of $R_2$, and (ii) establishing the interleaving semantics as in the proof of Lemma \ref{mon-disj}.

\medskip



As mentioned in the above construction, if a local run of $x$ is $d$-bounded fair 
for some $d \in \Nat$, then it will be $2d$-bounded fair in the constructed 
run $y$. This observation leads to the following corollary.
 
\begin{cor}\label{mon-disj-cor}
Let $A$, $B$ be disjunctive process templates, $n \geq 2$, $b \in \Nat$ and $x \in \largesys$ with $\bfair{b}(x)$.
Then there exists $y \in \largesyse{n+1}$ with $\bfair{2b}(y)$ and $x(A,B_1) \equiv_{2} y(A,B_1).$
\end{cor}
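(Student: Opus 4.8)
The corollary is essentially a packaging of Lemma~\ref{mon-disj} together with the elementary observation that global bounded fairness is a special case of local bounded fairness for the full process set. The plan is to reduce $\bfair{b}(x)$ to $\lbfair{b}(x,\mP)$, apply the construction of Lemma~\ref{mon-disj} verbatim, and then argue that the resulting run $y$ is globally $2b$-bounded fair, not just locally $2b$-bounded fair on $\{A,B_1\}$.

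First I would note that $\bfair{b}(x)$ implies $\ufair(x)$ (every process moves infinitely often, in fact within every window of length $b$), and hence $x$ satisfies $\lbfair{b}(x,\{A,B_1\})$ as well; so the hypothesis of Lemma~\ref{mon-disj} is met and we obtain a run $y \in \largesyse{n+1}$ with $x(A,B_1) \equiv_{2} y(A,B_1)$. The key point is to re-examine the construction in the proof of Lemma~\ref{mon-disj}: we set $y(A)=x(A)$, $y(B_j)=x(B_j)$ for $j\le n$, and $y(B_{n+1})=x(B_i)$ for the chosen copy index $i$, then split exactly those global transitions in which $B_i$ and $B_{n+1}$ would move simultaneously into two consecutive transitions. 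This splitting inserts at most one extra stuttering step into \emph{every} local run of $y$ (including $B_{n+1}$ itself, whose copied moves get interleaved with those of $B_i$) for each move of $B_i$ in $x$. Since $\bfair{b}(x)$ guarantees that \emph{every} process $p$ of $\largesys$ moves at least once in every window of $b$ consecutive moments, and the splitting at most doubles the length of any such window (each move of $B_i$ adds one moment, and there can be at most $b$ such moves interfering with a window of length $b$), every original process moves within every window of length $2b$ in $y$; and the new process $B_{n+1}$ moves whenever $B_i$ moves, hence also within every window of length $2b$. Therefore $\bfair{2b}(y)$ holds.

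The remaining verification is that $x(A,B_1)\equiv_2 y(A,B_1)$, which is already established by Lemma~\ref{mon-disj} since $\{A,B_1\}\subseteq\mP$ and the local runs of $A$ and $B_1$ are affected in exactly the same way in the $\bfair{b}$ setting as in the $\lbfair{b}$ setting — the construction does not depend on which fairness notion the input run satisfies, only on its being a genuine run with copyable local behavior. The main obstacle, such as it is, is purely bookkeeping: making precise the claim that the transition-splitting blows up fairness windows by a factor of at most $2$ rather than some larger constant. This requires observing that between two consecutive moves of a fixed process $p\ne B_i$ in $x$ there can be at most $b-1$ moves of $B_i$ (since $p$ moves within every $b$-window), so at most $b-1$ stuttering steps are inserted for $p$ in that gap, and the gap of length $\le b$ in $x$ becomes a gap of length $\le 2b-1 \le 2b$ in $y$; the same bound applies to $B_{n+1}$ by symmetry with $B_i$. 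Everything else follows immediately from Lemma~\ref{mon-disj}.
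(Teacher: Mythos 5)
Your proposal is correct and follows essentially the same route as the paper: the paper derives the corollary directly from the observation that the construction in Lemma~\ref{mon-disj} adds at most one stuttering step per move of the copied process $B_i$ to every local run, so any $b$-bounded-fair local run of $x$ (hence every process, including the new $B_{n+1}$, which copies $x(B_i)$) becomes $2b$-bounded fair in $y$. Your additional window-counting bookkeeping just makes that factor-$2$ claim explicit.
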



Our second lemma is a \emph{bounding lemma} which states that any behavior of processes $A$ and $B_1$ in a disjunctive system $\largesys$ can be simulated up to bounded stuttering in a system $\largesyse{c}$, if $c$ is chosen to be sufficiently large and $n\geq c$.

\begin{lemma}[Bounding Lemma for Bounded Stutter Equivalence]\label{bound-disj}
Let $A, B$ be disjunctive process templates, $c = 2|Q_B| + 1$, $n \geq c$, $b \in \Nat$ and $x \in \largesys$ with $\lbfair{b}(x,\{A,B_1\})$. Then there exists $y \in \largesyse{c}$ with $\lbfair{(b \cdot c)}(y,\{A,B_1\})$ and $x(A, B_1) \equiv_{c} y(A,B_1)$.
\end{lemma}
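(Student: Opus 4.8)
The plan is to simulate the run $x \in \largesys$ inside $\largesyse{c}$ by keeping processes $A$ and $B_1$ intact and replacing the remaining $n-1$ copies of $B$ by only $c-1 = 2|Q_B|$ copies, while making sure that the guards that were satisfied in $x$ remain satisfied in the constructed run $y$. The key observation, as in Emerson and Kahlon~\cite{EmersonK00} and Au{\ss}erlechner et al.~\cite{AJK16}, is that a disjunctive guard $(\exists, G)$ only needs \emph{one} witness process in a state of $G$, so it suffices to keep, among $B_2,\dots,B_n$, one representative for each state that appears infinitely often and one for each state that appears only finitely often; this is exactly why $2|Q_B|$ extra copies suffice. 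Concretely, I would designate for each $q \in \visInf{}$ a process $B_i$ with $\appears^{B_i}(q)$ infinite and let one of the new copies shadow it so that it sits in $q$ at (a bounded-stretch image of) every moment of $\appears(q)$; and for each $q \in \visFin{}$ reserve another new copy that visits $q$ at the (image of the) last moment $l_q$. The local runs of these $2|Q_B|$ new copies are stitched together from pieces of the original local runs $x(B_i)$, using the neutral/loop structure of disjunctive templates (a process can idle in any state it can reach, since $\init$ is reachable and guards are existential) to travel between the segments.

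The construction proceeds in three steps. First, fix the assignment of the $2|Q_B|$ new copies to states of $Q_B$ as above and build, for each new copy, a candidate local path that visits the required states at the required moments, inserting idling (stuttering) segments in between. Second, interleave everything into a single global path $y$: whenever the schedule of $x$ moves a process that survives in $y$ (namely $A$, $B_1$, or one of the chosen representatives at a moment relevant to it), perform that move in $y$; additionally schedule the travel moves of the new copies. Here one must check that every scheduled local transition $(q,(\exists,G),q')$ is enabled in $y$, i.e., some other process of $y$ is currently in $G$ — this holds because the same transition was enabled in $x$, its witness was in some state $q'' \in G$ at that moment, and by construction $y$ contains a process sitting in $q''$ at the corresponding moment (either $B_1$/$A$ themselves, or the representative chosen for $q''$). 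Third, verify the quantitative bounds: since at most $c$ processes move in $y$ between two consecutive global steps that matter for $A$ or $B_1$ in $x$ (more precisely, each original step gets blown up by a factor of at most $c$), we get $x(A,B_1) \equiv_c y(A,B_1)$ and, from $\lbfair{b}(x,\{A,B_1\})$ and unconditional fairness, $\lbfair{(b\cdot c)}(y,\{A,B_1\})$ — and unconditional fairness of $y$ itself must be re-established by scheduling every new copy (and every surviving old copy) infinitely often, which is possible because disjunctive templates let any process keep looping.

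The main obstacle I expect is the bookkeeping that makes the blow-up factor \emph{uniformly} bounded by $c$ rather than merely finite. It is easy to produce \emph{some} stutter-equivalent run in $\largesyse{c}$, but to get the sharp bound $\equiv_c$ one has to be careful that between any two consecutive ``real'' moves of $A$ or $B_1$ one inserts at most $c-1$ auxiliary moves of the new copies — this requires amortizing the travel moves of the representatives across the timeline rather than doing them in bursts, and dealing separately with the prefix before the finitely-visited states have had their last occurrence and the suffix where only infinitely-visited states remain. A secondary subtlety is the treatment of deadlocked (finite) runs $x$: if $x$ ends in a deadlock, $y$ must end in a corresponding deadlock, so the new copies must be parked in states from which, together with the final configuration, no transition is enabled; this follows because disjunctive guards that were disabled in the final state of $x$ stay disabled when we only add processes in states that already occurred. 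Once these two points are handled, the enabledness check and the fairness accounting are routine, paralleling the monotonicity lemma (Lemma~\ref{mon-disj}) but with factor $c$ in place of $2$.
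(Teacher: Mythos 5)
Your high-level idea (keep $A,B_1$, add witness processes for the states visited in $x$) is the right one, but as stated the construction has genuine gaps, all of which the paper's proof is specifically engineered to close. First, the fairness accounting: $\lbfair{b'}(y,\{A,B_1\})$ requires \emph{unconditional} fairness of every process of $y$, including the auxiliary copies. A single representative per $q\in\visInf{}$ cannot both occupy $q$ whenever it is needed as a witness and move infinitely often, and your fallback --- ``disjunctive templates let any process keep looping'' --- is not an assumption of the model: nothing guarantees a self-loop or any enabled outgoing transition from the state a copy is parked in, so a copy left sitting in a flooded state kills unconditional fairness. The paper resolves exactly this by devoting \emph{two} processes to each $q\in\visInf{}$, which alternately follow the run of a process $B^{\inf}_q$ visiting $q$ infinitely often while the other stutters in $q$, and by \emph{evacuating} each $q\in\visFin{}$ after $l_q$ by copying the tail of $x(B_{\last_q})$ (which moves infinitely often because $x$ is unconditionally fair). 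This is also where $c=2|Q_B|+1$ comes from (two copies per infinitely visited state, one per finitely visited state); your ``one representative per state'' count would only need $|Q_B|$ and does not match the bound you are proving. Second, witness coverage: for $q\in\visFin{}$ it is not enough that the copy ``visits $q$ at the last moment $l_q$''; it must hold $q$ throughout $[f_q,l_q]$ (the paper has it copy $x(B_{\first_q})$ up to $f_q$ and then stutter in $q$ until $l_q$), otherwise a transition of $x$ taken between $f_q$ and $l_q$ whose only witness is $q$ need not be enabled in $y$.

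Third, your ``stitching/travel'' moves are transitions that do not occur in $x$, and you never argue they exist or are enabled; the paper avoids this entirely because every move of every flooding process is a verbatim copy of a move performed in $x$ at the same moment, so the invariant $Set(x_t)\subseteq Set(y_t)$ gives enabledness for free. This same design choice is what yields the sharp quantitative bounds you flag as your main obstacle but do not resolve: since each local run of $x$ is copied by at most $2|Q_B|$ processes of $y$, re-establishing interleaving semantics blows up each global step of $x$ into at most $c$ steps, which immediately gives $x(A,B_1)\equiv_c y(A,B_1)$ and $\lbfair{(b\cdot c)}(y,\{A,B_1\})$; with free-floating travel moves scheduled ``in between'' there is no such per-step accounting. (Your worry about deadlocked $x$ is moot: $\lbfair{b}(x,\{A,B_1\})$ is defined only for infinite runs.)
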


\begin{proof}
Let $x$ be a run of $\largesys$ where $\lbfair{b}(x,\{A,B_1\})$. We show how 
to construct a run $y$ of $\largesyse{c}$ where $\lbfair{(b \cdot c)}
(y,\{A,B_1\})$ and $x(A, B_1) \equiv_{c} y(A,B_1)$.

The basic idea is that, in order to ensure that all transitions in $y$
are enabled at the time they are taken, we ``flood'' every 
state $q$ that is visited in $x$ with one or more processes that 
enter $q$ and stay there. Additionally, we need to take care of 
fairness, which requires a more complicated construction that allows every 
such process to move infinitely often. Therefore, some processes have to 
leave the state they have flooded (if that state only appears finitely often 
in the original run), and every process needs to eventually enter a loop that 
allows it to move infinitely often. In the following, we construct such runs 
formally.

\textbf{Construction:}
\begin{enumerate}
\item \textbf{(Flooding with evacuation)}: To every $q \in \visFin{}(x)$, devote one process $B_{i_q}$ that copies $B_{\first_q}$ until the time $f_q$, then stutters in $q$ until time $l_q$ where it starts copying $B_{last_q}$ forever. Formally: 
 $$y(B_{i_q})=x(B_{\first_q})[0:f_q].(q)^{l_q-f_q}.x(B_{last_q})[l_q+1:\infty]$$
%
\item \textbf{(Flooding with fair extension)}: For every $q \in \visInf{}(x)$, let $B^{inf}_q$ be a process that visits $q$ infinitely often in $x$. We devote to $q$ two processes $B_{i_{q_1}}$ and $B_{i_{q_2}}$ that both copy $B_{\first_q}$ until the time $f_q$, and then stutter in $q$ until $B^{inf}_q$ reaches $q$ for the first time. After that, let $B_{i_{q_1}}$ and $B_{i_{q_2}}$ copy $B^{inf}_q$ in turns as follows: $B_{i_{q_1}}$ copies $B^{inf}_q$ until it reaches $q$ while $B_{i_{q_2}}$ stutters in $q$, then $B_{i_{q_2}}$ copies $B^{inf}_q$ until it reaches $q$ while $B_{i_{q_1}}$ stutters in $q$ and so on. 

\item Establish interleaving semantics as in the proof of Lemma \ref{mon-disj}.
\end{enumerate}

After steps 1 and 2, the following property holds: at any time $t$ we have 
that $Set(x_t) \subseteq Set(y_t)$, which guarantees that every transition 
along the run is enabled. Note that establishing the interleaving semantics 
preserves this property.

Finally, establishing interleaving semantics could introduce additional 
stuttering steps to the local runs of processes $A$ and $B_1$ whenever steps 1
 or 2 of the construction use the same local run from $x$ more than once 
(e.g. if $\exists q_i,q_j \in Q_B$ with $\first_{q_i} = \first_{q_j}$). A local 
run of $x$ can be used in the above construction at most $2|Q_B|$ times, 
therefore we have $x(A, B_1) \equiv_{c} y(A,B_1)$. Moreover, since the upper 
bound of consecutive stuttering steps in $A$ or $B_1$ is 
$(2|Q_B| + 1) \cdot b$, we get $\lbfair{(b \cdot c)}(y,\{A,B_1\})$.
\qed
\end{proof}
\paragraph{\bf Reader-Writer Example.}
Consider again the reader-writer system in Figures~\ref{fig:Reader} and 
\ref{fig:Writer}. For any run $x$ of $W{\parallel}R^n$, using the construction above we obtain a 
run $y$ of $W{\parallel}R^5$ (or even a smaller system) with $x(W, R_1) \equiv_{5} y(W,R_1)$.  

\subsection{Cutoffs for Specifications in \LTLmX under Bounded Fairness}\label{disj-LTL}

The PMCP for disjunctive systems with specifications from \LTLmX has 
been considered in several previous 
works~\cite{AJK16,EmersonK00,JS18}.
In the following we extend these results by proving cutoff results under 
bounded fairness.

\begin{theorem}[Cutoff for \LTLmX with Global Bounded Fairness]\label{LTL-b-fair}
Let $A$, $B$ be disjunctive process templates, $c = 2|Q_B| + 1$, $n \geq c$, and $\varphi(A, B^{(1)})$ a specification with $\varphi \in$ \LTLmX. Then:
$$ \left( \forall b \in \Nat :~ \fairSys{b}{n} \models \varphi(A, B^{(1)}) \right) \iff \left( \forall b' \in \Nat :~ \fairSys{b'}{c} \models \varphi(A, B^{(1)}) \right)$$
\end{theorem}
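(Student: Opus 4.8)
The plan is to prove the two directions of the biconditional separately, using the monotonicity corollary (Corollary~\ref{mon-disj-cor}) for one direction and the bounding lemma (Lemma~\ref{bound-disj}) for the other, together with the fact that \LTLmX is stutter-insensitive (the classical fact, which is also a special case of Theorem~\ref{theorem-stutter} where the bound $k$ plays no role). Throughout I use that a run of $\fairSys{b}{n}$ projects to a run of $\largesys$ that is globally $b$-bounded fair, and conversely every globally $b$-bounded fair run of $\largesys$ lifts to an infinite run of $\fairSys{b}{n}$; deadlocked finite runs of $\fairSys{b}{n}$ correspond exactly to deadlocked finite runs of $\largesys$, and these must be handled too — note a finite deadlocked run satisfies the $\LTLmX$ formula iff its (unique) stutter-free representative does, and deadlocks are preserved in both constructions since $Set(x_t) \subseteq Set(y_t)$ keeps transitions enabled (so the constructed run is deadlock-free whenever the original is infinite, and a deadlock in $x$ is mirrored by a deadlock in $y$ after the constructions, or can be treated directly).

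For the direction ``$\largesys$ side violates $\Rightarrow$ cutoff side violates'': suppose some $b'$ and some run $y \in \fairSys{b'}{c}$ with $(y,0) \not\models \varphi(A,B^{(1)})$. Projecting, $y$ is a $b'$-bounded fair run of $A \parallel B^c$. Apply the monotonicity corollary $n-c$ times (going from size $c$ up to size $n$, one process at a time) to obtain a run $x$ of $\largesys$ with $\bfair{2^{n-c}b'}(x)$ and $x(A,B_1) \equiv_2 y(A,B_1)$ (more precisely, the composition of the $\equiv_2$ steps gives $x(A,B_1) \equiv_{2^{n-c}} y(A,B_1)$, hence certainly bounded stutter equivalent, hence stutter equivalent in the classical sense). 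Since $\varphi$ is in \LTLmX and depends only on atomic propositions of $A$ and $B_1$, stutter-insensitivity gives $(x,0) \not\models \varphi(A,B^{(1)})$. Taking $b := 2^{n-c}b'$ shows $\fairSys{b}{n} \not\models \varphi(A,B^{(1)})$, contradicting the left-hand side. The converse direction is symmetric but uses the bounding lemma: from a $b$-bounded fair violating run $x$ of $\largesys$, Lemma~\ref{bound-disj} yields a run $y$ of $\largesyse{c}$ with $\lbfair{(b\cdot c)}(y,\{A,B_1\})$ — here I need global rather than local bounded fairness, so I instead invoke the global-fairness variant of the bounding construction (the same flooding-with-evacuation / fair-extension construction, which also yields $\bfair{(b\cdot c)}(y)$ since every added process eventually enters a loop and moves infinitely often, and each original local run is reused at most $2|Q_B|$ times so the per-step stutter blow-up is at most $c$), and $x(A,B_1) \equiv_c y(A,B_1)$. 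Again stutter-insensitivity of \LTLmX transfers the violation, so $\fairSys{b\cdot c}{c} \not\models \varphi(A,B^{(1)})$.

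The main obstacle I anticipate is bookkeeping around fairness and finite (deadlocked) runs rather than anything conceptually deep: one must check that the bounding construction, which as stated in Lemma~\ref{bound-disj} is phrased for \emph{local} bounded fairness of $\{A,B_1\}$, also delivers \emph{global} bounded fairness of the whole constructed system — this requires observing that the flooding processes, after evacuation, copy $B_{\last_q}$ or alternate copying $B^{\inf}_q$ and therefore each moves infinitely often with a bounded gap derived from $b$, and that the interleaving-repair step only inserts a bounded number of extra stutters per original transition. A secondary subtlety is that iterating the monotonicity corollary multiplies the fairness bound by $2$ each time, giving an exponential-in-$(n-c)$ bound; this is harmless because the statement quantifies ``$\forall b \in \Nat$'', so any finite bound suffices, but it should be stated cleanly. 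Finally, because $\varphi$ is $1$-indexed it suffices by symmetry of guarded protocols (as recalled in Section~\ref{sec:parameterized}) to reason about $B_1$ only, and the extension to $h$-indexed formulas follows exactly as in~\cite{EmersonK00,AJK15} by tracking $h$ distinguished $B$-processes through both constructions, which accounts for the additive $h$ in the cutoff of Table~\ref{tab:disj}.
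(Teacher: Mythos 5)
Your overall decomposition (iterate the monotonicity construction for one direction, use the flooding construction for the other, transfer the violation by stutter-insensitivity of \LTLmX) is the same as the paper's, and the monotonicity direction is fine. The gap is in the bounding direction, at the point where you claim a ``global-fairness variant'' of Lemma~\ref{bound-disj} that yields $\bfair{b\cdot c}(y)$ because ``every added process eventually enters a loop and moves infinitely often'' and the interleaving repair blows up each step by at most $c$. That argument only yields unconditional fairness plus local bounded fairness for $\{A,B_1\}$, not global bounded fairness: the flooding processes do not merely copy existing local runs, they also \emph{stutter} for periods that are not bounded by any function of $b$ and $|Q_B|$. An evacuation process for $q\in\visFin{}$ is idle from $f_q$ to $l_q$, and, worse, the two processes devoted to $q\in\visInf{}$ are repeatedly idle while waiting for $B^{\inf}_q$ to return to $q$; global $b$-bounded fairness of $x$ only bounds the gaps between \emph{moves} of $B^{\inf}_q$, not between its visits to the particular state $q$, so these waiting periods can grow without bound along the run. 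Consequently the constructed $y$ need not satisfy $\bfair{b'}(y)$ for \emph{any} $b'$, and the claimed counterexample in $\fairSys{b\cdot c}{c}$ does not exist as argued. (Choosing $b'$ after seeing the run does not help, precisely because the recurring waiting times can be unbounded within a single run.)

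This is exactly the difficulty the paper's proof is organized around: it first proves Lemma~\ref{periodic-CE}, extracting from any violating $b$-bounded-fair run an ultimately periodic counterexample $x=uv^{\omega}$ with $|u|,|v|\le 2\cdot|Q_A|\cdot|Q_B|^n\cdot b^{n+1}\cdot|Q_{\buchi{\neg\varphi}}|$ via a lasso in the run graph of $\fairSys{b}{n}$ with the automaton for $\neg\varphi$. Only then is the flooding construction of Lemma~\ref{bound-disj} applied: periodicity bounds the evacuation stutters by $|u|$ and the fair-extension stutters by $|u|+2|v|$, which together with the $(2|Q_B|+1)\cdot b$ interleaving stutters gives an explicit global bound $b'$ (depending on $n$, $b$, and $|Q_{\buchi{\neg\varphi}}|$, which is harmless). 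Your proof needs this extra step (or some substitute argument bounding the recurrence times of the infinitely visited states); as written, the bounding direction does not go through. Your remarks about deadlocked runs and about the exponential fairness blow-up when iterating Corollary~\ref{mon-disj-cor} are harmless side issues and do not affect this.
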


\noindent
We prove the theorem by proving two lemmas, one for each direction of the equivalence. 

\begin{lemma}[Monotonicity Lemma for \LTLmX]\label{mon-disj-LTL-b-fair}
Let $A$, $B$ be disjunctive process templates, $n\geq 1$, and $\varphi(A, B^{(1)})$ a specification with $\varphi \in $ \LTLmX. Then:
$$\left(\exists b \in \Nat :~ \fairSys{b}{n} \not \models \varphi(A, B^{(1)}) \right) \Rightarrow \left( \exists b' \in \Nat :~ \fairSys{b'}{n+1} \not \models \varphi(A, B^{(1)})\right)$$
\end{lemma}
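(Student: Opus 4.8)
The plan is to prove the statement directly, by transporting a counterexample run from the smaller bounded-fair system to the larger one via the monotonicity construction for disjunctive systems. First I would unwind the hypothesis: since $\varphi \in \LTLmX$ the promptness bound $k$ is irrelevant, so the assumption gives some $b \in \Nat$ and a run $x$ of $\fairSys{b}{n}$ with $x \not\models \varphi(A,B^{(1)})$. Dropping the counter component and using the correspondence established at the end of Section~\ref{sec:model}, $x$ induces a run $\bar x \in \largesys$ with $\bfair{b}(\bar x)$ (the case of a finite, i.e.\ deadlocked, $x$ is discussed below).

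The core step is to apply Corollary~\ref{mon-disj-cor} to $\bar x$: as $A,B$ are disjunctive and $\bfair{b}(\bar x)$ holds, there is a run $y \in \largesyse{n+1}$ with $\bfair{2b}(y)$ and $\bar x(A,B_1) \equiv_{2} y(A,B_1)$ (for $n = 1$ the same construction as in the proof of Lemma~\ref{mon-disj} applies). Running the $n{+}1$ fairness counters alongside $y$ turns it into an infinite run of $\fairSys{2b}{n+1}$. To conclude I would transfer the violation of $\varphi$ from $\bar x$ to $y$: every atomic proposition of $\varphi(A,B^{(1)})$ lies in $Q_A \cup (Q_B \times \{1\})$, so whether a run satisfies $\varphi$ depends only on its projection onto $(A,B_1)$; moreover $\equiv_{2}$ refines ordinary stutter equivalence and \LTLmX is stutter insensitive, so $\bar x(A,B_1)$ and $y(A,B_1)$ agree on all \LTLmX formulas over these propositions. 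Hence $y \not\models \varphi(A,B^{(1)})$, and with $b' = 2b$ we obtain $\fairSys{b'}{n+1} \not\models \varphi(A,B^{(1)})$.

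I expect the only delicate point to be the case of a finite (deadlocked) counterexample run $x$: since existential guards only become easier to satisfy as processes are added, a deadlock of $\largesys$ need not be a deadlock of $\largesyse{n+1}$. The fix is to exploit that the construction lets the new process $B_{n+1}$ simply copy some existing $B_i$, so in every configuration — in particular the final one — the set of occupied local states is unchanged; therefore no extra transition becomes enabled, the constructed run deadlocks exactly when $\bar x$ does, and its $(A,B_1)$-projection is still equal up to bounded stuttering to $\bar x(A,B_1)$, so the argument above goes through unchanged. (If deadlock states are equipped with self-loops so that all runs are infinite, this subtlety does not arise.)
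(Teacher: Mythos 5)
Your proof is correct and follows essentially the same route as the paper: extract an infinite $\bfair{b}$ counterexample run $x$ of $\largesys$ from $\fairSys{b}{n}$, apply Corollary~\ref{mon-disj-cor} to get $y \in \largesyse{n+1}$ with $\bfair{2b}(y)$ and $x(A,B_1) \equiv_{2} y(A,B_1)$, and conclude via stutter insensitivity of \LTLmX, taking $b' = 2b$. One caution about your closing aside: it is not needed (fairness and the \LTLmX semantics are defined only for infinite runs, so a deadlocked run never serves as a counterexample in this setting), and its key claim is in fact false for disjunctive systems --- duplicating $B_i$ can newly enable a transition of $B_i$ (or of its copy) whose existential guard mentions the very state the two share, since the guard only asks that \emph{some other} process occupy that state --- which is exactly why deadlock detection needs a separate analysis in this line of work.
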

\begin{proof}

Assume $\exists b \in \Nat :~ \fairSys{b}{n} \not \models \varphi(A, B^{(1)})$. 
Then there exists a run $x$ of $\largesys$ where $x$ is $\bfair{b}(x)$ and $x 
\not \models \varphi(A, B^{(1)})$. According to Corollary \ref{mon-disj-cor} 
there exists $y$ of $\largesyse{n+1}$ where $\bfair{2b}(y)$ and  $x(A, B_1) 
\equiv_{2} y(A,B_1)$, which guarantees that $y \not \models \varphi(A, B^{(1)}
)$.\qed
\end{proof}


For the corresponding bounding lemma, our construction is based on that of 
Lemma~\ref{bound-disj}. However, the local runs resulting from that 
construction might stutter in some local states for an unbounded time (e.g. 
local runs devoted for states in $\visFin{F}$). To bound stuttering in such 
constructions, given an arbitrary run of a system $\largesys$, we first show 
that whenever there exists a bounded-fair run that violates a specification 
in \LTLmX, then there also exists an ultimately periodic run with the same 
property.

A (non-deterministic) \emph{B\"uchi automaton} is a tuple $\mA = (\Sigma ,Q_{\mA}, \delta, 
a_0, \alpha)$, where $\Sigma$ is a finite alphabet, $Q_{\mA}$ is a finite set of 
states, $\delta: Q_{\mA} \times \Sigma \rightarrow 2^{Q_{\mA}}$ is a transition function, 
$a_0 \in Q_{\mA}$ is an initial state, and $\alpha \subseteq Q_{\mA}$ is a B\"uchi 
acceptance condition. Given an \LTL specification $\varphi$, we denote by $\buchi{\varphi}$ the B\"uchi automaton that accepts exactly all words that satisfy $\varphi$ \cite{vardi1986automata}.

\begin{lemma}[Ultimately Periodic Counter-Example]\label{periodic-CE}
Let $\varphi \in \LTL$ and $b \in \Nat$. If 
$\fairSys{b}{n} \not \models \varphi$ then there exists a run $x=uv^{\omega}$ 
of $\largesys$ with $\bfair{b}(x)$, and $x \not \models \varphi$, where $u,v$ 
are finite paths, and $|u|, |v| \leq 2 \cdot |Q_A| \cdot |Q_B|^n \cdot b^{n+1}
 \cdot |Q_{\buchi{\neg \varphi}}|$.
\end{lemma}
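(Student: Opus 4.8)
The plan is to build the standard synchronized product of the bounded-fair system $\fairSys{b}{n}$ with the Büchi automaton $\buchi{\neg\varphi}$ for the negation of $\varphi$, and then apply the usual lasso argument: a nonempty $\omega$-regular language (in particular, a nonempty language recognized by a Büchi automaton) contains an ultimately periodic word $uv^\omega$ whose stem and loop lengths are bounded by the number of states of the recognizing automaton. First I would make the state space precise. Since $\fairSys{b}{n} = A \parallel_b B^n$ augments $\largesys$ with one counter per process taking values in $\{0,\dots,b\}$ (plus, implicitly, a failure sink), its reachable non-failure state space has size at most $|Q_A| \cdot |Q_B|^n \cdot (b+1)^{n+1}$, which is bounded by $|Q_A|\cdot|Q_B|^n\cdot b^{n+1}$ up to the constant absorbed into the factor $2$ in the claimed bound (one can also note that the counter of the process that just moved is $0$, trimming one factor, but the stated bound already leaves room for this). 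The product automaton with $\buchi{\neg\varphi}$ therefore has at most $2\cdot|Q_A|\cdot|Q_B|^n\cdot b^{n+1}\cdot|Q_{\buchi{\neg\varphi}}|$ states.

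Next I would argue that $\fairSys{b}{n}\not\models\varphi$ yields a nonempty accepting language in this product. By definition of $\models$ for \LTLmX\ (the $k$-independent, classical case), $\fairSys{b}{n}\not\models\varphi$ means there is a run $x$ of $\largesys$ that is $\bfair{b}$ and with $x\not\models\varphi$, equivalently $x\models\neg\varphi$. By the characterization of bounded-fair runs given in the paper — the infinite runs of $\fairSys{b}{n}$ project exactly onto the $\bfair{b}$ runs of $\largesys$ — this $x$ lifts to an infinite run $\tilde x$ of $\fairSys{b}{n}$. Feeding the trace of $\tilde x$ (labeled over $AP = Q_A\cup(Q_B\times\{1,\dots,n\})$) into $\buchi{\neg\varphi}$, which accepts precisely the models of $\neg\varphi$, gives an accepting run in the product. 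Hence the product has a reachable accepting lasso.

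Then the core step: by the lasso/pumping lemma for Büchi automata, any nonempty accepting run can be chosen to be of the form (reachable path to an accepting state of length $\le$ number of states) followed by (a cycle through that accepting state of length $\le$ number of states). Projecting this lasso back onto $\fairSys{b}{n}$ and then onto $\largesys$ gives $x = uv^\omega$ with $|u|,|v|$ bounded by the product size computed above, i.e. $\le 2\cdot|Q_A|\cdot|Q_B|^n\cdot b^{n+1}\cdot|Q_{\buchi{\neg\varphi}}|$. Two sanity checks complete the argument: that the projected $x$ is still a genuine run of $\largesys$ (immediate, since product transitions refine $\largesys$ transitions), that it still satisfies $\neg\varphi$ (because the automaton component accepts, and $\buchi{\neg\varphi}$ is equivalent to $\neg\varphi$, and \LTLmX\ satisfaction of an ultimately periodic word depends only on the word, not on how it was generated), and crucially that $x$ is still $\bfair{b}$ — this holds because the run visits no failure state (the lasso lives entirely in the non-failure part of $\fairSys{b}{n}$, as the failure sink is non-accepting and has no outgoing transitions), so by the projection characterization the counters never exceed $b$, which is exactly $\bfair{b}$.

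The main obstacle is not any single deep step but making the bookkeeping airtight: pinning down the exact size of the reachable state space of $\fairSys{b}{n}$ (including the role of the failure state and whether the counters contribute $(b+1)^{n+1}$ or $(b+1)^n$), and verifying that the lasso constructed in the product, when projected down, is simultaneously (i) a valid maximal — here infinite — run of $\largesys$, (ii) $b$-bounded fair, and (iii) a counterexample to $\varphi$. The fairness preservation is the point most easily overlooked: it rests entirely on the fact that the accepting lasso cannot pass through the failure sink, which I would state explicitly.
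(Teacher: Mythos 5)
Your proposal follows essentially the same route as the paper's proof: build the product (the paper's ``run graph'') of $\fairSys{b}{n}$ with $\buchi{\neg\varphi}$, extract an accepting lasso whose stem and loop are shortened by cycle removal to at most (twice) the number of product states, and project back to $\largesys$, with $\bfair{b}$ preserved because the lasso never enters the failure state. The only looseness is your suggestion that the $(b+1)^{n+1}$ counter valuations are absorbed by the factor $2$ in front of $b^{n+1}$ (false for small $b$), but the paper's own accounting of the counter component is equally coarse, so this does not set your argument apart from theirs.
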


A formal proof of the lemma can be found in Appendix~\ref{app:proof4}.
Now, we have all the ingredients to prove the bounding lemma for the case of 
\LTLmX specifications and (global) bounded fairness.

\begin{lemma}[Bounding Lemma for \LTLmX]\label{bound-disj-LTL-b-fair-1}
Let $A$, $B$ be disjunctive process templates, $c = 2|Q_B| + 1$, $n \geq c$, and $\varphi(A, B^{(1)})$ a specification with $\varphi \in$ \LTLmX. Then:
$$\left(\exists b \in \Nat :~ \fairSys{b}{n} \not \models \varphi(A, B^{(1)}) \right) \Rightarrow \left(\exists b' \in \Nat :~ \fairSys{b'}{c} \not \models \varphi(A, B^{(1)})\right)$$
\end{lemma}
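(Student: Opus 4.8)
The plan is to combine the three ingredients assembled just before this statement: the flooding-with-evacuation/fair-extension construction of Lemma~\ref{bound-disj}, the ultimately periodic counterexample of Lemma~\ref{periodic-CE}, and the bounded-stutter-insensitivity of \LTLmX (which is the $\F_{\p}$-free special case of Theorem~\ref{theorem-stutter}, where stutter equivalence suffices and the bound $k$ is irrelevant). First I would assume $\exists b \in \Nat: \fairSys{b}{n} \not\models \varphi(A,B^{(1)})$. By Lemma~\ref{periodic-CE} there is an ultimately periodic run $x = uv^\omega$ of $\largesys$ with $\bfair{b}(x)$ and $x \not\models \varphi(A,B^{(1)})$; in particular $x$ is unconditionally fair and hence $\lbfair{b}(x,\{A,B_1\})$, so Lemma~\ref{bound-disj} applies to it.

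Next I would apply the construction of Lemma~\ref{bound-disj} to $x$ to obtain a run $y \in \largesyse{c}$, with $c = 2|Q_B|+1$, such that $x(A,B_1) \equiv_{c} y(A,B_1)$ and $\lbfair{(b\cdot c)}(y,\{A,B_1\})$. Since $\varphi \in \LTLmX$ is stutter-insensitive and refers only to atomic propositions over $Q_A \cup (Q_B\times\{1\})$, from $x \not\models \varphi(A,B^{(1)})$ and $x(A,B_1) \equiv_{c} y(A,B_1)$ we get $y \not\models \varphi(A,B^{(1)})$ (using Theorem~\ref{theorem-stutter} / Corollary~\ref{cor-stutter} with the irrelevant bound, i.e. the classical stutter-invariance of \LTLmX). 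Setting $b' = b\cdot c$, the run $y$ witnesses $\fairSys{b'}{c} \not\models \varphi(A,B^{(1)})$, which is the desired conclusion.

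The main obstacle is that the flooding construction of Lemma~\ref{bound-disj}, applied to an arbitrary run, can produce local runs that stutter in a state for an unbounded stretch of time: a process devoted to a state $q \in \visFin{}(x)$ stutters in $q$ from moment $f_q$ until $l_q$, and $l_q - f_q$ is not a priori bounded, so the resulting run need not be $\lbfair{(b\cdot c)}$ for any fixed $b$. This is precisely why Lemma~\ref{periodic-CE} is needed: by first passing to an \emph{ultimately periodic} counterexample $uv^\omega$, every state that appears only finitely often in $x$ must already appear within the prefix $uv$, so $l_q - f_q \leq |uv|$ is finite, and moreover a state appearing infinitely often is hit at least once per period, bounding the stuttering stretches used in step~2 of the construction as well. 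One must check carefully that performing the flooding on $uv^\omega$ yields a run that is again ultimately periodic (or at least one whose local $A,B_1$ runs still exhibit only bounded stuttering), so that the bound $b\cdot c$ on consecutive stuttering steps claimed in Lemma~\ref{bound-disj} genuinely holds for $y$; this bookkeeping on the evacuation phase is the delicate part, but it is exactly the point Lemma~\ref{periodic-CE} was set up to handle.
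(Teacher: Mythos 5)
Your overall route is the paper's: first invoke Lemma~\ref{periodic-CE} to replace the counterexample by an ultimately periodic one, then apply the flooding construction of Lemma~\ref{bound-disj} to it, and conclude $y \not\models \varphi(A,B^{(1)})$ from $x(A,B_1)\equiv_c y(A,B_1)$ by stutter invariance of \LTLmX. The gap is in your final step. To witness $\fairSys{b'}{c} \not\models \varphi(A,B^{(1)})$ the constructed run $y$ of $\largesyse{c}$ must be \emph{globally} $b'$-bounded fair, i.e.\ $\bfair{b'}(y)$: the infinite runs of the bounded-fair system $\fairSys{b'}{c}$ are exactly the globally $b'$-bounded fair runs of $\largesyse{c}$, so \emph{every} process of the cutoff system, in particular each flooding process introduced in steps~1 and~2 of the construction, must move within every window of $b'$ steps, or else its counter overflows and $y$ does not correspond to an infinite run of $\fairSys{b'}{c}$ at all. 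Lemma~\ref{bound-disj} only yields $\lbfair{b\cdot c}(y,\{A,B_1\})$, which constrains $A$ and $B_1$ but says nothing about the other $c-1$ processes; hence your choice $b'=b\cdot c$ is insufficient, since a process devoted to some $q\in\visFin{}$ stutters in $q$ for $l_q-f_q$ steps, which can exceed $b\cdot c$ arbitrarily.

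Your ``main obstacle'' paragraph contains the right idea but attaches it to the wrong conclusion: the $\lbfair{(b\cdot c)}$ claim of Lemma~\ref{bound-disj} holds for arbitrary locally bounded fair $x$ (the extra stuttering of $A,B_1$ only comes from each local run of $x$ being reused at most $2|Q_B|$ times), so periodicity is not needed to rescue it. What periodicity buys is precisely the missing \emph{global} bound: with $x=uv^\omega$, the evacuation stuttering of step~1 is at most $|u|$, the stuttering in step~2 is at most $|u|+2|v|$, and establishing interleaving adds at most $(2|Q_B|+1)\cdot b$; since Lemma~\ref{periodic-CE} bounds $|u|,|v|$ by $2\cdot|Q_A|\cdot|Q_B|^n\cdot b^{n+1}\cdot|Q_{\buchi{\neg\varphi}}|$, one gets $\bfair{b'}(y)$ for an explicit $b'$ depending on these quantities (the paper takes $b' = (2|Q_B|+1)\cdot b + 6\cdot|Q_A|\cdot|Q_B|^n\cdot b^{n+1}\cdot|Q_{\buchi{\neg\varphi}}|$), not merely on $b\cdot c$. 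With that correction — explicitly verifying global bounded fairness of $y$ and choosing this larger $b'$ — your argument coincides with the paper's proof.
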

\begin{proof}
Assume $\exists b \in \Nat: \fairSys{b}{n} \not \models \varphi(A, B^{(1)})$. Then by Lemma~\ref{periodic-CE} there is a run $x=uv^\omega$ of $\largesys$, where $\bfair{b}(x)$ and $|u|, |v| \leq 2 \cdot |Q_A| \cdot |Q_B|^n \cdot b^{n+1} \cdot |Q_{\buchi{\neg \varphi}}|$. According to Lemma \ref{bound-disj}, we can construct out of $x$ a run $y$ of $\largesyse{c}$ where $\lbfair{b''}(y,\{A,B_1\})$, and $x(A, B_1) \equiv_{d} y(A,B_1)$ with $d = 2|Q_B| + 1$ and $b''= b \cdot d$. The latter guarantees that $y \not \models \varphi(A, B^{(1)})$.  We still need to show that $\bfair{b'}(y)$ for some $b' \in \Nat$. As $x=uv^\omega$, we observe that the construction of Lemma~\ref{bound-disj} ensures the following:
\begin{itemize}
\item The number of consecutive stuttering steps per process introduced in step 1 is bounded by $|u|$.
\item The number of consecutive stuttering steps introduced in step 2 for a given process is bounded by $|u|+2|v|$ because $B^{inf}_q$ needs up to $|u|+|v|$ steps to reach $q$, and one of the processes has to wait for up to $|v|$ additional global steps before it can move. 
\end{itemize} 

In addition to the stuttering steps introduced in step $1$ and $2$, if more than one of the constructed processes simulate the same local run of $x$ then establishing the interleaving semantics would be required, which in turn introduces additional stuttering steps. Therefore the upper bound of consecutive stuttering steps introduced in step $3$ of the construction is $(2|Q_B| + 1)  \cdot b$. Therefore $\bfair{b'}(y)$ where $b' = (2|Q_B| + 1) \cdot b + 6 \cdot |Q_A| \cdot |Q_B|^n \cdot b^{n+1} \cdot |Q_{\buchi{\neg \varphi}}|$.
\qed
\end{proof}

\begin{remark}
With a more complex construction that uses a stutter-insensitive automaton 
$\mathcal{A}$~\cite{etessami99} to represent the specification and considers 
runs of the composition of system and automaton, we can obtain a much smaller 
$b'$ that is also independent of $n$. This is based on the observation that if 
in $y$ some process is consecutively stuttering for more than $|\largesyse{c} 
\times \mathcal{A}|$ steps, then there must be a repetition of states from 
the product in this time, and we can simply cut the infix between the 
repeating states from the constructed run $y$.
\end{remark}

\subsection{Cutoffs for Specifications in \promptmX}\label{disj-prompt}

\LTL specifications cannot enforce boundedness of the time that elapses 
before a liveness 
property is satisfied. Prompt-LTL solves this problem by introducing the 
prompt-eventually operator explained in Section~\ref{promptLTL definition}. 
Since we consider concurrent asynchronous systems, the satisfaction of a 
Prompt-LTL formula can also depend on the scheduling of processes. If 
scheduling can introduce unbounded delays for a process, then promptness can 
in general not be guaranteed. Hence, non-trivial \Prompt specifications can 
\emph{only} be satisfied under the assumption of bounded fairness, and therefore this 
is the only case we consider here.

\begin{theorem}[Cutoff for \PromptmX with Local Bounded Fairness]\label{PromptLTL-b-fair}
Let $A$, $B$ be disjunctive process templates, $c=2|Q_B| + 1$ $n \geq c$, and $\varphi(A, B^{(1)})$ a specification with $\varphi \in$ \PromptmX. Then:
$$\largesyse{c} \lbmodels \varphi(A, B^{(1)}) ~\iff~ \largesys \lbmodels \varphi(A, B^{(1)}).$$
\end{theorem}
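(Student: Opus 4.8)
The plan is to prove the equivalence by contraposition, mirroring the two-lemma structure of Theorem~\ref{LTL-b-fair}: a \emph{monotonicity} step showing $\largesyse{c}\nlbmodels\varphi(A,B^{(1)})\Rightarrow\largesys\nlbmodels\varphi(A,B^{(1)})$, and a \emph{bounding} step showing $\largesys\nlbmodels\varphi(A,B^{(1)})\Rightarrow\largesyse{c}\nlbmodels\varphi(A,B^{(1)})$; together these give the claim (recall $\largesys=\largesyse{n}$ and $c=2|Q_B|+1\leq n$). Unfolding the definition, $\largesyse{m}\nlbmodels\varphi(A,B^{(1)})$ means there is a bound $b\in\Nat$ such that for every $k\in\Nat$ there is a run $x\in\largesyse{m}$ with $\lbfair{b}(x,\{A,B_1\})$ and $(x,0,k)\not\models\varphi(A,B^{(1)})$. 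The only ingredient beyond the \LTLmX proof is Corollary~\ref{cor-stutter}: if two runs have $d$-stutter-equivalent $(A,B_1)$-projections, a failure of $\varphi(A,B^{(1)})$ at promptness bound $k$ transfers to a failure at bound $k/d$; here we use that $\varphi(A,B^{(1)})$ refers only to $A$ and $B_1$ (so its truth on a run depends only on the $(A,B_1)$-projection) and that $0\in f(0)$ (the $0$th block of each block decomposition starts at position $0$), so the transfer applies at position $0$. The key bookkeeping point is that since the quantifier prefix of $\lbmodels$ is $\forall b\,\exists k\,\forall x$, to witness $\nlbmodels$ we may choose the run \emph{after} $k$ is given; hence in each direction we commit to a single fairness bound and, for each $k$, invoke the hypothesis at the \emph{inflated} promptness bound $d\cdot k$, then divide back down by $d$ via Corollary~\ref{cor-stutter}.

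\emph{Monotonicity.} It suffices to show $\largesyse{m}\nlbmodels\varphi(A,B^{(1)})\Rightarrow\largesyse{m+1}\nlbmodels\varphi(A,B^{(1)})$ for every $m$ with $c\leq m<n$ and then iterate from $c$ up to $n$. Assume $\largesyse{m}\nlbmodels\varphi$ with witnessing bound $b$. Given an arbitrary $k\in\Nat$, apply this at promptness bound $2k$ to obtain $x\in\largesyse{m}$ with $\lbfair{b}(x,\{A,B_1\})$ and $(x,0,2k)\not\models\varphi$. Since $\init_B\in Q_B$ we have $m\geq c\geq 3\geq 2$, so Lemma~\ref{mon-disj} yields a run $y\in\largesyse{m+1}$ with $\lbfair{2b}(y,\{A,B_1\})$ and $x(A,B_1)\equiv_2 y(A,B_1)$. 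Corollary~\ref{cor-stutter} (with $i=0$, $d=2$) gives $(y,0,k)\not\models\varphi$. As $k$ was arbitrary, $\largesyse{m+1}\nlbmodels\varphi$ with witnessing bound $2b$; after $n-c$ iterations the witnessing bound is $2^{\,n-c}b$, which is finite, completing this direction.

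\emph{Bounding.} Assume $\largesys\nlbmodels\varphi(A,B^{(1)})$ with witnessing bound $b$. Given an arbitrary $k\in\Nat$, apply this at promptness bound $c\cdot k$ to obtain $x\in\largesys$ with $\lbfair{b}(x,\{A,B_1\})$ and $(x,0,ck)\not\models\varphi$. Since $n\geq c=2|Q_B|+1$, Lemma~\ref{bound-disj} produces a run $y\in\largesyse{c}$ with $\lbfair{(b\cdot c)}(y,\{A,B_1\})$ and $x(A,B_1)\equiv_c y(A,B_1)$, whence Corollary~\ref{cor-stutter} (with $i=0$, $d=c$) gives $(y,0,k)\not\models\varphi$. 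As $k$ was arbitrary, $\largesyse{c}\nlbmodels\varphi$ with witnessing bound $b\cdot c$.

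\emph{Main obstacle.} Unlike the \LTLmX bounding lemma (Lemma~\ref{bound-disj-LTL-b-fair-1}), no ultimately-periodic-counterexample detour (Lemma~\ref{periodic-CE}) is needed here: that detour only served to recover \emph{global} bounded fairness of the constructed run, whereas the present statement asks only for \emph{local} bounded fairness of the processes named in $\varphi$, which Lemma~\ref{bound-disj} already delivers with an explicit bound. The real subtlety — the step to get right — is the interaction between the $\forall b\,\exists k$ shape of $\lbmodels$ and the $k\mapsto k/d$ degradation of the promptness bound in Corollary~\ref{cor-stutter}: one must fix a single fairness bound ($2^{\,n-c}b$, resp.\ $bc$) and, for each $k$, obtain the required run by feeding the larger promptness bound $dk$ into the hypothesis. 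A secondary point to verify is that the local-fairness hypothesis hidden in $\lbmodels$ really supplies $\lbfair{b}(x,\{A,B_1\})$ — bounded fairness for \emph{every} process named in $\varphi(A,B^{(1)})$, including $A$ — so that Lemmas~\ref{mon-disj} and~\ref{bound-disj} apply verbatim and so that the bounded fairness they output (again for $\{A,B_1\}$) is exactly what $\lbmodels$ demands of the target system.
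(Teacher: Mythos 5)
Your proposal is correct and follows essentially the same route as the paper: the equivalence is split into a monotonicity step (Lemma~\ref{mon-disj} plus Corollary~\ref{cor-stutter} with $d=2$, feeding the hypothesis at promptness bound $2k$) and a bounding step (Lemma~\ref{bound-disj} plus Corollary~\ref{cor-stutter} with $d=c$, feeding $c\cdot k$), exactly as in Lemmas~\ref{mon-disj-PromptLTL} and~\ref{bound-disj-ProomptLTL-lb}. Your only additions — making the iteration of monotonicity from $c$ up to $n$ explicit with the resulting fairness bound $2^{\,n-c}b$, and noting that the ultimately-periodic-counterexample detour is not needed under local bounded fairness — are consistent with, and merely spell out, what the paper leaves implicit.
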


\noindent
Again, we prove the theorem by proving a monotonicity and a bounding lemma. Note that $\largesys$ $\nlbmodels$ $\varphi(A, B^{(1)})$ iff 
$$\exists b \in \Nat~ \forall k \in \Nat~ \exists x \in \largesys{:}~ \lbfair{b}(x,\{A, B^{(1)}\}) \land (x,0,k) \not \models \varphi(A, B^{(1)}).$$

\begin{lemma}[Monotonicity Lemma for \PromptmX]\label{mon-disj-PromptLTL}
Let $A$, $B$ be disjunctive process templates, $n \geq 2$, and $\varphi(A, B^{(1)})$ a specification with $\varphi \in $ \PromptmX. Then:
$$ \largesys \nlbmodels \varphi(A, B^{(1)}) ~\Rightarrow~ \largesyse{n+1} \nlbmodels \varphi(A, B^{(1)}) .$$
\end{lemma}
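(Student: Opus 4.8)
The plan is to derive the lemma directly from the Monotonicity Lemma for bounded stutter equivalence (Lemma~\ref{mon-disj}) and the corollary of bounded stutter insensitivity (Corollary~\ref{cor-stutter}), so that no new construction is needed. Unfolding the definition of $\nlbmodels$ recalled just above the lemma, the hypothesis $\largesys \nlbmodels \varphi(A, B^{(1)})$ provides a \emph{single} bound $b \in \Nat$ such that for every $k \in \Nat$ there is a run $x$ of $\largesys$ with $\lbfair{b}(x,\{A,B_1\})$ and $(x,0,k) \not\models \varphi(A,B^{(1)})$. Our goal is to produce a bound $b' \in \Nat$, uniform in $k'$, such that for every $k' \in \Nat$ there is a run $y$ of $\largesyse{n+1}$ with $\lbfair{b'}(y,\{A,B_1\})$ and $(y,0,k') \not\models \varphi(A,B^{(1)})$. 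I will take $b' = 2b$.

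Fix an arbitrary $k' \in \Nat$. First I would instantiate the hypothesis at $k := 2k'$, obtaining a run $x$ of $\largesys$ with $\lbfair{b}(x,\{A,B_1\})$ and $(x,0,2k') \not\models \varphi(A,B^{(1)})$; since $\varphi(A,B^{(1)})$ mentions only atomic propositions of $A$ and $B_1$, this is equivalent to $(x(A,B_1),0,2k') \not\models \varphi$. Next I would apply Lemma~\ref{mon-disj} to $x$, yielding a run $y \in \largesyse{n+1}$ with $\lbfair{2b}(y,\{A,B_1\})$ and $x(A,B_1) \equiv_{2} y(A,B_1)$. Then I would invoke Corollary~\ref{cor-stutter} with $d=2$, words $w = x(A,B_1)$ and $w' = y(A,B_1)$, position $i = 0$, and bound $2k'$: from $(x(A,B_1),0,2k') \not\models \varphi$ it gives $(y(A,B_1),j,k') \not\models \varphi$ for all $j \in f(0)$. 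Since position $0$ always lies in the first block, $0 \in f(0)$, hence $(y(A,B_1),0,k') \not\models \varphi$, i.e.\ $(y,0,k') \not\models \varphi(A,B^{(1)})$. As $b' = 2b$ is independent of $k'$, this establishes $\largesyse{n+1} \nlbmodels \varphi(A,B^{(1)})$.

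The routine points that need care are: that $\lbfair{}$ subsumes unconditional fairness, so I must record that Lemma~\ref{mon-disj} preserves it (it does, as its proof notes); and the bookkeeping of the two independent factors of $2$ — one on the fairness bound ($b \mapsto 2b$, supplied by Lemma~\ref{mon-disj}) and one on the promptness bound ($k' \mapsto 2k'$ fed into the hypothesis, undone by the division in Corollary~\ref{cor-stutter}). The only genuinely delicate point is the quantifier structure $\exists b\,\forall k\,\exists x$ of $\nlbmodels$: I must commit to the witness bound $b' = 2b$ before seeing $k'$, which is legitimate precisely because Lemma~\ref{mon-disj} converts \emph{any} $b$-bounded-fair run into a $2b$-bounded-fair one, uniformly in the run. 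I expect this to be the main (and essentially only) obstacle, and it is handled by the order of quantifier elimination above.
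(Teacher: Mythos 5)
Your proposal is correct and follows essentially the same route as the paper: unfold $\nlbmodels$, instantiate the promptness bound at $2k'$, apply Lemma~\ref{mon-disj} to get a $2b$-locally-bounded-fair run of $\largesyse{n+1}$ that is $2$-stutter equivalent on $(A,B_1)$, and then use Corollary~\ref{cor-stutter} (with $0 \in f(0)$) to halve the bound back to $k'$, yielding the uniform witness $b'=2b$. Your explicit bookkeeping of the quantifier order and of the two factors of $2$ matches what the paper does implicitly.
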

\begin{proof}

Assume $\largesys \nlbmodels \varphi(A, B^{(1)})$. Then there exists $b \in 
\Nat$ such that $\forall k \in \Nat$ there is a run $x$ of $\largesys$ where 
$\lbfair{b}(x,\{A, B^{(1)}\})$, and $(x,0,2 \cdot k) \not \models 
\varphi(A, B^{(1)})$. Then according to Lemma \ref{mon-disj} there exists $y$ of $\largesyse{n+1}$ where $\lbfair{2b}(y,\{A, B^{(1)}\})$ and  $x(A, B_1) \equiv_{2} y(A,B_1)$, which guarantees, according to Corollary \ref{cor-stutter}, that $(y,0,k) \not \models \varphi(A, B^{(1)})$. As a consequence there exists $b \in \Nat$ such that $\forall k \in \Nat$ there is a run $y$ of $\largesyse{c}$ where $\lbfair{2b}(y,\{A, B^{(1)}\})$ and $(y,0, k) \not \models 
\varphi(A, B^{(1)})$, thus $\largesyse{c} \nlbmodels \varphi(A, B^{(1)})$.
\qed
\end{proof}

Using the same argument of the above proof but by using Corollary \ref{mon-disj-cor} instead of Lemma \ref{mon-disj} to construct the globally bounded fair counter example, we obtain the following: 
\begin{cor}\label{mon-disj-PromptLTL-cor}
Let $A$, $B$ be disjunctive process templates, $n \geq 2$, and $\varphi(A, B^{(1)})$ a specification with $\varphi \in $ \PromptmX. Then:
$$\largesys \ngbmodels \varphi(A, B^{(1)}) ~\Rightarrow~ \largesyse{n+1} \ngbmodels \varphi(A, B^{(1)}) .$$
\end{cor}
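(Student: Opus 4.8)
The plan is to mirror the proof of Lemma~\ref{mon-disj-PromptLTL} almost verbatim, substituting the globally-bounded-fair monotonicity construction for the locally-bounded-fair one. Concretely, I would start by unfolding the definition of $\ngbmodels$: $\largesys \ngbmodels \varphi(A,B^{(1)})$ means that there is a bound $b \in \Nat$ such that for every $k \in \Nat$ there is a run $x \in \largesys$ with $\bfair{b}(x)$ and $(x,0,k) \not\models \varphi(A,B^{(1)})$. The key reindexing trick, exactly as in the \PromptmX monotonicity lemma, is that since this holds for \emph{all} $k$, in particular it holds when we instantiate it at $2k$: for every $k$ there is a run $x \in \largesys$ with $\bfair{b}(x)$ and $(x,0,2k) \not\models \varphi(A,B^{(1)})$.

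Next I would apply Corollary~\ref{mon-disj-cor} to each such $x$: it yields a run $y \in \largesyse{n+1}$ with $\bfair{2b}(y)$ and $x(A,B_1) \equiv_2 y(A,B_1)$. Then Corollary~\ref{cor-stutter}, applied with $d = 2$ to the $2$-stutter-equivalent words $x(A,B_1)$ and $y(A,B_1)$ together with the fact $(x,0,2k) \not\models \varphi(A,B^{(1)})$, gives $(y,0,k) \not\models \varphi(A,B^{(1)})$ (here one should note that position $0$ maps to position $0$ under $f$, since both words start their $0$th block at position $0$). Collecting this over all $k$: there is a bound $b' = 2b \in \Nat$ such that for every $k \in \Nat$ there is a run $y \in \largesyse{n+1}$ with $\bfair{b'}(y)$ and $(y,0,k) \not\models \varphi(A,B^{(1)})$, which is precisely $\largesyse{n+1} \ngbmodels \varphi(A,B^{(1)})$.

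I do not expect any serious obstacle: the corollary text already advertises that this is obtained ``using the same argument of the above proof but by using Corollary~\ref{mon-disj-cor} instead of Lemma~\ref{mon-disj}.'' The only points that require a moment's care are (i) checking that the atomic propositions of $\varphi(A,B^{(1)})$ are all over $A$ and $B_1$, so that $x(A,B_1) \equiv_2 y(A,B_1)$ and the monotone interpretation of $\varphi$ over the full runs $x$ and $y$ is determined by this projection — this is the standard convention for formulas written $\varphi(A,B^{(1)})$ and was already used in Lemma~\ref{mon-disj-PromptLTL}; and (ii) being careful that Corollary~\ref{cor-stutter} transforms a failure of $\varphi$ at bound $dk$ in $x$ into a failure at bound $k$ in $y$, which is why the initial instantiation at $2k$ rather than $k$ is essential. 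Given these, the proof is a routine three-line chain and I would present it as such.

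\begin{proof}
Assume $\largesys \ngbmodels \varphi(A, B^{(1)})$, i.e., there exists $b \in \Nat$ such that for all $k \in \Nat$ there is a run $x$ of $\largesys$ with $\bfair{b}(x)$ and $(x,0,2 \cdot k) \not\models \varphi(A, B^{(1)})$. Fix $k \in \Nat$ and let $x$ be such a run. By Corollary~\ref{mon-disj-cor} there exists a run $y$ of $\largesyse{n+1}$ with $\bfair{2b}(y)$ and $x(A, B_1) \equiv_2 y(A,B_1)$. Since $\varphi(A, B^{(1)})$ only refers to atomic propositions of $A$ and $B_1$, and position $0$ of $x(A,B_1)$ is mapped to position $0$ of $y(A,B_1)$ by the function $f$, Corollary~\ref{cor-stutter} yields $(y,0,k) \not\models \varphi(A, B^{(1)})$. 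As $k$ was arbitrary, there is a bound $b' = 2b \in \Nat$ such that for all $k \in \Nat$ there is a run $y$ of $\largesyse{n+1}$ with $\bfair{b'}(y)$ and $(y,0,k) \not\models \varphi(A, B^{(1)})$, i.e., $\largesyse{n+1} \ngbmodels \varphi(A, B^{(1)})$.
\qed
\end{proof}
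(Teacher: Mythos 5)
Your proof is correct and follows exactly the route the paper intends: the paper's own justification of the corollary is literally ``same argument as Lemma~\ref{mon-disj-PromptLTL}, using Corollary~\ref{mon-disj-cor} in place of Lemma~\ref{mon-disj}'', which is precisely what you carried out, including the instantiation at $2k$ and the application of Corollary~\ref{cor-stutter} with $d=2$. No gaps; your added remarks about the propositions being over $A,B_1$ and about $f(0)$ containing $0$ are fine, if slightly more explicit than the paper bothers to be.
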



\begin{lemma}[Bounding Lemma for \PromptmX]\label{bound-disj-ProomptLTL-lb}
Let $A$, $B$ be disjunctive process templates, $c=2|Q_B| + 1$,  $n \geq c$, and $\varphi(A, B^{(1)})$ a specification with $\varphi \in \PromptmX$. Then:
$$ \largesys \nlbmodels \varphi(A, B^{(1)}) ~\Rightarrow~ \largesyse{c} \nlbmodels \varphi(A, B^{(1)}) .$$
\end{lemma}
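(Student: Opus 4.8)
The plan is to mirror the structure of the \LTLmX bounding lemma (Lemma~\ref{bound-disj-LTL-b-fair-1}), but carefully track the bound $k$ of the prompt operator through the construction rather than just the satisfaction of a qualitative formula. First I would unfold the definition: $\largesys \nlbmodels \varphi(A,B^{(1)})$ means there is a fixed $b\in\Nat$ such that for every $k\in\Nat$ there is a locally $b$-bounded fair run $x_k$ of $\largesys$ with $(x_k,0,k)\not\models\varphi$. Since $c=2|Q_B|+1$ is the stuttering blow-up factor from Lemma~\ref{bound-disj}, the key move is: for a target bound $k$ in the small system, pick the witness run $x_{c\cdot k}$ of $\largesys$, which satisfies $(x_{c\cdot k},0,c\cdot k)\not\models\varphi$, and apply Lemma~\ref{bound-disj} to it.

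Applying Lemma~\ref{bound-disj} to $x_{c\cdot k}$ yields a run $y$ of $\largesyse{c}$ with $\lbfair{(b\cdot c)}(y,\{A,B_1\})$ and $x_{c\cdot k}(A,B_1)\equiv_{c} y(A,B_1)$. Since the specification $\varphi(A,B^{(1)})$ only refers to atomic propositions of $A$ and $B_1$, and since $c$-stutter equivalence of the $(A,B_1)$-projections lifts to $c$-stutter equivalence of the relevant labeled words, I can invoke Corollary~\ref{cor-stutter}: from $(x_{c\cdot k},0,c\cdot k)\not\models\varphi$ we get $(y,0,(c\cdot k)/c)\not\models\varphi$, i.e.\ $(y,0,k)\not\models\varphi$. (Strictly, $f(0)$ contains $0$ since the first block of both sequences starts at position $0$, so $j=0\in f(0)$ and the conclusion is about position $0$ as required.) The fairness witness $y$ is locally $(b\cdot c)$-bounded fair with the \emph{fixed} bound $b\cdot c$ that does not depend on $k$. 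Hence for the fixed $b'=b\cdot c$, for every $k$ there is a $b'$-bounded fair run $y$ of $\largesyse{c}$ with $(y,0,k)\not\models\varphi$, which is exactly $\largesyse{c}\nlbmodels\varphi(A,B^{(1)})$.

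One subtlety I would spell out: unlike the \LTLmX case, here I do \emph{not} need the ultimately-periodic counterexample lemma (Lemma~\ref{periodic-CE}) to bound the stuttering, because for \PromptmX we are not required to produce a \emph{globally} $b'$-bounded fair run — local bounded fairness for $\{A,B_1\}$ suffices, and Lemma~\ref{bound-disj} already delivers exactly that with the clean bound $b\cdot c$. So the construction with evacuation and fair extension can be used as-is, with processes potentially stuttering unboundedly in states outside the projection we care about; only the bound on consecutive stuttering steps \emph{within the local runs of $A$ and $B_1$} matters, and that is $c\cdot b$ by the argument already given in Lemma~\ref{bound-disj}.

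The main obstacle, and the step I would be most careful about, is the quantifier bookkeeping in the definition of $\nlbmodels$: the witness run depends on $k$, so one must choose the right run ($x_{c\cdot k}$, not $x_k$) \emph{before} applying the simulation, and one must check that the single bound $b'$ obtained is genuinely uniform over all $k$ — which it is, precisely because $b'=b\cdot c$ depends only on $b$ (fixed) and $|Q_B|$, not on $k$. A secondary point worth a sentence is verifying that $c$-stutter equivalence of the projection $x(A,B_1)$ implies $c$-stutter equivalence of the $2^{AP}$-words obtained by labeling each global position with the set of $AP$ that hold there, given that $AP$ only mentions $A$ and $B_1$; this is immediate since the label at a position is a function of $s(A)$ and $s(B_1)$ alone, so a common block decomposition of the projections induces one of the labeled words.
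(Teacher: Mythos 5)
Your proposal is correct and follows essentially the same route as the paper's own proof: instantiate the universally quantified bound at $c\cdot k$ (the paper's $d\cdot k$ with $d=2|Q_B|+1$), apply Lemma~\ref{bound-disj} to obtain a $c$-stutter-equivalent, locally $(b\cdot c)$-bounded fair run of $\largesyse{c}$, and conclude $(y,0,k)\not\models\varphi$ via Corollary~\ref{cor-stutter}, with the uniform bound $b'=b\cdot c$. Your added remarks (that Lemma~\ref{periodic-CE} is unnecessary here because only local bounded fairness for $\{A,B_1\}$ is needed, and that the projection induces the labeled-word equivalence) are consistent with, and slightly more explicit than, the paper's argument.
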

\begin{proof}

Assume $\largesys \nlbmodels \varphi(A, B^{(1)})$. Then there exists $b \in 
\Nat$ such that $\forall k \in \Nat$ there is a run $x$ of $\largesys$ where 
$\lbfair{b}(x,\{A, B^{(1)}\})$ and $(x,0,d \cdot k) \not \models 
\varphi(A, B^{(1)})$ with $d =(2|Q_B| + 1)$. According to Lemma \ref{bound-disj} we can construct for every such $x$ a run $y$ of $\largesyse{c}$ where $\lbfair{(d\cdot b)}(y,\{A, B^{(1)}\})$,  and $x(A, B_1) \equiv_{d} y(A,B_1)$, which guarantees that $(y, 0, k) \not \models \varphi(A, B^{(1)})$ (see Corollary \ref{cor-stutter}). Thus, there exists $b \in 
\Nat$ such that $\forall k \in \Nat$ there is a run $y$ of $\largesyse{c}$ where $\lbfair{(d\cdot b)}(y,\{A, B^{(1)}\})$ and $(y,0, k) \not \models 
\varphi(A, B^{(1)})$, thus $\largesyse{c} \nlbmodels \varphi(A, B^{(1)})$.
\qed
\end{proof}

\section{Cutoffs for Conjunctive Systems}

In this section we investigate cutoff results for conjunctive systems under 
bounded fairness and specifications in \PromptmX.
Table~\ref{tab:conj} summarizes the results of this section, as generalizations of Theorems~\ref{Theorem-conj-PROMPT-LTL-lb-fair} and \ref{Theorem-conj-PROMPT-LTL-gb-fair} to $h$-indexed specifications. Note that for results marked with a $*$ we require processes to be \emph{bounded initializing}, i.e., that every cycle in the process template contains the initial state.\footnote{This is only slightly more restrictive than the assumption that they are initializing, as stated in the definition of conjunctive systems in Section~\ref{sec:model}.}

\begin{table}[ht]
\vspace{-1em}
\caption{Cutoffs for Conjunctive Systems}
\label{tab:conj}
\scriptsize
\centering
\begin{tabular}{lccc}
\toprule
& Local Bounded Fairness~~~ & Global Bounded Fairness   \\ 
\midrule
$h$-indexed \LTLmX & $h+1$ & $h+1^*$ \\
$h$-indexed \promptmX~~~ &$h+1$& $h+1^*$ \\
\bottomrule
\end{tabular}
\vspace{-1em}
\end{table}

\subsection{Cutoffs under Local Bounded Fairness}

\begin{theorem}[Cutoff for \PromptmX with Local Bounded Fairness]\label{Theorem-conj-PROMPT-LTL-lb-fair}
Let $A, B$ be conjunctive process templates, $n \geq 2$, and $\varphi(A, B^{(1)})$ a 
specification with $\varphi \in \PromptmX$. Then:
$$\largesyse{2} \lbmodels \varphi(A, B^{(1)}) ~\iff~ \largesys \lbmodels 
\varphi(A, B^{(1)}).$$
\end{theorem}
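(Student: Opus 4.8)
The proof follows the same template as the disjunctive case: we split the equivalence into a monotonicity lemma ($\largesys \nlbmodels \varphi$ implies $\largesyse{n+1} \nlbmodels \varphi$) and a bounding lemma ($\largesys \nlbmodels \varphi$ implies $\largesyse{2} \nlbmodels \varphi$), both phrased in terms of counterexamples, i.e., families of locally $b$-bounded-fair runs $x$ with $(x,0,d\cdot k)\not\models\varphi$ for all $k$. Composing the two directions (and noting that $n\geq 2$ means the $\Leftarrow$ direction is monotonicity applied repeatedly, while $\Rightarrow$ is the bounding lemma) yields the theorem. The key technical ingredient we need is, as in Section~\ref{sec:simulation-lemmas}, constructions that simulate the local runs of $A$ and $B_1$ up to \emph{bounded} stutter equivalence while preserving local bounded fairness, so that Corollary~\ref{cor-stutter} transfers the (negated) \PromptmX property.

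For the monotonicity direction, given a run $x$ of $\largesys$ that is \lbfair{b} for $\{A,B_1\}$, I would simply add a new process $B_{n+1}$ that stays in its initial state $\init_B$ forever. Since conjunctive guards contain $\init_B$ as a neutral state, every transition enabled in $x$ remains enabled, so the projection $x(A,B_1)$ is reproduced \emph{exactly} — no interleaving fix-up is needed because the new process never moves. But then $B_{n+1}$ never moves, violating the initializing/unconditional-fairness requirement; to repair this one lets $B_{n+1}$ take the same local run as some existing $B_j$ and establishes interleaving semantics as in Lemma~\ref{mon-disj}, which costs at most a factor $2$ in stuttering and bounded fairness. This gives $x(A,B_1)\equiv_2 y(A,B_1)$ and \lbfair{2b}, and Corollary~\ref{cor-stutter} finishes it, exactly mirroring Lemma~\ref{mon-disj-PromptLTL}.

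The bounding lemma is where the real work lies, and it differs substantially from the disjunctive case: in a conjunctive system a transition guarded by $(\forall,G)$ is enabled only when \emph{every} other process is in $G$, so the danger is not ``missing a state that must appear'' but rather ``some process sitting in a state that blocks a needed transition''. The standard idea (following Emerson--Kahlon and Au{\ss}erlechner et al.) is that two $B$-copies suffice: keep $B_1$ (the process mentioned in $\varphi$) and a single ``context'' process $B_2$ that, at each moment, occupies a state that does not block whatever transition $A$ or $B_1$ is about to take — concretely, $B_2$ follows along a suitable local run chosen from the original $n-1$ other processes, switching which process it imitates and using the initializing property to pass through $\init_B$ when it must change its ``identity''. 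Each such switch, and each re-use of an original local run for interleaving repair, adds a bounded number of stuttering steps to the local runs of $A$ and $B_1$; bounding the number of switches by something like $|Q_B|$ (the distinct states that ever need to be ``covered'') gives $x(A,B_1)\equiv_{c'} y(A,B_1)$ for a constant $c'$ depending only on $|Q_B|$, and multiplies the fairness bound correspondingly, so \lbfair{(c'\cdot b)} holds. Corollary~\ref{cor-stutter} then transfers $(x,0,c'\cdot k)\not\models\varphi$ to $(y,0,k)\not\models\varphi$.

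The main obstacle I anticipate is the correctness argument for the two-process context simulation under conjunctive guards together with bounded fairness: one must show simultaneously that (i) at every moment the chosen state of $B_2$ satisfies every $(\forall,G)$-guard used by the $A$/$B_1$-transition at that moment (this is where the precise structure of which states $B_1$ and $A$ traverse, and the neutrality of $\init_B$, come in), (ii) $B_2$ itself only ever takes transitions that are enabled — i.e. $B_1$, $A$, and the soon-to-be-stationary remainder lie in the guard of $B_2$'s move — and (iii) the number of identity-switches stays bounded so that stuttering in $A,B_1$ is bounded, not merely finite. Establishing (iii) robustly is the crux, since a naive scheme could switch identities unboundedly often; the fix is to only switch when forced by a new state appearing, bounding switches by $|Q_B|$, and this is exactly the place where bounded initializing (the $*$ assumption) is \emph{not} needed for local fairness but the ordinary initializing assumption is used to route through $\init_B$. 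I would isolate this as a separate claim and prove it by tracking, for each moment, the ``demand'' state that $B_2$ must avoid blocking.
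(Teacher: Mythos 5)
Your overall skeleton—split the theorem into a monotonicity lemma and a bounding lemma for counterexample runs and transfer the negated \PromptmX property via Corollary~\ref{cor-stutter}—is exactly the paper's, but your monotonicity construction does not work under conjunctive guards. Letting $B_{n+1}$ copy the local run of some $B_j$ and re-establishing interleaving as in Lemma~\ref{mon-disj} is sound only for existential guards: in a conjunctive system, when $B_j$ takes a transition $(q,(\forall,G),q')$ with $q\notin G$ (which the definition of conjunctive templates permits), its twin is also sitting in $q$ at that moment, so the transition is disabled in the constructed sequence and the ``run'' is not a run. The paper's Lemma~\ref{mon-conj-PROMPT-LTL-lb-fair} avoids this by letting $B_{n+1}$ \emph{share} the run $x(B_i)$ of a process not mentioned in the formula: at every moment one of the two stutters in the neutral state $\init_B$ (which by definition belongs to every conjunctive guard) while the other executes $x(B_i)$, and the roles swap at each visit of $\init_B$; the initializing assumption yields infinitely many swaps, hence unconditional fairness of both, and since the total number of moves is unchanged one even gets $x(A,B_1)\equiv_1 y(A,B_1)$ and $\lbfair{b}(y,\{A,B_1\})$ — no factor $2$. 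Your first instinct (park the new process in $\init_B$) fails only because of unconditional fairness, as you observed, but the correct repair is this sharing trick, not duplication.

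For the bounding direction you miss the structural fact that makes it essentially trivial: in a conjunctive system, \emph{removing} processes can never disable a transition, because universal guards only become easier to satisfy with fewer other processes. The paper's Lemma~\ref{bound-conj-PROMPT-LTL-lb-fair} therefore simply copies the local runs of $A$ and $B_1$ (indeed it goes all the way down to $\largesyse{1}$) and deletes stuttering steps; local bounded fairness of $\{A,B_1\}$ bounds the remaining stuttering, giving $x(A,B_1)\equiv_b y(A,B_1)$, and Corollary~\ref{cor-stutter} finishes. Your ``context process'' $B_2$ with identity switching is unnecessary — if a second $B$-copy is wanted, one can retain any one original process time-aligned, which never blocks $A$ or $B_1$ since it occupies exactly the states it occupied in $x$ — and, as written, your scheme contains a genuine gap: the claim that identity switches are bounded by $|Q_B|$ is not justified, since the states $B_2$ must avoid blocking can recur arbitrarily often; you flag this as the crux but do not resolve it. So the monotonicity step is unsound as proposed and the bounding step rests on an unproven bound; both are repaired by the much simpler constructions above.
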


We prove the theorem by proving two lemmas, one for each direction of the equivalence. 
Note that $\largesys$ $\nlbmodels$ $\varphi(A, B^{(1)})$ iff $\exists b \in \Nat~ \forall k \in \Nat~ \exists x \in \largesys:~ \bfair{b}(x) \land (x,0,k) \not \models \varphi(A, B^{(1)}).$

\begin{lemma}[Monotonicity Lemma, \PromptmX with Local Bounded Fairness]\label{mon-conj-PROMPT-LTL-lb-fair}
Let $A, B$ be conjunctive process templates, $n \geq 2$, and $\varphi(A, B^{(1)})$ a 
specification with $\varphi \in \PromptmX$. Then:
$$ \largesys \nlbmodels \varphi(A, B^{(1)}) ~\Rightarrow~ \largesyse{n+1} \nlbmodels \varphi(A, B^{(1)}).$$
\end{lemma}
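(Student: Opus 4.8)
The plan is to follow the structure of the disjunctive argument (Lemma~\ref{mon-disj-PromptLTL}): reduce to a construction on single runs that adds one $B$-process while preserving the joint local run of $A$ and $B_1$ together with its local bounded fairness, and then read off the statement. By definition, $\largesys \nlbmodels \varphi(A,B^{(1)})$ means there is a bound $b$ such that for every $k$ there is a run $x$ of $\largesys$ with $\lbfair{b}(x,\{A,B_1\})$ and $(x,0,k) \not\models \varphi(A,B^{(1)})$; it then suffices to turn each such $x$ into a run $y$ of $\largesyse{n+1}$ with $\lbfair{b}(y,\{A,B_1\})$ and $(y,0,k)\not\models\varphi(A,B^{(1)})$.

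The naive idea of adding a copy of a $B$-process and repairing interleaving as in Lemma~\ref{mon-disj} fails here, because in a conjunctive system two processes sitting in the same non-initial local state can block each other under a universal guard. Instead I would use that $\init_B$ is \emph{neutral} --- it belongs to every conjunctive guard --- and let the new process park there whenever it is not needed. Concretely, since $\lbfair{b}(x,\{A,B_1\})$ entails unconditional fairness and processes are initializing, $B_n$ (which exists and differs from $B_1$ because $n \ge 2$) returns to $\init_B$ infinitely often in $x$; let $0 = \tau_0 < \tau_1 < \tau_2 < \cdots$ record these returns, which cut $x(B_n)$ into round trips $R_1, R_2, \dots$, where $R_i$ is the behaviour of $B_n$ between $\tau_{i-1}$ and $\tau_i$. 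In $y$ I keep the local runs of $A, B_1, \dots, B_{n-1}$ exactly as in $x$, and I let $B_n$ and the new process $B_{n+1}$ \emph{relay} the trajectory $x(B_n)$: during the time window of an odd-numbered round trip $R_i$ process $B_n$ carries out $R_i$ exactly as in $x$ while $B_{n+1}$ stutters in $\init_B$, and during an even-numbered round trip the roles are swapped. Thus at every moment the same process moves as in $x$, except that the moves of $B_n$ are redistributed between $B_n$ and $B_{n+1}$, and $y$ has the same length as $x$.

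Three things then have to be checked. \emph{Enabledness:} at every moment $t$ one sees $Set(y_t) = Set(x_t) \cup \{\init_B\}$, since the parked one of $B_n, B_{n+1}$ contributes only $\init_B$ and the active one contributes $x_t(B_n)$; hence every guard that held along $x$ still holds along $y$, and the hand-over transitions at the $\tau_i$, which start from $\init_B$, are enabled for the same reason --- so $y$ is a genuine run of $\largesyse{n+1}$. \emph{Fairness:} $y$ is unconditionally fair because $B_n$ moves during every odd round trip, $B_{n+1}$ during every even one, and all other processes move exactly as in $x$; and since $A$ and $B_1$ move at precisely the moments they do in $x$, $\lbfair{b}(x,\{A,B_1\})$ carries over to $\lbfair{b}(y,\{A,B_1\})$. \emph{Counterexample preserved:} $y(A) = x(A)$ and $y(B_1) = x(B_1)$, so $y(A,B_1) = x(A,B_1)$ as words over $AP$, and since $\varphi(A,B^{(1)})$ mentions only atomic propositions of $A$ and $B_1$, $(x,0,k)\not\models\varphi(A,B^{(1)})$ gives $(y,0,k)\not\models\varphi(A,B^{(1)})$ --- so here, in contrast to the disjunctive case, no appeal to Corollary~\ref{cor-stutter} and no blow-up of the promptness bound is needed.

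Collecting this, the fixed $b$ witnesses that for every $k$ there is a run $y$ of $\largesyse{n+1}$ with $\lbfair{b}(y,\{A,B_1\})$ and $(y,0,k)\not\models\varphi(A,B^{(1)})$, i.e.\ $\largesyse{n+1}\nlbmodels\varphi(A,B^{(1)})$. The step I expect to need the most care is the enabledness argument: one has to be convinced that \emph{parking} the new process in $\init_B$ (rather than making it shadow an existing one, which breaks under universal guards) is exactly what keeps all guards satisfied at every step, including the round-trip hand-overs; the rest is bookkeeping.
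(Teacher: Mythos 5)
Your proposal is correct and matches the paper's own proof: the paper likewise keeps $A,B_1,\dots,B_n$ unchanged and lets the new process ``share'' the local run of a $B$-process not mentioned in the formula by alternating with it at visits to $\init_B$, relying on the neutrality of the initial state for conjunctive guards and on $n\ge 2$. The only cosmetic difference is that the paper phrases the preservation of the counterexample via $x(A,B_1)\equiv_1 y(A,B_1)$ and Corollary~\ref{cor-stutter}, whereas you observe directly that the projection is unchanged, which amounts to the same thing.
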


\begin{proof}
Assume $\largesys$ $\nlbmodels$ $\varphi(A, B^{(1)})$. Then there exists $b 
\in \Nat$ such that $\forall k \in \Nat$ there is a run $x$ of $\largesys$ 
where $\bfair{b}(x)$ and $(x,0, k) \not \models \varphi(A, B^{(1)})$. For 
every such $x$, we construct a run $y$ of $\largesyse{n+1}$ with $\lbfair
{b}(y)$ and $(y,0,k) \not \models \varphi(A, B^{(1)})$.
Let $y(A) = x(A)$ and $y(B_j) = x(B_j)$ for all $B_j \in \{B_1,\ldots,B_n\}$ 
and let the new process $B_{n+1}$ "share" a local run $x(B_i)$ with an 
existing process $B_i$ of $\largesyse{n+1}$ in the following way: one process stutters in $init_B$ 
while the other makes transitions from $x(B_i)$, and whenever $x(B_i)$ enters 
$init_B$ the roles are reversed. Since this changes the behavior of $B_i$, 
$B_i$ cannot be a process that is mentioned in the formula, i.e. we need $n \geq 2$ for a 
formula $\varphi(A,B^{(1)})$. Then we have $\lbfair{b}(y,\{A,B_1\})$ as the 
run of $B_{n+1}$ inherits the unconditional fairness behavior from the local 
run of the process $B_i$ in $x$. Note that it is not guaranteed that the 
local runs $y(B_i)$ and $y(B_{n+1})$ are bounded fair as the time between two 
occurrences of $init_{B}$ in $x(B_i)$ is not bounded. Moreover we have 
$x(A,B_1) \equiv_{1} y(A, B_1)$, which according to Corollary~\ref{cor-stutter} implies
$(y(A, B_1), k) \not \models  \varphi(A, B^{(1)})$. 
\qed
\end{proof}  




\begin{lemma}[Bounding Lemma, $\PromptmX$, Local Bounded Fairness]\label{bound-conj-PROMPT-LTL-lb-fair}
Let $A, B$ be conjunctive process templates, $n \geq 1$, and $\varphi(A, B^{(1)})$ a 
specification with $\varphi \in \PromptmX$. Then:
$$\largesys \nlbmodels \varphi(A, B^{(1)}) ~\Rightarrow~ \largesyse{1} \nlbmodels \varphi(A, B^{(1)}).$$
\end{lemma}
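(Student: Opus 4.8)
For $n=1$ the statement is trivial ($\largesys$ and $\largesyse{1}$ coincide), so assume $n\geq 2$. The plan is to produce a counterexample run of $\largesyse{1}$ by \emph{restricting a counterexample run of $\largesys$ to the processes $A$ and $B_1$}. Concretely, unfold the hypothesis to: there is $b\in\Nat$ such that for every $k\in\Nat$ there is a run $x$ of $\largesys$ with $\lbfair{b}(x,\{A,B_1\})$ and $(x,0,k)\not\models\varphi(A,B^{(1)})$; fix such a $b$. Given $k$, from the corresponding $x$ I would build $y$ by listing, in order, the pair-state $x_0(A,B_1)$ followed by the pair-state of $(A,B_1)$ resulting from each successive move of $A$ or $B_1$ in $x$, while dropping every global step of $x$ in which some $B_i$ with $i\geq2$ moves.

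First I would check that $y$ is a run of $\largesyse{1}$. Since $\lbfair{b}(x,\{A,B_1\})$ forces $x$ to be infinite with $A$ and $B_1$ moving infinitely often, $y$ is an infinite sequence starting in $(\init_A,\init_B)$, and each of its steps is a local move of $A$ or $B_1$ that was enabled in $x$. As every guard of a conjunctive template is universal, shrinking the process set from $\{A,B_1,\dots,B_n\}$ to $\{A,B_1\}$ can only make a guard easier to satisfy, so the same move is enabled in $\largesyse{1}$; hence $y$ is an infinite, thus maximal, run of $\largesyse{1}$. It is initializing because $y(A)$ and $y(B_1)$ arise from $x(A)$ and $x(B_1)$ by deleting some stuttering steps, and each visit of $A$ (resp.\ $B_1$) to $\init_A$ (resp.\ $\init_B$) is entered by a retained move, so these visits are still infinitely frequent. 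Finally, deleting steps never increases the distance between two successive moves of a fixed process, so $y$ inherits $\lbfair{b}(y,\{A,B_1\})$ from $x$; I would therefore establish the conclusion with $b'=b$.

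It remains to control the stuttering. Put $w=x(A,B_1)$ and $w'=y$, and decompose each into its maximal constant blocks. The key point is that the steps deleted from $x$ are exactly the moves of the \emph{other} $B$-processes, which are precisely stuttering steps of $w$, whereas every self-loop of $A$ or $B_1$ is retained; consequently every block boundary of $w$ survives in $w'$, so the $i$-th block $\alpha_i^{j_i}$ of $w$ corresponds one-to-one to the $i$-th block $\alpha_i^{j'_i}$ of $w'$ with $1\le j'_i\le j_i$ (a block cannot vanish, since the transition entering it is a retained, non-stuttering move). Because $\lbfair{b}(x,\{A,B_1\})$ makes $B_1$ move at least once in every window of $b{+}1$ consecutive moments, and inside a constant block of $w$ each such move is a self-loop that is kept, one obtains an estimate $j'_i\ge j_i/(b{+}2)$; together with $j'_i\le j_i$ this shows $w$ and $w'$ are $d$-compatible for $d:=b{+}2$, a constant depending only on $b$. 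Hence $x(A,B_1)\equiv_d y$. Now apply the hypothesis with $k=d\cdot k'$: since $\varphi(A,B^{(1)})$ mentions only atomic propositions of $A$ and $B_1$, $(x,0,d k')\not\models\varphi$ is equivalent to $(x(A,B_1),0,d k')\not\models\varphi$, and Corollary~\ref{cor-stutter} (with $f$ taken with respect to these block decompositions) together with $0\in f(0)$ gives $(y,0,k')\not\models\varphi(A,B^{(1)})$. As $k'$ was arbitrary, $\largesyse{1}\nlbmodels\varphi(A,B^{(1)})$.

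The main obstacle is the block-ratio estimate: one must first notice that the projection removes only stuttering steps of $x(A,B_1)$ that are moves of the other $B$-processes, while self-loops of $A$ and $B_1$ are preserved, and then use local bounded fairness of $x$ to bound how much a constant block can shrink. It is essential that the resulting $d$ is independent of $k'$ (and of the particular $x$): only then can one pre-select $k=d\cdot k'$ while keeping $b'=b$, so that the quantifier alternation in $\nlbmodels$ works out.
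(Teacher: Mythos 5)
Your proposal is correct and follows essentially the same route as the paper's proof: project the counterexample run of \largesys onto $A,B_1$, delete the steps of the other $B$-processes (enabledness in \largesyse{1} being immediate from the universal guards), observe that local bounded fairness is preserved, and use the fairness bound to get bounded stutter equivalence with a ratio depending only on $b$, so that Corollary~\ref{cor-stutter} applied with $k=d\cdot k'$ closes the quantifier argument. The only difference is that you carry out the block-ratio estimate explicitly (obtaining $d=b+2$, and rightly stressing its independence of $k'$), where the paper simply asserts $x(A,B_1)\equiv_b y(A,B_1)$.
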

\begin{proof}
Assume $\largesys$ $\nlbmodels$ $\varphi(A, B^{(1)})$. Then there exists $b 
\in \Nat$ such that $\forall k \in \Nat$ there is a run $x$ of $\largesys$ 
where $\bfair{b}(x)$, and $(x,0,b \cdot k) \not \models \varphi(A, B^{(1)})$
. For every such $x$, we construct a run $y$ in the cutoff system $
\largesyse{1}$ by copying the local runs of processes $A$ and $B_1$ in $x$ 
and deleting stuttering steps. It is easy to see that $\bfair{b}(y)$ then we 
have $x(A,B_1) \equiv_b y(A, B_1)$, and by Corollary \ref{cor-stutter} 
$(y(A, B_1), k) \not \models  \varphi(A, B^{(1)})$. \qed
\end{proof}

Note that this is the same proof construction as in Au{\ss}erlechner et al.~\cite{AJK16}, and we simply observe that this construction preserves bounded fairness.

\subsection{Cutoffs under Global Bounded Fairness}

As mentioned before, to obtain a result that preserves global bounded fairness, we need to restrict process template $B$ to be bounded initializing.

\begin{theorem}[Cutoff for \PromptmX with Global Bounded Fairness]\label{Theorem-conj-PROMPT-LTL-gb-fair}
Let $A, B$ be conjunctive process templates, where $B$ is bounded initializing, $n \geq 2$, and $\varphi(A, B^{(1)})$ a specification with $\varphi \in \PromptmX$. Then:
$$\largesyse{2} \gbmodels \varphi(A, B^{(1)}) ~\iff~ \largesys \gbmodels \varphi(A, B^{(1)}).$$
\end{theorem}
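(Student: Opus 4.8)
The theorem, like its local-fairness counterpart (Theorem~\ref{Theorem-conj-PROMPT-LTL-lb-fair}), splits into a \emph{monotonicity lemma} ($\largesys \ngbmodels \varphi \Rightarrow \largesyse{n+1} \ngbmodels \varphi$) and a \emph{bounding lemma} ($\largesys \ngbmodels \varphi \Rightarrow \largesyse{2} \ngbmodels \varphi$), using the equivalent formulation $\largesys \ngbmodels \varphi(A,B^{(1)})$ iff $\exists b\ \forall k\ \exists x \in \largesys:\ \bfair{b}(x) \land (x,0,k)\not\models\varphi(A,B^{(1)})$. Iterating the monotonicity lemma lifts a counterexample from size $2$ to any $n \geq 2$; the bounding lemma pushes one down to size $2$. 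I would present both directions as separate lemmas and then assemble the theorem.

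\textbf{Monotonicity direction.} Given a $b$-globally-fair run $x$ of $\largesys$ with $(x,0,k)\not\models\varphi$, I would let the new process $B_{n+1}$ copy the local run $x(B_i)$ of some $B$-process $B_i$ with $i \neq 1$ (using $n \geq 2$), but \emph{not} by the "share init${}_B$" trick of Lemma~\ref{mon-conj-PROMPT-LTL-lb-fair} --- that trick gives unconditional but not \emph{bounded} fairness for $B_i$ and $B_{n+1}$. Instead I want both $B_i$ and $B_{n+1}$ to move boundedly often. This is exactly where \emph{bounded initializing} enters: since every cycle in $B$ passes through $\init_B$, and $B_i$ moves boundedly often in $x$, the local run $x(B_i)$ returns to $\init_B$ within a bounded number of $B_i$-moves, hence within a bounded number of global steps. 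So I can interleave $B_i$ and $B_{n+1}$ as follows: both copy $x(B_i)$ in alternation, each one resting in $\init_B$ (a neutral state, so all conjunctive guards of other processes are unaffected) while the other advances one "init-to-init" segment. Because these segments are of bounded length (bounded number of $B_i$-moves in $x$, each padded by at most $b$ global steps for $B_i$ to be scheduled), the construction adds only a bounded number of consecutive stuttering steps to every local run, including $A$ and $B_1$. Thus $\bfair{b'}(y)$ for a computable $b'$ depending on $b$ and the initialization bound, and $x(A,B_1)\equiv_{d} y(A,B_1)$ for a bounded $d$; Corollary~\ref{cor-stutter} then transfers the violation of $\varphi$ (after replacing $k$ by $d\cdot k$ in the "$\forall k$" up front, as in Lemma~\ref{mon-disj-PromptLTL}).

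\textbf{Bounding direction.} This is the easy one and mirrors Lemma~\ref{bound-conj-PROMPT-LTL-lb-fair}: from a $b$-globally-fair run $x$ of $\largesys$ violating $\varphi$ (with $k$ replaced by $b\cdot k$), build $y \in \largesyse{2}$ by keeping $A$ and $B_1$, keeping $B_2$ as a witness of neutrality (in conjunctive systems one can always let $B_2$ copy any $B_j$, $j\neq 1$, or simply stay in $\init_B$ since it is neutral and the transitions of $A,B_1$ stay enabled), and deleting consecutive stuttering steps of $A$ and $B_1$ down to length one. Then $x(A,B_1)\equiv_b y(A,B_1)$, $\bfair{b}(y)$, and Corollary~\ref{cor-stutter} gives $(y,0,k)\not\models\varphi(A,B^{(1)})$.

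\textbf{Main obstacle.} The delicate point is the monotonicity construction: in Lemma~\ref{mon-conj-PROMPT-LTL-lb-fair} one is content with $B_i$ and $B_{n+1}$ being only unconditionally fair, but here we must keep them \emph{globally bounded} fair, and the time between successive visits to $\init_B$ in $x(B_i)$ is the obstruction --- it is \emph{not} bounded for an arbitrary conjunctive template, which is precisely why the bounded-initializing hypothesis is needed and why the table marks this entry with~$*$. I would spend the bulk of the proof making the bound on the interleaving-segment length explicit (so that the resulting $b'$, and the stutter factor $d$ governing the shift from $d\cdot k$ to $k$ via Corollary~\ref{cor-stutter}, are both finite and depend only on $b$, $|Q_B|$, and the initialization bound), and checking that resting in $\init_B$ never disables a conjunctive transition of $A$ or $B_1$ because $\init_A,\init_B$ are neutral states by definition of conjunctive systems.
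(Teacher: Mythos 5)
Your proposal follows essentially the same route as the paper: the same split into a monotonicity and a bounding lemma, and for monotonicity the paper literally reuses the ``share $\init_B$'' construction of Lemma~\ref{mon-conj-PROMPT-LTL-lb-fair} and, exactly as you argue, observes that bounded initializing makes each init-to-init segment (and hence each resting phase) bounded, yielding $\bfair{b'}(y)$ for a $b'$ depending on $b$ and $|Q_B|$ — so your claim to be departing from that trick is cosmetic: the alternation you describe \emph{is} that trick (and in fact it adds no stuttering at all to $A$ and $B_1$, so $\equiv_1$ suffices; your conservative $\equiv_d$ is harmless). For the bounding direction the paper projects to $A,B_1$ only (a size-$1$ system) and deletes stutter, whereas you keep a third process $B_2$ copying some $B_j$, $j\neq 1$, which is a slightly tidier fit to the stated cutoff of~$2$; however, your fallback option of letting $B_2$ ``simply stay in $\init_B$'' is unsound here, since a run in which $B_2$ never moves is not globally bounded fair and therefore is not a counterexample under $\gbmodels$ — only the copy-$B_j$ variant (whose bounded fairness is inherited from $\bfair{b}(x)$) works.
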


Again, the theorem can be separated into two lemmas.

\begin{lemma}[Monotonicity Lemma, \PromptmX, Global Bounded Fairness]
\label{mon-conj-PROMPT-LTL-gb-fair}
Let $A, B$ be conjunctive process templates, where $B$ is bounded initializing, $n \geq 2$, and $\varphi(A, B^{(1)})$ a 
specification with $\varphi \in \PromptmX$. Then:
$$\largesys \ngbmodels \varphi(A, B^{(1)}) ~\Rightarrow~ \largesyse{n+1} 
\ngbmodels \varphi(A, B^{(1)}).$$
\end{lemma}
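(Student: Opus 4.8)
The plan is to reduce, via the characterization stated just before Theorem~\ref{Theorem-conj-PROMPT-LTL-lb-fair}, to a statement about bounded-fair runs: assuming $\largesys \ngbmodels \varphi(A,B^{(1)})$, there is a $b\in\Nat$ such that for every $k\in\Nat$ there is a run $x$ of $\largesys$ with $\bfair{b}(x)$ and $(x,0,b\cdot k)\not\models\varphi(A,B^{(1)})$. For each such $x$ I would construct a run $y$ of $\largesyse{n+1}$ that is globally $b'$-bounded fair for some fixed $b'$ (depending only on $b$, $|Q_B|$, and the initializing bound of $B$, not on $k$), with $x(A,B_1)\equiv_{d} y(A,B_1)$ for a fixed constant $d$; then Corollary~\ref{cor-stutter} yields $(y,0,k)\not\models\varphi(A,B^{(1)})$, and since this works for all $k$ with the same $b'$, we get $\largesyse{n+1}\ngbmodels\varphi(A,B^{(1)})$.

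The construction itself: keep $y(A)=x(A)$, $y(B_j)=x(B_j)$ for $j\le n$, and let the fresh process $B_{n+1}$ shadow some existing $B_i$ ($i\ge 2$, so that $B_1$ and the formula are untouched) using the same role-swapping trick as in Lemma~\ref{mon-conj-PROMPT-LTL-lb-fair}: one of the two processes stutters in $\init_B$ while the other executes a segment of $x(B_i)$ between consecutive visits to $\init_B$, and the roles swap every time $x(B_i)$ returns to $\init_B$. Since $B$ is \emph{bounded} initializing, every cycle in the template passes through $\init_B$, so in a $b$-bounded-fair run the gap between two successive visits of $B_i$ to $\init_B$ is bounded by a constant $\ell$ depending only on $|Q_B|$ and $b$ (a moving process must return to $\init_B$ within $|Q_B|$ of its own moves, each of which happens within $b$ global steps). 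Hence each ``waiting phase'' of the shadow construction lasts at most $\ell$ global steps, so $B_{n+1}$ and $B_i$ each move within every $\ell + b$ global steps. Finally, as in Lemma~\ref{mon-disj}, re-establishing interleaving semantics (whenever $B_i$ and $B_{n+1}$ would move simultaneously) inserts at most one stutter step per affected global transition; this at most doubles the bound for all local runs, so $\bfair{b'}(y)$ with $b' = 2(\ell+b)$, and it only inserts stutter steps, so $x(A,B_1)\equiv_{2}y(A,B_1)$ (in fact the inserted steps fall outside the projection onto $A,B_1$ only when $B_i\neq B_1$, which holds by choice of $i$, so even $\equiv_1$ is possible, but $\equiv_2$ is all that is needed).

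The main obstacle is the bound on the duration of each waiting phase. In the local-bounded-fairness version (Lemma~\ref{mon-conj-PROMPT-LTL-lb-fair}) the authors explicitly note that $y(B_i)$ and $y(B_{n+1})$ need \emph{not} be bounded fair, precisely because the time between two occurrences of $\init_B$ in $x(B_i)$ is not bounded there; global bounded fairness of $x$ together with \emph{bounded} initializing is exactly what is needed to rule this out. So the care point is to verify that, under $\bfair{b}(x)$ and the bounded-initializing hypothesis, the distance in $x$ between consecutive $\init_B$-visits of any infinitely-moving $B_i$ is uniformly bounded — and to handle processes $B_i$ that move only finitely often (such a process is eventually frozen, so from some point on no role-swapping is needed and $B_{n+1}$ can simply copy $B_i$ verbatim, or one can instead pick a different witness process $B_i$ that moves infinitely often, which exists because $\bfair{b}(x)$ makes every process move infinitely often). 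Once the waiting phases are bounded, the rest is the routine stutter-counting already carried out in Lemmas~\ref{mon-disj} and~\ref{bound-disj}.
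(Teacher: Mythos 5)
Your proposal is correct and follows essentially the same route as the paper: reduce to a family of globally $b$-bounded-fair counterexample runs, reuse the role-swapping construction of Lemma~\ref{mon-conj-PROMPT-LTL-lb-fair}, use the bounded-initializing assumption to bound each waiting phase uniformly in $k$ so that a single $b'$ works for all $k$, and conclude via Corollary~\ref{cor-stutter}. The only cosmetic differences are that your interleaving-repair step is vacuous (in the role-swapping construction $B_i$ and $B_{n+1}$ never move simultaneously, so in fact $x(A,B_1)\equiv_{1}y(A,B_1)$, as the paper notes) and that your constant $b'=2(\ell+b)$ with $\ell\approx|Q_B|\cdot b$ is a more conservatively justified bound than the paper's stated $b+|Q_B|$, which is immaterial since any $b'$ independent of $k$ suffices.
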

\begin{proof}
Assume $\largesys$ $\ngbmodels$ $\varphi(A, B^{(1)})$. Then there exists $b 
\in \Nat$ such that $\forall k \in \Nat$ there is a run $x$ of $\largesys$ 
where $\bfair{b}(x)$, and $(x,0, (b + |Q_B|) \cdot k) \not \models \varphi(A, 
B^{(1)})$. For every such $x$, we construct a run $y$ of $\largesyse{n+1}$ in 
the same way we did in the proof of Lemma \ref{mon-conj-PROMPT-LTL-lb-fair}. 
Then we have $\bfair{b'}(y)$ with $b' = b + |Q_B|$ as $init_B$ is on every 
cycle of the process template $B$. Moreover we have 
$x(A,B_1) \equiv_{1} y(A, B_1)$ which according to 
Corollary~\ref{cor-stutter} implies that $(y(A, B_1), k) \not \models  
\varphi(A, B^{(1)})$. 
\qed
\end{proof}

\begin{lemma}[Bounding Lemma, \PromptmX, Global Bounded Fairness]
\label{bound-conj-PROMPT-LTL-gb-fair}
Let $A, B$ be conjunctive process templates, where $B$ is bounded initializing, $n \geq 1$, and $\varphi(A, B^{(1)})$ a 
specification with $\varphi \in \PromptmX$. Then:
$$\largesys \ngbmodels \varphi(A, B^{(1)}) ~\Rightarrow~ \largesyse{1} \ngbmodels \varphi(A, B^{(1)}).$$
\begin{proof}
Under the given assumptions, we can observe that the construction from 
Lemma~\ref{bound-conj-PROMPT-LTL-lb-fair} also preserves global bounded fairness.
\end{proof}
\end{lemma}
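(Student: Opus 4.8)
The plan is to reuse verbatim the construction from the proof of Lemma~\ref{bound-conj-PROMPT-LTL-lb-fair} and to observe that, in the cutoff system $\largesyse{1}$, it additionally preserves \emph{global} bounded fairness. Concretely, I would first unfold the hypothesis: $\largesys \ngbmodels \varphi(A,B^{(1)})$ means that there is a fixed $b\in\Nat$ such that for every $k\in\Nat$ there is a run $x$ of $\largesys$ with $\bfair{b}(x)$ and $(x,0,b\cdot k)\not\models\varphi(A,B^{(1)})$ (here I instantiate the inner universal bound in the definition of $\ngbmodels$ with the value $b\cdot k$, keeping $b$ fixed across all $k$). For each such $x$ I would build $y$ exactly as in Lemma~\ref{bound-conj-PROMPT-LTL-lb-fair}: keep the local runs $x(A)$ and $x(B_1)$ and delete from the global run precisely those steps in which neither $A$ nor $B_1$ moves.

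Next I would check, as in Lemma~\ref{bound-conj-PROMPT-LTL-lb-fair}, that $y$ is a legal run of $\largesyse{1}$: every surviving step is a move of $A$ or of $B_1$; since all guards in a conjunctive template are of the form $(\forall,\guardstates)$, deleting the processes $B_2,\dots,B_n$ only removes conjuncts from the enabledness conditions, so every transition of $A$ or $B_1$ enabled in $x$ remains enabled in $y$; and $y$ is initializing because $x(B_1)$ is. Since $\bfair{b}(x)$ bounds each maximal stuttering block of $A$ (and of $B_1$) in $x$ by $b$, and deletion can only shorten such blocks, we obtain $x(A,B_1)\equiv_b y(A,B_1)$. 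The only new point is global bounded fairness of $y$, and the key observation is that $\largesyse{1}$ contains exactly the two processes $A$ and $B_1$, so $\bfair{b}(y)$ is literally the same statement as ``$y$ is $b$-bounded fair for $\{A,B_1\}$'' — which the construction already delivers: between two consecutive $A$-moves in $x$ there are at most $b$ steps (by $\bfair{b}(x)$), hence at most $b$ of them are $B_1$-moves, and all other steps between them are deleted, so in $y$ there are at most $b$ steps between consecutive $A$-moves, and symmetrically for $B_1$. Finally, Corollary~\ref{cor-stutter} applied to $x(A,B_1)\equiv_b y(A,B_1)$ and $(x,0,b\cdot k)\not\models\varphi(A,B^{(1)})$ yields $(y,0,k)\not\models\varphi(A,B^{(1)})$ (using $0\in f(0)$); since $b$ is fixed and this works for every $k$, we conclude $\largesyse{1}\ngbmodels\varphi(A,B^{(1)})$.

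I do not expect a genuine obstacle here. The entire content is recognizing that ``locally bounded fair for $\{A,B_1\}$'' and ``globally bounded fair'' coincide once we are down to a two-process system, so the bounded-initializing hypothesis — which is needed only for the matching monotonicity lemma (Lemma~\ref{mon-conj-PROMPT-LTL-gb-fair}), where a fresh process shares a local run with some $B_i$ and must be prevented from stuttering unboundedly in $\init_B$ — is in fact not used in this direction. The one thing to be mildly careful about is the off-by-one bookkeeping in the block-length bound, which, exactly as in Lemma~\ref{bound-conj-PROMPT-LTL-lb-fair}, is absorbed into the constant $b$.
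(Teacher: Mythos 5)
Your proposal is correct and follows exactly the paper's route: the paper's proof is the one-line observation that the construction of Lemma~\ref{bound-conj-PROMPT-LTL-lb-fair} (keep the local runs of $A$ and $B_1$, delete the other processes' steps) also preserves global bounded fairness, which you spell out via the fact that in the two-process system $\largesyse{1}$ global and local ($\{A,B_1\}$) bounded fairness coincide. Your side remark that the bounded-initializing hypothesis is only needed for the companion monotonicity lemma, not for this bounding direction, is also accurate.
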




%


\section{Token Passing Systems}
\label{sec:TPS}

In this section, we first introduce a system model for token passing systems and then show how to obtain cutoff results for this class of systems.

\subsection{System Model}

\smartpar{Processes.} A \emph{token passing process} is a transition system $T=(\qt, \initt, \sigt, \delta )$ where
\begin{itemize}
	\item $\qt = \overline{\qt} \times \{0,1\}$ is a finite set of states. $\overline{\qt}$ is a finite non-empty set. The boolean component $\{0,1\}$ indicates the possession of the token.
	\item $\initt$ is the set of initial states with $\initt \cap (\overline{\qt} \times \{0\}) \neq \emptyset$ and $\initt \cap (\overline{\qt} \times \{1\}) \neq \emptyset$.
	\item $\Sigma_T = \{\epsilon, rcv, snd\}$ is the set of actions, where $\epsilon$ is an asynchronous action, and $\{rcv, snd\}$ are the actions to receive and send the token.
	\item $\deltat{T} = \qt \times \Sigma_T \times \qt$ is a transition relation, such that $((q,b), a, (q',b')) \in \deltat{T}$ iff all of the following hold:
	\begin{itemize}
	\item $a = \epsilon ~\Rightarrow~ b = b'$.
	\item $a=snd ~\Rightarrow~ b = 1$ and $b' = 0$
	\item $a=rcv ~\Rightarrow~ b = 0$ and $b' = 1$
	\end{itemize}
	
\end{itemize}
 
\smartpar{Token Passing System.} Let $G = (V,E)$ be a finite directed graph without self loops where $V=\{1,\ldots, n\}$ is the set of vertices, and $E \subseteq V \times V$ is the set of edges. A \emph{token passing system} $T^{n}_G$ is a concurrent system containing $n$ instances of process $T$ where the only synchronization between the processes is the sending/receiving of a token according to the graph $G$. Formally, $T^{n}_G = (S,init_S, \Delta)$ with:
\begin{itemize}
\item $S = (\qt)^n$.
\item $init_S = \{ s \in (\initt)^n$ such that exactly one process holds the token $\}$,
\item $\Delta \subseteq S \times S$ such that $((q_1,\ldots,q_n),(q'_1,\ldots, q'_n)) \in \Delta$ iff:
\begin{itemize}
\item \textbf{Asynchronous Transition}. $\exists i \in V$ such that $(q_i,\epsilon,q'_i) \in \deltat{T_i}$, and $\forall j \neq i$ we have $q_j = q'_j$.
\item \textbf{Synchronous Transition}. $\exists (i,j) \in E$ such that $(q_i,snd,q'_i) \in \deltat{T_i}$, $(q_j,rcv,q'_j) \in \deltat{T_j}$, and $\forall z \in V \setminus \{i,j\}$ we have $q_z = q'_z$.
\end{itemize}
\end{itemize}

\smartpar{Runs.} A \emph{configuration} of a system $T^{n}_G$ is a tuple $(s,ac)$ where $s \in S$, and either $ac = a_i$ with $a \in \sigt$, and $i \in V$ is a process index, or $ac = (snd_i, rcv_j)$ where $i,j \in V$ are two process indices with $i \neq j$. A run is an infinite sequence of configurations $x = (s_0,ac_0)(s_1,ac_1) \ldots$ where $s_0 \in init_S$ and $s_{i+1}$ results from executing action $ac_i$ in $s_i$. Additionally we denote by $x(i,\ldots,j)$ the projection $(s_0(i,\ldots,j),ac_0(i,\ldots,j))(s_1(i,\ldots,j),ac_1(i,\ldots,j)) \ldots$ where $s_e(i,\ldots,j)$ is the projection of $s_e$ on the local states of $(T_i,\ldots,T_j)$ and \\
$ ac(i,\ldots,j) =
\left\{
	\begin{array}{ll}
		\bot  & \mbox{if } ac = a_m \mbox{ and } m \not \in \{i,\ldots,j\} \\
		\bot  & \mbox{if } ac = (snd_m,rcv_n) \mbox{ and } m,n \not \in \{i,\ldots,j\} \\
		ac & otherwise
	\end{array}
\right. 
$\\
\smartpar{Bounded Fairness.} A run $x$ of a token passing system $T^{n}_G$ is $\bfair{b}(x)$ if for every moment $m$ and every process $T_i$, $T_i$ receives the token at least once between moments $m$ and $m+b$.

\smartpar{Cutoffs for Complex Networks.}
In the presence of different network topologies, represented by the graph $G$,
we define a cutoff to be 
a bound on the size of $G$ that is sufficient to decide 
the PMCP.
Note that, in order to obtain a decision procedure for the PMCP, we not only 
need to know the size of the graphs, but also which graphs of this size we need to investigate. This is straightforward if the graph always 
falls into a simple class, such as rings, cliques, or stars, but is more 
challenging if the graph can become more complex with increasing size.

\subsection{Cutoff Results for Token Passing Systems}

Table~\ref{tab:TPS} summarizes the results of this section, generalizing 
Theorem~\ref{th-token-PromptLTL} to the case of $h$-indexed specifications. Similar to previous sections, the specifications are over states of processes. The results for local bounded fairness follow from the results for global bounded fairness. 

To prove the results of this section, we need some additional definitions.

\begin{table}[h]
\vspace{-1em}
\caption{Cutoff Results for Token Passing Systems}
\label{tab:TPS}
\scriptsize
\centering
\begin{tabular}{lccc}
\toprule
& Local Bounded Fairness~~~ & Global Bounded Fairness~~~   \\ 
\midrule
$h$-indexed \LTLmX & $2h$ & $2h$ \\
$h$-indexed \promptmX~~~ & $2h$ & $2h$ \\
\bottomrule
\end{tabular}
\vspace{-1em}
\end{table}

\smartpar{Connectivity vector \cite{Clarke04c}.}
Given two indices $i,j \in V$ in a finite directed graph~$G$, we define the connectivity vector $v(G,i,j) = (u_1,u_2,u_3,u_4,u_5,u_6)$ as follows:
\begin{itemize}
\item $u_1 = 1$ if there is a non-empty path from $i$ to $i$ that does not contain $j$. $u_1 = 0$ otherwise.
\item $u_2 = 1$ if there is a path from $i$ to $j$ via vertices different from $i$ and $j$. $u_2 = 0$ otherwise.
\item $u_3 = 1$ if there is a direct edge from $i$ to $j$. $u_3 = 0$ otherwise.
\item $u_4,u_5,u_6$ are defined like $u_1,u_2,u_3$, respectively where $i$ is replaced by $j$ and vice versa.
\end{itemize}


\smartpar{Immediately Sends.} Given a token passing process $T$, we fix two 
local states $q^{snd}$ and $q^{rcv}$, such that there is (i) a local path 
$q^\init, \ldots, q^{rcv}$ where $q^\init \in \initt \cap (\overline{\qt} \times \{0\})$, 
(ii) a local path $q^{rcv}, \ldots, q^{snd}$ that starts with a receive 
action, and (iii) a local path $q^{snd},\ldots, q^{rcv}$ that starts with a 
send action. 

When constructing a local run for a process $T_i$ that is currently in local state 
$q^{rcv}$, we say that $T_i$ \emph{immediately sends the token} if and only if:
\begin{enumerate}
\item $T_i$ executes consecutively all the actions on a simple path $q^{rcv}, 
\ldots, q^{snd}$, then sends the token, and then executes consecutively all 
the actions on a simple path $q^{snd},\ldots, q^{rcv}$.
\item All other processes remain idle until $T_i$ reaches $q^{rcv}$.
\end{enumerate}
Note that, when $T_i$ \emph{immediately} sends the token, it executes at most 
$|\qt|$ actions, since the two paths cannot share any states except $q^{rcv}$ and $q^{snd}$.

\begin{theorem}[Cutoff for \PromptmX]\label{th-token-PromptLTL}
Let $T^{n}_G$ be a token-passing system, $g,h \in V$, and $\varphi(T_g, T_h)$ a specification with $\varphi \in \PromptmX$. Then there exists a system $T^{4}_{G'}$ with $G'=(V',E')$ and $i,j \in V'$ such that $v(G,g,h) = v(G',i,j)$, and

\begin{center}
$T^{n}_G$ $\gbmodels$ $\varphi(T_g, T_h) \iff T^{4}_{G'}$ $\gbmodels$ $\varphi(T_i, T_j)$.
\end{center}

\end{theorem}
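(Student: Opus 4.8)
The plan is to reduce the PMCP for token-passing systems with \PromptmX specifications to the known cutoff result for \LTLmX specifications by combining a bounded-fairness-aware version of the Clarke et al.~\cite{Clarke04c} connectivity-vector argument with the bounded-stutter machinery from Section~2. First I would recall that, by the definition of $\gbmodels$, it suffices to show: for every $b$ there is a $b'$ (and the converse) such that every globally $b$-bounded fair run of $T^n_G$ that violates $\varphi$ for all promptness bounds $k$ can be simulated, up to $d$-stutter equivalence on the projection $x(T_g,T_h)$ for a fixed constant $d$, by a globally $b'$-bounded fair run of $T^4_{G'}$ violating $\varphi$ for all $k$, and vice versa. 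Corollary~\ref{cor-stutter} then transfers non-satisfaction of \PromptmX formulas across the simulation (scaling $k$ by $1/d$), exactly as in the bounding/monotonicity lemmas for disjunctive systems.

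The core construction is the standard one: since $v(G,g,h)=v(G',i,j)$, the graph $G'$ on four vertices $\{i,j\}\cup\{$two auxiliary vertices$\}$ reproduces all the qualitative connectivity patterns between $g$ and $h$ (self-loops avoiding the other, paths through intermediate vertices, direct edges). A run of $T^n_G$ is projected onto the token's itinerary: the token visits $T_g$ and $T_h$ in some order, interspersed with ``excursions'' through other processes, and between a departure and the next arrival each of $T_g,T_h$ runs autonomously on its $\epsilon$-actions. In $T^4_{G'}$ we let $T_i$ copy $T_g$, $T_j$ copy $T_h$, and route the token along $G'$ so that it arrives at $i$ and $j$ at the corresponding moments; every excursion through the (possibly many) other processes of $T^n_G$ is collapsed into a bounded-length excursion through the one or two auxiliary vertices of $G'$, using the \emph{immediately sends the token} primitive defined just before the theorem (at most $|\qt|$ actions per auxiliary hop). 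Collapsing long excursions into short ones, and conversely padding short excursions out, is where stuttering is introduced/removed in the local runs of $T_i$ and $T_j$; the key quantitative point is that under global $b$-bounded fairness the token returns to every process within $b$ steps, so in the original run the excursions between two visits to $\{T_g,T_h\}$ already have bounded length, hence the number of stuttering steps added to $T_g$ or $T_h$ in either direction is bounded by a constant depending only on $b$ and $|\qt|$. This yields $x(T_g,T_h)\equiv_d x'(T_i,T_j)$ for a fixed $d$ and $\lbfair{}$/$\gbfair{}$ with a new bound $b'$ that is a function of $b$ and $|\qt|$, and symmetrically for the converse direction, where a $4$-process run is blown up to an $n$-process run by inserting the additional processes and letting the token idle-pass through them.

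I would organize the write-up as: (1) fix a run $x$ of $T^n_G$ and describe its decomposition into token-visit segments and the induced local behaviors of $T_g,T_h$; (2) build $G'$ and the run $x'$ of $T^4_{G'}$ segment by segment, using \emph{immediately sends} for auxiliary hops and copying $T_g,T_h$ verbatim onto $T_i,T_j$ between their token visits; (3) verify that $x'$ is a legal run (every synchronous transition respects an edge of $G'$ — this is exactly where $v(G,g,h)=v(G',i,j)$ is used — and every $\epsilon$-transition is inherited); (4) bound the stuttering and the fairness bound $b'$; (5) invoke Corollary~\ref{cor-stutter} to conclude $x'\not\models\varphi(T_i,T_j)$ whenever $x\not\models\varphi(T_g,T_h)$ for the appropriately scaled promptness bound; (6) do the symmetric blow-up direction. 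The main obstacle I anticipate is step (4) for global bounded fairness on the \emph{large} side of the reduction: when we blow a $4$-process run up to $n$ processes, the $n-4$ extra processes must also receive the token within bound $b'$, which forces the token to make frequent short detours through all of them, and these detours in turn add stuttering to $T_i/T_j$'s copies — so one has to interleave the fairness-maintenance detours carefully with the simulation so that the added stuttering stays bounded by a constant independent of $n$. Handling this uniformly (rather than with an $n$-dependent bound, cf. the Remark after Lemma~\ref{bound-disj-LTL-b-fair-1}) is the delicate part; if an $n$-dependent $b'$ were acceptable it would be routine, but since we want a genuine cutoff the bound on the simulating side must be uniform, which I would achieve by spacing the mandatory detours so each process is served exactly once per ``round'' and each round has length bounded in terms of $|\qt|$ only.
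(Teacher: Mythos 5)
Your large-to-small direction is essentially the paper's bounding lemma (Lemma~\ref{lemma-token-bound}): keep the local runs of $T_g,T_h$ on $T_i,T_j$, collapse every token excursion through the remaining processes into a short hop through one of two auxiliary processes that ``immediately send'' the token (at most $|\qt|$ steps, the choice of auxiliary being dictated by the connectivity vector), use global $b$-bounded fairness to bound the inserted stuttering, and conclude with Corollary~\ref{cor-stutter}. That part matches the paper both in structure and in the quantitative bookkeeping.

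The gap is in your converse (small-to-large) direction. First, the uniformity you declare to be essential is not actually required: in $T^{n}_G \ngbmodels \varphi(T_g,T_h)$ the fairness bound is existentially quantified, the promptness bound $k$ is universally quantified, and $n$ is fixed, so both the new fairness bound $b'$ and the stutter factor $d$ are allowed to depend on $n$ — the paper's own bounds do (e.g.\ $b'=b+(b-n+2)\cdot|\qt|$ in Lemma~\ref{lemma-token-mono}). Worse, the concrete device you propose to achieve $n$-independence cannot work: a ``round'' in which every one of the extra processes receives the token once needs at least as many token transfers as there are extra processes, so its length cannot be bounded in terms of $|\qt|$ alone. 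Second, and more substantively, blowing a $4$-process run up to a run of $T^{n}_G$ for the \emph{original} graph $G$ forces you to invent, from scratch, a token itinerary in $G$ that visits every vertex within a bounded window while reproducing the visits to $i,j$ at $g,h$; the connectivity vector $v(G,g,h)$ only constrains paths between $g$ and $h$ and gives no handle on how (or whether) the token can tour the other $n-2$ vertices, and your proposal supplies no construction for this. The paper never faces that problem: its monotonicity lemma (Lemma~\ref{lemma-token-mono}) goes from $m$ to $m+1$ processes, always \emph{starting} from a globally bounded fair counterexample run of the given system and modifying the graph by re-routing the out-edges of a vertex $a\notin\{g,h\}$ through the new vertex, which then always immediately forwards the token; the token itinerary and the fairness of the new process are thus inherited from the given run rather than constructed. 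To repair your version you would either need an additional assumption (or argument) guaranteeing that $G$ admits bounded token tours compatible with the projected behavior of $T_i,T_j$, or you should restructure this direction along the paper's step-wise, graph-modifying lines.
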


We prove the theorem by proving two lemmas, one for each direction of the equivalence. Note that $T^{n}_G$ $\ngbmodels$ $\varphi(T_g, T_h)$ iff $\exists b \in \Nat~ \forall k \in \Nat~ \exists x \in T^{n}_G:~ \bfair{b}(x) \land (x,0,k) \not \models \varphi(T_g, T_h).$\\

\begin{lemma}[Monotonicity Lemma]\label{lemma-token-mono}
Let $T^{n}_G$ be a token-passing system with $n \geq 3$ and $g,h \in V$, and $\varphi(T_g, T_h)$ a specification with $\varphi \in \PromptmX$. Then there exists a system $T^{n+1}_{G'}$ with $G'=(V',E')$ and $i,j \in V'$ such that $v(G,g,h) = v(G',i,j)$ and 
$$T^{n}_G \ngbmodels \varphi(T_g, T_h)  ~\Rightarrow~ T^{n+1}_{G'} \ngbmodels \varphi(T_i, T_j).$$
\end{lemma}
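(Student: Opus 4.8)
The plan is to mimic the monotonicity construction for guarded protocols (Lemma~\ref{mon-disj}): we will add one new process $T_{n+1}$ to the system that ``shares'' its behaviour with an existing process, so that from the point of view of the two observed processes $T_g,T_h$ the new run is stutter-equivalent (in fact, $1$-stutter equivalent up to the extra asynchronous steps) to the old one, while the bound on fairness only grows by a controlled amount. Concretely, assume $T^n_G \ngbmodels \varphi(T_g,T_h)$, so there is $b$ such that for every $k$ there is a $b$-bounded-fair run $x$ of $T^n_G$ with $(x,0,k')\not\models\varphi(T_g,T_h)$ for a suitably inflated $k' = c\cdot k$. First I would fix such an $x$, pick an existing vertex $p\in V\setminus\{g,h\}$ (available since $n\ge 3$), and define $G'$ by adding a vertex $n+1$ together with the same incoming/outgoing token edges as $p$ (i.e., $n+1$ becomes a ``twin'' of $p$ in the token graph). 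Crucially, adding $n+1$ as a twin of a vertex different from $g$ and $h$ does not change the connectivity vector between the images of $g$ and $h$: any path witnessing a component of $v(G,g,h)$ still exists in $G'$, and conversely any path in $G'$ using $n+1$ can be rerouted through $p$ (or shown not to create a path that was absent, since $n+1$ has exactly the same neighbourhood as $p$). This gives the required $v(G,g,h) = v(G',i,j)$ with $i,j$ the images of $g,h$.

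Next I would construct the run $y$ of $T^{n+1}_{G'}$. The idea is that whenever the token would be sent to $p$ in $x$, in $y$ we let it go to $n+1$ instead on alternate visits, or more simply we let $T_{n+1}$ sit idle in an initial state $q^\init\in\initt\cap(\overline{\qt}\times\{0\})$ and never receive the token at all on the finite-prefix part, then realize proper fairness for it in the periodic part by having it alternate with $p$ in receiving the token. Since the only synchronization is token passing and $T_{n+1}$ has the same neighbourhood as $p$, every synchronous transition that was enabled in $x$ remains enabled in $y$ (the token can still travel along the edges it travelled before, and the new process never blocks anyone because it can stay in a token-free initial state indefinitely). The projection $y(i,j)$ then equals $x(g,h)$ up to inserting finitely many (per window) asynchronous stuttering steps caused by interleaving $T_{n+1}$'s actions between global steps, so $x(g,h)\equiv_d y(i,j)$ for a small constant $d$. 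Applying Corollary~\ref{cor-stutter} yields $(y,0,k)\not\models\varphi(T_i,T_j)$, and tracking the stuttering blow-up shows $\bfair{b'}(y)$ for some $b'$ depending only on $b$ and $|\qt|$ (via the ``immediately sends'' bound of $|\qt|$ actions). Hence $T^{n+1}_{G'}\ngbmodels\varphi(T_i,T_j)$.

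The main obstacle I anticipate is fairness of the new process $T_{n+1}$ under \emph{global} bounded fairness: unlike in the guarded-protocol conjunctive case where a process can stutter in $\init_B$ forever, here every process must receive the token within every window of $b'$ steps, so $T_{n+1}$ cannot simply idle. The fix is the ``alternation'' trick — have $T_{n+1}$ and $T_p$ take turns receiving the token, using the fixed states $q^{snd},q^{rcv}$ and the ``immediately sends'' schedule so that whenever $p$ receives the token in $x$, in $y$ either $p$ or $n+1$ receives it and immediately (within $\le|\qt|$ steps) passes it to the other, who then immediately sends it on. Because $x$ is $b$-bounded fair, $p$ receives the token every $\le b$ steps, so in $y$ both $p$ and $n+1$ receive it every $\le b + O(|\qt|)$ steps; all other processes' windows grow by at most the same additive term. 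Making this schedule precise while checking that it respects the edge set of $G'$ and does not disturb the projection onto $T_i,T_j$ beyond bounded stuttering is the technical heart of the argument; everything else (connectivity-vector preservation, invocation of Corollary~\ref{cor-stutter}) is routine.
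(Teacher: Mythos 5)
Your overall strategy is the same as the paper's: keep the observed local runs up to bounded stuttering, add a relay process built from the fixed states $q^{rcv},q^{snd}$ that immediately sends the token (at most $|\qt|$ steps per visit), and conclude with Corollary~\ref{cor-stutter} together with an accounting of the fairness blow-up. The difference lies in the graph modification and the token routing, and this is where your write-up has a concrete problem. The paper picks $a\notin\{g,h\}$ and sets $E' = \bigl(E \cup \{(n+1,m)\mid (a,m)\in E\}\cup\{(a,n+1)\}\bigr)\setminus\{(a,m)\mid (a,m)\in E\}$, i.e., \emph{every} send of $T_a$ is rerouted through $T_{n+1}$. Then $T_{n+1}$ receives the token whenever $T_a$ sends it, so global bounded fairness of $x$ transfers to $T_{n+1}$ automatically, and no local run of an existing process has to be altered. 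Your construction instead makes $n+1$ a twin of $p$ with the same in/out edges, and your stated fix for $T_{n+1}$'s fairness is that whichever of $T_p,T_{n+1}$ receives the token ``immediately passes it to the other''. That handoff requires an edge between $p$ and $n+1$, which your twin construction does not provide: $G$ has no self-loops, so there is no edge $(p,p)$ to copy, and hence no $(p,n+1)$ or $(n+1,p)$ in your $G'$. As written, the central step that is supposed to resolve the obstacle you correctly identified fails.

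The argument is repairable in two ways: either use your other variant (on alternate visits the sender sends to $n+1$ instead of $p$, so each of $p,n+1$ receives the token roughly every $2b$ steps plus relay overhead), or explicitly add $(p,n+1)$ and $(n+1,p)$ to $E'$ and re-check that this does not change $v(G',i,j)$. In either case you must also commit to rewriting $T_p$'s local run (e.g., turn both $p$ and $n+1$ into pure relays on the $q^{rcv}/q^{snd}$ gadget): once a single token visit is diverted to $n+1$, the original local run $x(p)$ can no longer be followed, so your claim that all synchronous transitions of $x$ remain enabled needs this adjustment -- your text glosses over it. Finally, drop the variant in which $T_{n+1}$ idles during a ``finite-prefix part'' and becomes fair only in a ``periodic part'': the Prompt-LTL counterexamples here are not ultimately periodic (cf.\ Appendix~\ref{app:absence}), and an unboundedly long idle prefix would violate global bounded fairness in any case, as you yourself note later. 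With one of these repairs, your construction yields bounds of the same shape as the paper's ($d = O(|\qt|)$ and a $b'$ depending only on $b$, $n$, $|\qt|$), and the remainder (connectivity-vector preservation, invocation of Corollary~\ref{cor-stutter}) is sound.
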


\begin{proof}
Let $a$ be a vertex of $G$ with $a \not \in \{g,h\}$. Then we construct $G'$ from $G$ as follows: Let $V' = V \cup \{n+1\}$, and $E' = (E \cup \{(n+1, m) | (a,m) \in E $ for some $m \in V \} \cup \{(a,n+1)\}) \setminus \{(a, m) | (a,m) \in E $ for some $m \in V\}$, i.e. we copy all the outgoing edges of $a$ to the vertex $n+1$, and replace all the outgoing edges of $a$ by one outgoing edge to $n+1$.

Assume $T^{n}_G$ $\ngbmodels$ $\varphi(T_g, T_h)$. Then there exists $b \in 
\Nat$ such that $\forall k' \in \Nat$ there is a run $x$ of $T^{n}_G$ where 
$\bfair{b}(x)$, and $(x,0,|\qt| \cdot k') \not \models \varphi(T_g, T_h)$. Let 
$b' = b + (b - n + 2) \cdot |\qt|$, and $d = |Q_T| + 1$. We will construct 
for every such run $x$ a run $y$ of $T^{n+1}_{G'}$ where $\bfair{b'}(y)$, and 
$x(T_g, T_h) \equiv_d y(T_i,T_j)$ which guarantees that $(y, 0, k') \not 
\models \varphi(T_i,T_j)$ (see Corollary \ref{cor-stutter}).

\smartpar{Construction.} 
The construction is such that we keep the local paths of the $n$ existing 
processes up to bounded stuttering, and we add a process $T_{n+1}$ that 
always immediately sends the token after receiving it, with $q^{rcv}, q^{snd}$
 and the corresponding paths as defined above. In the following, as a short-hand notation, if $s=(q_1,\ldots,q_n)$ is a global state of $T^n_G$ and $q \in \qt$, we write $(s,q)$ for $(q_1,\ldots,q_n,q)$.

Let $x=(s_0,ac_0)(s_1,ac_1)\ldots$ and $y'=((s_0,q^{rcv}),ac_0)((s_1,q^{rcv})
,ac_1)\ldots$. Note that $y'$ is a sequence of configurations of $T^{n+1}_{G'}$, 
but not a run. To obtain a run, first let $y''=((s_0,q^\init),\epsilon)
\ldots((s_0,q^{rcv}),ac_0)((s_1,q^{rcv}),ac_1)\ldots$.
Finally, replace every occurrence of a pair of consecutive configurations 
$((s,q^{rcv}),(snd_a,rcv_z)),$ $((s',q^{rcv}),ac')$, where $s,s' \in \qt^n, z \in V, ac' \in \Sigma$, with the sequence\\ 
$((s,q^{rcv}),(snd_a,rcv_{n+1}))\ldots((s,q^{snd}),(snd_{n+1},rcv_z))\ldots((s',q^{rcv}),ac')$. 

In other 
words, instead of sending the token to $T_z$, $T_a$ sends the token to $T_{n+1
}$, and $T_{n+1}$ sends the token immediately to $T_z$.
Furthermore, in $x$ between moments $t$ and $t + b$, $T_a$ can send the token 
at most $b-n+1$ times, and whenever $T_{n+1}$ receives the token, it takes at most $|Q_T|$ steps before reaching $q^{rcv}$ again. 
Finally, note that the number of steps $T_{n+1}$ takes to reach $q^{rcv}$ for the first time is also bounded by $|Q_T|$.
Therefore we have $\bfair{b'}(y)$ 
and $x(T_g,T_h) \equiv_{d} y(T_i,T_j)$ (as $b'\leq b \cdot d$) which by Corollary \ref{cor-stutter} implies that $(y, 0, k') \not \models \varphi(T_i, T_j)$.\qed

\end{proof}



\begin{lemma}[Bounding Lemma]\label{lemma-token-bound}
Let $T^{n}_G$ be a system with $n \geq 4$ and $g,h \in V$, and $\varphi(T_g, T_h)$ a specification with $\varphi \in \PromptmX$. Then there exists a system $T^{4}_{G'}$ with $G'=(V',E')$ and $i,j \in V'$ such that $v(G,g,h) = v(G',i,j)$ and 
$$T^{n}_G \ngbmodels \varphi(T_g, T_h) ~\Rightarrow~ T^{4}_{G'} \ngbmodels \varphi(T_i, T_j).$$
\end{lemma}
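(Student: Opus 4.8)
The plan is to prove the Bounding Lemma for token-passing systems by a construction that, starting from a $b$-bounded fair counterexample run $x$ of $T^n_G$, collapses all but four of the processes into a single "dummy" process that always immediately sends the token, while keeping the local runs of $T_g$ and $T_h$ intact up to bounded stuttering. This mirrors the classical cutoff construction of Clarke et al.\ for token-passing systems: the connectivity vector $v(G,g,h)$ captures exactly the relevant topological information, so I would first construct a graph $G'=(V',E')$ on four vertices $\{i,j,p,r\}$ (with $i,j$ playing the roles of $g,h$) realizing the same connectivity vector, routing all token traffic that in $G$ passed through vertices other than $g,h$ through the auxiliary vertices $p$ and $r$ according to whether the relevant path segment is an $i\!-\!i$ loop avoiding $j$, an $i\!\to\!j$ path avoiding both, a direct edge, or the symmetric variants.

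The core of the construction: assume $T^n_G \ngbmodels \varphi(T_g,T_h)$, so there is a bound $b$ such that for every $k$ there is a $b$-bounded fair run $x$ of $T^n_G$ with $(x,0,|\qt|\cdot k) \not\models \varphi(T_g,T_h)$. Out of each such $x$ I would build a run $y$ of $T^4_{G'}$ as follows: $y(T_i)$ copies $x(T_g)$ and $y(T_j)$ copies $x(T_h)$ (up to the stuttering that interleaving forces), and whenever in $x$ the token is passed among the $n-2$ "other" processes, in $y$ that entire sub-journey of the token is replaced by handing the token to $T_p$ or $T_r$, which \emph{immediately sends} it (at most $|\qt|$ actions) along to wherever it next needs to go — either back to $T_i$/$T_j$ or to the other auxiliary process. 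During such a segment the processes $T_i,T_j$ stay idle, exactly as the token did while travelling through the unmodelled part of $x$. Since between any two token-arrivals at a fixed process in $x$ the token changes hands at most $b$ times, and each immediate-send costs at most $|\qt|$ steps, the run $y$ is $\bfair{b'}$ for $b' = O(b\cdot|\qt|)$, and the local runs satisfy $x(T_g,T_h)\equiv_d y(T_i,T_j)$ with $d = |\qt|+1$; then Corollary~\ref{cor-stutter} gives $(y,0,k)\not\models\varphi(T_i,T_j)$, and since this works for all $k$ we conclude $T^4_{G'}\ngbmodels\varphi(T_i,T_j)$.

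I would carry this out in the order: (1) build $G'$ from the six-component connectivity vector, checking each of the finitely many cases that the required local token-routes exist in $G'$; (2) define the run transformation $x \mapsto y$ precisely, including the replacement of each "foreign" token journey by a short immediate-send detour through $T_p$/$T_r$ and the insertion of stuttering steps to restore interleaving semantics, as in Lemma~\ref{mon-disj}; (3) verify that $y$ is a legal run of $T^4_{G'}$ (every synchronous transition uses an edge of $G'$, every local step is enabled); (4) bound the stuttering introduced in $y(T_i)$ and $y(T_j)$ by $d=|\qt|+1$ and the fairness parameter by $b'$; (5) invoke Corollary~\ref{cor-stutter} and conclude. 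Together with the Monotonicity Lemma~\ref{lemma-token-mono} this yields Theorem~\ref{th-token-PromptLTL}.

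The main obstacle I anticipate is step (1)–(3): correctly reconstructing the token-routing graph $G'$ so that \emph{every} token journey appearing in $x$ — not only those touching $g$ and $h$ but also purely "foreign" loops and the interleaving of several simultaneous-in-spirit segments — can be faithfully re-routed through just two auxiliary vertices while preserving the connectivity vector and keeping the token's visit-pattern to $\{i,j\}$ in lockstep with its visit-pattern to $\{g,h\}$ in $x$. The promptness bookkeeping (the factor $|\qt|$ and the $\equiv_d$ relation) is then comparatively routine, since it is completely analogous to the disjunctive bounding lemma: the quantitative content is just that each re-routing detour has length bounded by the fixed quantity $|\qt|$, independent of $n$.
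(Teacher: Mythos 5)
Your proposal follows essentially the same route as the paper's proof: keep the local runs of $T_g,T_h$ as $T_i,T_j$, realize the connectivity vector $v(G,g,h)$ on a four-vertex graph (the paper takes auxiliary vertices $k,l$ with edges $(l,i),(k,j)$, citing Clarke et al.\ for existence), compress every token journey through the other $n-2$ processes into an immediate-send detour of length at most $|\qt|$ through one of the two auxiliary processes depending on whether the token returns to $T_i$/$T_j$ or crosses to the other, and then conclude via the $\equiv_d$ relation with $d=|\qt|+1$, a fairness bound $b'=O(b\cdot|\qt|)$, and Corollary~\ref{cor-stutter}. This matches the paper's construction and quantitative bookkeeping, so the approach is correct as proposed.
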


\begin{proof}[Proof idea, for full proof see Appendix~\ref{app:proof-token-bound}]
First, note that the existence of $G'$ and $i,j \in V'$ with $v(G,g,h) = v(G',i,j)$ follows directly from Proposition 1 in Clarke et al.~\cite{Clarke04c}. As usual, assuming that $T^{n}_G \ngbmodels \varphi(T_g, T_h)$, we need to construct counterexample runs of $T^{4}_{G'}$ for some $b' \in \Nat$ and all $k' \in \Nat$.

The construction is based on the same ideas as in the proof of Lemma~\ref{lemma-token-mono}, with the following modifications:
i) instead of keeping all local runs of a run $x \in T^{n}_G$, we only keep the local runs of $T_g$ and $T_h$ (now assigned to $T_i$ and $T_j$), ii) instead of constructing one local run for the new process, we now construct local runs for two new processes $T_k$ and $T_l$ (basically, each of them is responsible for passing the token to $T_i$ or $T_j$, respectively), and iii) the details of the construction of these runs depend on the connectivity vector $v(G,g,h)$, which essentially determines which of the new processes holds the token when neither $T_i$ nor $T_j$ have it.

As usual, the construction ensures that $y$ is globally bounded fair and that $y(T_i,T_j) \equiv_d x(T_g,T_h)$ for some $d$, which by Corollary \ref{cor-stutter} implies that $(y, 0, k') \not \models \varphi(T_i, T_j)$.
\qed
\end{proof}





\section{Conclusions}

We have investigated the behavior of concurrent systems with respect to 
promptness properties specified in \PromptmX. Our first important observation 
is that \PromptmX is not stutter insensitive, so the standard notion of 
stutter equivalence is insufficient to compare traces of concurrent systems if we are interested in 
promptness. Based on this, we have defined \emph{bounded stutter 
equivalence}, and have shown that \PromptmX is \emph{bounded stutter 
insensitive}.

We have shown how this allows us to obtain cutoff results for guarded 
protocols and token-passing systems, and have obtained cutoffs for \PromptmX (with locally or globally bounded fairness) that are the same as those that were previously shown for \LTLmX (with unbounded fairness). This implies that, for the cases where we do obtain cutoffs, the PMCP for \PromptmX has the same asymptotic complexity as the PMCP for \LTLmX.

One case that we investigated 
remains open: disjunctive systems with global bounded fairness.
In future work, we will try to solve this 
open problem, and investigate whether other cutoff results in the 
literature can also be lifted from \LTLmX to \PromptmX.

Finally, we note that together with methods for distributed synthesis from 
\PromptmX specifications, our cutoff results enable the synthesis of 
parameterized systems based on the \emph{parameterized synthesis} 
approach~\cite{Jacobs14} that has been used to solve challenging synthesis 
benchmarks by reducing them to systems with a small number of 
components~\cite{Khalimov13,BloemJK14}.

\bibliographystyle{splncs04}
\bibliography{paper}
\newpage
\appendix 
\section{Full Proof of Theorem ~\ref{theorem-stutter}}
\label{app:proof1}

Let $w, w'$ be $d$-stutter equivalent words, $\varphi$ a \PromptmX formula $\varphi$, and $f$ as defined above. Then $\forall i,k \in \Nat$:
$$ \text{if } (w,i,k) \models \varphi \text{ then } \forall j \in f(i): (w',j,d \cdot k) \models \varphi.$$

\begin{proof}

The proof works inductively over the structure of $\varphi$. Let $w_0,w_1,w_2,\ldots$ and $w'_0,w'_1,w'_2,\ldots$ be two d-compatible sequences of $w$ and $w'$. We denote by $n_i,m_i$ the number of elements inside $N_i^{w},N_i^{w'}$ respectively.\\

\textbf{Case 1: $\varphi = a$.}
$(w,i,k) \models \varphi$ $\iff$ $a \in w(i)$. By definition of $f$ we have  $\forall j \in f(i):$ $w(i) = w'(j)$, and thus $\forall j \in f(i):$ $(w', j, d\cdot k) \models \varphi$.\\

\textbf{Case 2: $\varphi = \neg a$.}
$(w,i,k) \models \varphi$ $\iff$ $a \not\in w(i)$. By definition of $f$ we have  $\forall j \in f(i):$ $w(i) = w'(j)$, and thus $\forall j \in f(i):$ $(w',j,d \cdot k) \models \varphi$.\\

\textbf{Case 3: $\varphi = \varphi_1 * \varphi_2$ with $* \in \{ \land, \lor\}$.}
$(w,i,k) \models \varphi$ $\iff$ $(w,i,k) \models \varphi_1$ $*$ $(w,i,k) \models \varphi_2$. By induction hypothesis we have: $\forall j \in f(i)$ $(w',j,d \cdot k) \models \varphi_1$ $*$ $\forall j \in f(0)$ $(w',j,d \cdot k) \models \varphi_2$ $\iff$  $(w',j,d \cdot k) \models \varphi$.\\

\textbf{Case 4: $\varphi = \F_{\p} \varphi$.}
$(w,i,k) \models \F_{\p} \varphi$ $\iff \exists e,x: i \leq e \leq i+ k,$ $e \in N_x^{w}$, and $(w,e,k) \models \varphi$ where $(\sum_{l=0}^{x-1} n_l) \leq e < (\sum_{l=0}^{x} n_l)$. Then by induction hypothesis we have: $\forall j \in f(e)$ $(w',j,d \cdot k) \models \varphi$. Let $s$ be the smallest position in $f(e)$, then $s = \sum_{l=0}^{x-1} m_l$. There exists $y \in \Nat$ s.t. $i \in N_y^w$ then $s = \sum_{l=0}^{y-1} m_l + \sum_{l=y}^{x-1} m_l$ $\leq \sum_{l=0}^{y-1} m_l + \sum_{l=y}^{x-1} n_l.d \leq \sum_{l=0}^{y-1} m_l + d.(\sum_{l=y}^{x-1} n_l) $ $\leq \sum_{l=0}^{y-1} m_l + k \cdot d$ (note that $i \in N_y^w$ and $(w,i,k) \models \F_{\p} \varphi$). As $\sum_{l=0}^{y-1} m_l$ is the smallest position in $f(i)$, then $\forall j \in f(i):$ $(w',j,d \cdot k) \models \F_{\p} \varphi$.\\

\textbf{Case 5: $\varphi = \varphi_1 \U \varphi_2$.}
$(w,i,k) \models \varphi_1 \U \varphi_2$ $\iff \exists j \geq i:$ $(w,j,k) \models \varphi_2$ and $\forall e < j:$ $(w,e,k) \models \varphi_1$. Then, by induction hypothesis we have: $\forall e < j$ $\forall l \in f(e): (w', l, d \cdot k) \models \varphi_1$ and $\forall l \in f(j): (w', l, d \cdot k) \models \varphi_2$, therefore $\forall j \in f(i):$ $(w',j,d \cdot k) \models \varphi_1 \U \varphi_2$\\

\textbf{Case 6: $\varphi = \varphi_1 \R \varphi_2$.}
$(w,i,k) \models \varphi$ then either $\forall e \geq i$ $(w, e, k) \models \varphi_2$ or $\exists e \geq i: (w, e, k) \models \varphi_1 \land \forall j \leq e$ $(w, j, k) \models \varphi_2$

\begin{itemize}

\item \textbf{Subcase:} $\forall e \geq i$ $(w, e, k) \models \varphi_2$. By induction hypothesis we have $\forall e \geq i$ $\forall j \in f(e):$ $(w',j,d \cdot k) \models \varphi_2 $ then $\forall j \in f(i):$ $(w',j,d \cdot k) \models \varphi$

\item \textbf{Subcase:} $\exists e \geq i: (w, e, k) \models \varphi_1 \land \forall j \leq e$ $(w, j, k) \models \varphi_2$. Then, by induction hypothesis , we have: $\forall l \in f(e): (w', l, d \cdot k) \models \varphi_1$ and $\forall j \leq e$ $\forall l \in f(e): (w', l, d \cdot k) \models \varphi_2$, therefore $\forall j \in f(0):$ $(w',j,d \cdot k) \models \varphi$
\end{itemize}
\qed
\end{proof}

\section{The absence of a bounding lemma under global bounded fairness.}
\label{app:absence}
The reader will notice that we have no bounding lemma, and therefore no cutoff result, for \PromptmX under global  bounded
fairness. The main reason is that the constructions we 
adopt do not allow us to determine a bound on the number of stuttering steps they 
generate. For instance, the proof of Lemma~\ref{bound-disj-LTL-b-fair-1} depends on a bound on the time after which only infinitely visited states occur. Based on the existence of an ultimately periodic counterexample $uv^{\omega}$, we can conclude that $|u|$ is sufficient as a bound. In case of \PromptmX however, this technique is not sufficient: a \Prompt counterexample consists of a fairness bound $b$ such that for all $k$ there is a non-satisfying run. Since the previously mentioned technique only produces a bound $b$ that will depend on the run for a given $k$, it cannot solve our problem. 

As an alternative approach, we tried a technique based on the algorithm for solving the model checking problem for \Prompt by Kupferman et al.~\cite{KupfermanPV09}. Their method is based on the detection of a \emph{pumpable path} in the product of a system $S$ and a specification automaton $\buchi{\varphi}$. However, when constructing a pumpable path for $\largesyse{c}$ out of a pumpable path of $\largesys$, we run into the problem that in certain cases the value of $c$ depends on $n$, and therefore no cutoff can be detected with this technique.

\section{Full Proof of Lemma~\ref{periodic-CE}}
\label{app:proof4}

A \emph{run graph} of a B\"uchi automaton $\buchi{\varphi} = (Q_A \times Q_B^n ,Q_{\buchi{\varphi}}, \delta, 
a_0, \alpha)$ on a system $\fairSys{b}{n}$ is a directed graph $\mG^{n}_b(\varphi) = (V,E)$ where:
\begin{itemize}
\item $V \subseteq (Q_A \times Q_B^n) \times \{0,\ldots,b\}^{n+1} \times Q_{\buchi{\varphi}}$
\item $(s_0,b_0,a_0) \in V$, where $b_0$ denotes that all counters are set to $0$. 
\item $((s,b,a),(s',b',a')) \in E$ iff $(s,s') \in \Delta,$
 $a' \in \delta(a, s)$, and $b'$ results from $b$ according to the rules for the counters.
\end{itemize}
An infinite path of the run graph $\pi = (s_0,b_0,a_0)(s_1,b_1,a_1)\ldots$ is an \emph{accepting path} if it starts with $(s_0,b_0,a_0)$, and visits a state $a_{\alpha} \in \alpha$ infinitely often.

Lemma \ref{periodic-CE}: Let $\varphi \in \LTL$ and $b \in \Nat$. If 
$\fairSys{b}{n} \not \models \varphi$ then there exists a run $x=uv^{\omega}$ 
of $\largesys$ with $\bfair{b}(x)$, and $x \not \models \varphi$, where $u,v$ 
are finite paths, and $|u|, |v| \leq 2 \cdot |Q_A| \cdot |Q_B|^n \cdot b^{n+1}
 \cdot |Q_{\buchi{\neg \varphi}}|$.

\begin{proof}
Assume that $\fairSys{b}{n} \not \models 
\varphi$. Then there exists an accepting path $\pi'$ in the run graph 
$\mG^{n}_b(\neg \varphi)$. We first construct out of $\pi'$ a fair path $\pi = u_{\pi}
v_{\pi}^{\omega}$, by detecting and extracting a lasso-shaped accepting path 
from $\pi'$. In $\pi'$ there exists an infix $\pi'_i \ldots \pi'_j$ where 
$\pi'_i = \pi'_j$, and there exists $\pi'_l \in \{\pi'_{i+1}, \ldots, \pi'_{j-1}\}$
with $\pi'_l(Q_{\buchi{\neg \varphi}}) \in \alpha$ (accepting state in the 
automaton). Therefore 
$\pi'_0 \ldots \pi'_{i-1}(\pi'_i \ldots \pi'_{j-1})^{\omega}$ 
is an accepting path of $\mG^{n}_b(\neg \varphi)$. 

Let $u' = \pi'_0 
\ldots \pi'_{i-1}$ and $v' = \pi'_i \ldots \pi'_{j-1}$, then we can construct 
$u_{\pi}$ and $v_{\pi}$ by detection and removal of cycles under some 
conditions: (i) let $u_{\pi}$ be a finite path obtained form $u'$ where we 
iteratively replace every infix $\pi'_s \ldots \pi'_t$ with $\pi'_s$ if 
$\pi'_s = \pi'_t$. Then, since $u_\pi$ does not contain repetitions, we have 
$u_{\pi} \leq |Q_A| \cdot |Q_B|^n \cdot b^{n+1} \cdot 
|Q_{\buchi{\neg \varphi}}|$.
(ii) let $\pi'_a \in \{\pi'_i \ldots \pi'_{j-1}\}$ where $\pi'_a(\buchi{\neg 
\varphi}) \in \alpha$ and let $v_{\pi}$ be a finite path obtained form $v'$ 
after we iteratively replace every infix $\pi'_s \ldots \pi'_t$ with $\pi'_s$ 
if $\pi'_s = \pi'_t$ and $s \geq a$ or $t < a$. Thus, we get $v_{\pi} \leq 2 \cdot 
|Q_A| \cdot |Q_B|^n \cdot b^{n+1} \cdot |Q_{\buchi{\neg \varphi}}|$.

Finally, let $x=u_{\pi}(Q_A \times Q_B^n)\left(v_{\pi}(Q_A \times Q_B^n)\right)^{\omega}$. 
By construction, $x$ is a run of $\largesys$ with $\bfair{b}(x)$ and $x \not \models \varphi$.
\qed
\end{proof}

%

\section{Full Proof of Lemma~\ref{lemma-token-bound}}
\label{app:proof-token-bound}

Let $T^{n}_G$ be a system with $n \geq 4$ and $g,h \in V$, and $\varphi(T_g, T_h)$ a specification with $\varphi \in \PromptmX$. Then there exists a system $T^{4}_{G'}$ with $G'=(V',E')$ and $i,j \in V'$ such that $v(G,g,h) = v(G',i,j)$ and 
$$T^{n}_G \ngbmodels \varphi(T_g, T_h) ~\Rightarrow~ T^{4}_{G'} \ngbmodels \varphi(T_i, T_j).$$

\begin{proof}

Let $i$, $j$, $k$, and $l$ be the processes indices in $T^{4}_{G'}$. $G'=(V',E')$ is any graph where $V'=\{i,j,k,l\}$, $v(G,g,h) = v(G',i,j)$, and  $(k,j),(l,i) \in E'$. According to \cite{Clarke04c} such graph always exists.

Assume $T^{n}_G \ngbmodels \varphi(T_g, T_h)$. Then there 
exists $b \in \Nat$ such that $\forall k' \in \Nat$ there is a run $x$ of 
$T^{n}_G$ where $\bfair{b}(x)$, and $(x,0,k' \cdot (|\qt|+1)) \not \models 
\varphi(T_g, T_h)$. Let $d = |\qt| + 1$, and $b' = 2|\qt| + b + (b-n+2) \cdot |\qt|$. We show how to construct for every such $x$ a run $y$ of $T^{4}_{G'}$ where $\bfair{b'}(y)$, $x(T_g, T_h) \equiv_{d} y(T_i, T_j)$. \\
\smartpar{Construction.}
Let $x=(s_0,ac_0)(s_1,ac_1)\ldots$ and 
\[y'=((s_0(T_g, T_h),q^{rcv},q^{rcv}),ac_0(T_g, T_h))((s_1(T_g, T_h),q^{rcv},q^{rcv}),ac_1(T_g, T_h))\ldots\]
The word $y'$ is a sequence of configurations of $T^{4}_{G'}$, where we assign the local runs of $T_g, T_h$ into the local runs of $T_i$  and $T_j$.
Note that $y'$ is not a run, hence to obtain a run, first let 
\begin{gather*}
\begin{split}
y''=((s_0(T_g, T_h),q^\init,q^\init),\epsilon)\ldots((s_0(T_g, T_h),q^{rcv},q^{rcv}),ac_0(T_g, T_h))\\
((s_1(T_g, T_h),q^{rcv},q^{rcv}),ac_1(T_g, T_h))\ldots
\end{split}   
\end{gather*}

If neither $T_g$ nor $T_h$ has the token in the initial state of $x$, then, if $T_g$ has the token first in $x$ before $T_h$, we replace the pair of consecutive configurations\\
\[((s(T_g,T_h),q^{rcv},q^{rcv}),(snd_z,rcv_i))((s'(T_g,T_h),q^{rcv},q^{rcv}),ac'(T_g,T_h))\] with 
\begin{gather*}
\begin{split}
((s(T_g,T_h),q^{rcv},q^{rcv}),\epsilon)\ldots((s(T_g,T_h),q^{snd},q^{rcv}),(snd_i,rcv_i))\\ \ldots((s'(T_g,T_h),q^{rcv},q^{rcv}),ac'(T_g,T_h))
\end{split}   
\end{gather*}
where $z \in V$.
Similarly we deal with the case where $T_h$ has the token before $T_g$.\\
Furthermore, for every occurrence of a pair of consecutive configurations
\[pair_i=((s(T_g,T_h),q^{rcv},q^{rcv}),(snd_i,rcv_z))((s'(T_g,T_h),q^{rcv},q^{rcv}),ac'(T_g,T_h))\]
where $s,s' \in \qt^n, z \in V \setminus \{j\}, ac' \in \Sigma$, then:
\begin{itemize}
\item If after $pair_i$ in $y''$ $T_i$ executes the receive action without a receive action from $T_j$ in between, then $(i,l),(l,i) \in E'$, and we replace $pair_i$ with the sequence:
\begin{gather*}
\begin{split}
((s(T_g,T_h),q^{rcv},q^{rcv}),(snd_i,rcv_l))\ldots((s(T_g,T_h),q^{snd},q^{rcv}),(snd_l,rcv_i))\\\ldots((s'(T_g,T_h),q^{rcv},q^{rcv}),ac'(T_g,T_h))
\end{split}   
\end{gather*}
Informally we let the process $T_l$ receive the token from $T_i$ and send it immediately back to $T_i$.
\item If after $pair_i$ in $y''$ $T_j$ receives the token through some other process(es) (different than $T_i$ and $T_j$), then $(i,k),(k,j) \in E'$, and we replace $pair_i$ with the sequence:
\begin{gather*}
\begin{split}
((s(T_g,T_h),q^{rcv},q^{rcv}),(snd_i,rcv_k))\ldots((s(T_g,T_h),q^{rcv},q^{snd}),(snd_k,rcv_j))\\\ldots((s'(T_g,T_h),q^{rcv},q^{rcv}),ac'(T_g,T_h))
\end{split}   
\end{gather*}
Informally we let the process $T_k$ receive the token from $T_i$ and sends \emph{immediately} back to $T_j$.
\end{itemize}
Next, we do the same for every occurrence of a pair of consecutive configurations \\
\[pair_j=((s(T_g,T_h),q^{rcv},q^{rcv}),(snd_j,rcv_z))((s'(T_g,T_h),q^{rcv},q^{rcv}),ac'(T_g,T_h))\]
where $s,s' \in \qt^n, z \in V \setminus \{i\}, ac' \in \Sigma$.

Furthermore, in $x$ between moments $t$ and $t + b$, $T_g$ and $T_h$ can send the token 
at most $b-n+2$ times, and whenever $T_l$ or $T_k$ receives the token, it takes at most $|Q_T|$ steps before reaching $q^{rcv}$ again.
Finally, note that the number of steps $T_l$ or $T_k$ takes to reach $q^{rcv}$ for the first time is also bounded by $|Q_T|$.
Therefore we have $\bfair{b'}(y)$ 
and $x(T_g,T_h) \equiv_{d} y(T_i,T_j)$ ($b'\leq b \cdot d$)  which by Corollary \ref{cor-stutter} implies that $(y, 0, k') \not \models \varphi(T_i, T_j)$.\qed
\end{proof}

\end{document}